\documentclass[11pt]{article}
\usepackage[utf8]{inputenc}
\usepackage{mathtools}
\usepackage{comment}
\usepackage{amsmath}
\usepackage{amsfonts,amsthm,boxedminipage,color,url,thmtools}
\usepackage{fullpage}
\usepackage{soul}
\usepackage{amssymb}
\usepackage[numbers]{natbib}
\usepackage[pdfstartview=FitH,colorlinks,linkcolor=blue,filecolor=blue,citecolor=blue,urlcolor=blue]{hyperref} 
\usepackage[labelfont=bf]{caption}
\usepackage{aliascnt,cleveref}
\usepackage{authblk}
\usepackage{accents}
\usepackage{tikz}
\usetikzlibrary{fit,arrows.meta}
\usepackage{pgfplots}
\pgfplotsset{width=10cm,compat=1.9}
\usepgfplotslibrary{external}
\usepackage{cleveref}
\crefname{algocf}{algorithm}{algorithms}
\Crefname{algocf}{Algorithm}{Algorithms}

\usepackage[linesnumbered,ruled,vlined]{algorithm2e}
\DontPrintSemicolon
\SetKwInput{KwInput}{Input}
\SetKwInput{KwOutput}{Output}

\newtheorem{theorem}{Theorem}
\newtheorem{lemma}{Lemma}

\newtheorem{claim}{Claim}
\newtheorem{observation}{Observation}
\newtheorem{proposition}{Proposition}
\newtheorem{definition}{Definition}

\newtheorem{remark}{Remark}

\newcommand{\eps}{\varepsilon}
\newcommand{\reals}{\mathbb{R}}

\newcommand{\E}{\mathbb{E}}

\newcommand{\M}{\mathcal{M}}
\newcommand{\marg}[2]{f_{#1}(#2)}
\newcommand{\contract}{\boldsymbol{\alpha}}

\newcommand{\safe}{safe\xspace}
\newcommand{\Safe}{Safe\xspace}

\DeclareMathOperator*{\argmax}{arg\,max}

\newcommand{\Var}{\mathrm{Var}}

\title{Online Multi-Agent Contracts\thanks{This project has been partially funded by the European Research Council (ERC) under the European Union's Horizon 2020 program (grant agreement No.~866132), by the European Union's Horizon Europe Program (grant agreement No.~101170373), by an Amazon Research Award, by the Israel Science Foundation Breakthrough Program (grant No.~2600/24), and by a grant from TAU Center for AI and Data Science (TAD), and by the NSF-BSF (grant number 2020788).}}

\date{\today}

\author{
Paul Dütting\thanks{Google Research, Zurich, Switzerland. Email: \texttt{duetting@google.com}} \qquad 
Michal Feldman\thanks{Tel Aviv University and Microsoft ILDC, Israel. Email: \texttt{mfeldman@tauex.tau.ac.il}} \qquad 
Yoav Gal-Tzur\thanks{Tel Aviv University, Israel. Email: \texttt{yoavgaltzur@mail.tau.ac.il}} \qquad 
Thomas Kesselheim\thanks{University of Bonn, Bonn, Germany. Email: \texttt{thomas.kesselheim@uni-bonn.de}}
}

\begin{document}

\maketitle

\begin{abstract}
    We introduce and study an online variant of the multi-agent contract model. In our model, agents arrive one-by-one and are active with a certain probability. Upon arrival of agent $i$, the principal offers a linear contract $\alpha_i$, specifying the fraction of the principal's reward transferred to agent $i$. Agents can either exert effort or not, incurring a cost if they do. The set of agents that exert effort determines the principal's expected reward through a reward function $f$. After all agents have arrived, the agents form a (pure) Nash equilibrium. As our main result we design an $O(1)$-competitive policy for submodular rewards, compared to the offline optimum. We also show that this result is tight in two ways. First, if we require that agents make decisions on the spot, then for submodular rewards 
    any policy is $\Omega((\log n)/(\log \log n)^2)$-competitive.
    Second, for the broader class of XOS (a.k.a., fractionally subadditive) rewards, any online policy is $\Omega((\log \log n)/(\log \log \log n))$-competitive.
    The latter result reveals a surprising separation between submodular and XOS rewards: unlike related settings such as offline contract design and prophet inequalities, where constant-factor guarantees for submodular rewards  extend to XOS rewards, the online contract setting separates the two classes.
\end{abstract}

\section{Introduction}

There are many situations in which a principal (e.g., a company) wishes to assemble a team of agents to work on a project, but agents arrive online, and the principal must decide whether to hire each agent immediately upon arrival, without knowledge of future arrivals. This naturally gives rise to an online model, where uncertainty is captured by a known prior over agent availability, and information about which agents are actually available is revealed only as the process unfolds.

This perspective is closely related to the prophet inequality framework, where an online principal is compared to an omniscient ``prophet" that knows the entire realization in advance and can assemble the optimal team. The central question is therefore: how well can an online principal perform relative to this prophet?

Despite this conceptual connection, our setting differs fundamentally from standard prophet inequality models. Most notably, the principal is not constrained in the number of agents that may be hired, which would render classical prophet inequality problems trivial. Instead, hiring agents is costly, and the principal's objective is to maximize the reward generated by the team's effort minus the effort-inducing payments that depend on the team's composition. 
Moreover, the principal cannot directly control agents' actions. Rather, agents behave strategically, choosing effort levels that form an equilibrium and may exhibit free-riding. This strategic component raises new modeling questions absent from classical prophet inequalities, such as when agents choose their actions and what information they possess at that point.

Our contributions are twofold. First, we introduce a model for online contract design that combines sequential hiring decisions with strategic agent behavior. Second, we show that the online principal's ability to approximate the prophet depends jointly on the structure of the reward function and the timing of the agents' decisions.

\subsection{Our Model}
In our model, $n$ agents arrive sequentially in a fixed order. Each agent $i$ is active independently with probability $p_i$. An active agent may either exert effort or not, incurring a cost of $c_i$ if it exerts effort. The set $S$ of agents that exert effort determines the principal's expected reward through a set function $f$, where $f(S)$ denotes the expected reward generated by $S$. Upon the arrival of agent $i$, the principal offers a linear contract $\alpha_i \in [0,1]$, specifying that the agent receives an $\alpha_i$ fraction of the realized reward. The principal's utility is the total reward minus payments, given by
\[
\left(1-\sum_i \alpha_i\right)f(S),
\]
and the goal is to maximize the principal's expected utility.

Agents choose their behavior strategically, and we consider two timing models. In the first, agents choose their actions only after all contract offers have been made. The resulting actions form a Nash equilibrium, which determines the principal's utility. In the second, we study policies that induce dominant strategies. Under such policies, each agent decides whether to exert effort immediately after observing the contract offered upon its arrival, without waiting to see future contract offers or agent arrivals. We refer to such policies as \emph{safe policies}.

We evaluate an online policy by the expected utility it achieves for the principal and compare it to the expected utility of an offline (prophet) benchmark. The offline benchmark is the expected utility obtained by the optimal offline contract, which has complete knowledge of the active agents, as studied by~\cite{duetting2022multi}.

Interestingly, for unit-demand reward functions, our model admits a reduction to the classical single-choice prophet inequality problem, yielding a $2$-competitive policy. This guarantee is tight via a variant of the standard hard instance for prophet inequalities. Beyond the unit-demand setting, however, the two models diverge substantially. 
Already for $2$-demand (or additive) rewards, we we establish a $9/4$ lower bound (see~\Cref{sec:similarities}), highlighting the fundamentally different challenges posed by strategic online contract design.

\subsection{Our Results}

We study online contract design for the hierarchy of complement-free reward functions~\cite{lehmann2001combinatorial}. Our first result shows that for submodular rewards, there exists an online policy that achieves a constant-factor approximation to the prophet benchmark.

\medskip\noindent
\textbf{Main Result 1} (\Cref{thm:SM_const_CR})\textbf{.}
For submodular reward functions $f$, there exists an online policy that achieves an $O(1)$-approximation to the prophet benchmark.
\medskip

Our result is tight in two different senses. First, even for submodular rewards, no constant-factor approximation is possible if we restrict attention to safe policies. Second, for the more general class of XOS reward functions, no constant-factor approximation is possible even without restricting to safe policies.

\medskip\noindent
\textbf{Main Result 2} (\Cref{thm:submodular_gap})\textbf{.} For submodular rewards $f$ (or even coverage functions), no safe policy can guarantee better than $\Omega((\log n)/(\log \log(n))^2)$ approximation to the prophet benchmark.

\medskip\noindent
\textbf{Main Result 3} (\Cref{thm:xos})\textbf{.} For XOS rewards $f$, even without restricting to safe policies, no online policy can get better than $\Omega((\log \log n)/(\log \log \log n))$ approximation to the prophet.
\medskip

Notably, this separation between submodular and XOS rewards is unique to the online setting. In the offline contracting problem, the constant-factor approximation for submodular rewards extends to XOS rewards~\cite{duetting2022multi}. Likewise, in prophet inequalities for combinatorial auctions, the same (tight) $2$-approximation is known for both submodular and XOS valuations~\cite{FGL15,dutting2020prophet}.

A natural question is how far safe policies can be pushed. Specifically, what is the largest class of reward functions for which safe policies admit a constant-factor approximation? As a first step in this direction, we show that safe policies achieve a constant-factor approximation for weighted matroid rank functions (see~\Cref{sec:WMRF}).

\subsection{Our Techniques}
For our positive result for submodular $f$, we first observe that if a single agent approximates the offline optimum in expectation, then the problem reduces to the classic single-selection prophet inequality. Thus, the problem boils down to identifying which \emph{team} of ``small'' agents to hire (i.e., offer contracts $\alpha_i>0$), and how much to pay them.

To determine which agents to hire, we employ a pricing policy. Suppose a set $T^{(i-1)}$ has already been hired. When agent $i$ arrives, we hire it if its marginal contribution $\marg{i}{T^{(i-1)}} := f(T^{(i-1)}\cup \{i\}) - f(T^{(i-1)})$ exceeds a predetermined price $q_i$, while ensuring that the total reward of the hired set remains below a prescribed threshold (for a reason that will become apparent shortly). The prices follow the same general structure as those of~\cite{duetting2022multi} and are chosen to guarantee two key properties: (i) with constant probability, the hired team achieves a constant fraction of the optimal reward, and (ii) its total payment is at most one half.

Property (i) follows from viewing our pricing rule as an ``approximate online demand query'' with respect to $f$ and $\vec{q}$. We show that the hired set always achieves at least half the utility of the corresponding demand set, or a constant fraction of the prophet benchmark. 
In the former case, we connect the utility of the demand set to the prophet benchmark via the prices $\vec{q}$, using the same approach as in \cite{duetting2022multi}.
For this approach to succeed we need to condition on a good event, in which the optimal reward generated by the small agents is at least a constant fraction of its expected value, according to which $\vec{q}$ is defined.
To ensure that the good event occurs with constant probability we carefully define the set of small agents and apply concentration bounds for an appropriate XOS function defined over the set of small agents.

Property (ii) is achieved by bounding the total payments required to incentivize an approximate demand set. 
In the offline setting, previous work~\cite{duetting2022multi,feldman2025budget} utilized the fact that it is possible to dismiss some agents in the demand set. This is impossible in the online setting with irrevocable hiring decisions.
Instead, we bound the total payment as a function of the total reward generated by the hired agents and exploit the fact that our algorithm explicitly caps this reward throughout the hiring process.

Finally, we still need to decide \emph{how much} to pay each hired agent. We cannot generally ensure that all hired agents eventually exert effort in equilibrium, but we argue that it is possible to set prices so that the effort-exerting agents achieve comparable reward.
More concretely, assume agent $i$ is hired, and thus the (current) set of hired agents is $T^{(i)} = T^{(i-1)} \cup \{i\}$. We set the contract $\alpha_i$ to be \emph{twice} the optimal payment required to incentivize $i$ if $T^{(i)}$ was the set of effort-exerting agents.
This choice of $\alpha_i$ allows for some ``slack'', and exploits the fact that we  allow for some free-riding.
Indeed, let $W$ be the eventual set of effort-exerting agents. Any hired agent $i$ which does not exert effort has $\marg{i}{W} < \frac12 \marg{i}{T^{(i-1)}}$. This implies that the set of hired-but-shirking agents generates reward at most $f(T)/2$, where $T$ is the eventual set of hired agents.

For the negative result for submodular rewards the intuition is that the restriction to safe policies introduces a ``hard'' constraint, which is not present otherwise.
Namely, after the selection of any agent, we need to ensure that the marginal contribution does not drop to zero. 
This is because a necessary condition for an agent to exert in equilibrium is having strictly positive marginal contribution to the reward.
We exploit this by setting up an instance where the principal's reward is a coverage function on a laminar set system. 
Each agent corresponds to a set and having accepted an agent one cannot accept the agents corresponding to the supersets anymore. 
This leads to a tradeoff where the principal has to decide between accepting many small agents that arrive early, and few high-value agents that arrive later.
Note that, while a general policy can partially ``undo'' early decisions by tolerating some degree of free-riding, the laminar structure forces safe policies into tradeoffs that result in a super-constant loss.
The laminar set system we use is closely related to negative results on prophet inequalities and secretary problems for interval selection \cite{gobel2014online,im2011secretary,chawla2019pricing}.

For XOS, we draw inspiration from the construction that there is no constant-competitive prophet inequalities for the selection problem on general downward-closed set systems \cite{babaioff2007matroids,rubinstein2016beyond}.
There are $\sqrt{n}$ disjoint groups of $\sqrt{n}$ agents each. The function $f$ is defined to be the size of the largest number of selected agents belonging to one group.
As in the balls-into-bins problem, the largest number of active agents in a group is $\Theta((\log n) / (\log \log n))$.
We choose the costs such that the principal utility will definitely be negative when selecting agents from more than $\log n / \log \log n$ groups. As an online policy has to decide when seeing the first agent from a group, only the largest number of active agents in one of the $\log n / \log \log n$ selected groups counts, which is $\Theta((\log \log n) / (\log \log \log n))$

\subsection{Related Work}

\paragraph{Combinatorial Prophet Inequalities.}
The study of prophet inequalities originated with the classic work of \cite{krengel1977semiamarts,krengel1978semiamarts}, which introduced the problem of comparing an online stopping rule to a prophet with full foresight and established the optimal factor-2 guarantee. \cite{samuel1984comparison} showed that a simple threshold policy achieves the same worst-case guarantees. Subsequent work refined and generalized this framework; notably, \cite{kleinberg2012matroid} showed that the factor-2 bound extends to matroid constraints and connected prophet inequalities to online mechanism design, and \cite{rubinstein2017combinatorial} considered online combinatorial selection problems with submodular and XOS objectives. 
The exposition and simplification in \cite{dutting2020prophet} unified several results via threshold-based arguments and broadened their applicability. 
A related line of work develops online contention resolution schemes (OCRS) \cite{feldman2016online}, which provide a powerful abstraction for converting fractional offline solutions into online randomized algorithms with prophet-style guarantees.

\paragraph{Combinatorial Multi-Agent Contract Design.}
A combinatorial model for team-formation was introduced by \cite{babaioff2006combinatorial,BabaioffFNW12}, in which each of $n$ actions can work or shirk, and the principal's reward is given by a Boolean function. Follow up works \cite{babaioff2009free,babaioff2006mixed} considered free-riding and the power of mixed strategies in this model.
This setting was generalized by \cite{duetting2022multi}, where the principal's reward function $f$ is a combinatorial set function taken from the complement-free hierarchy of \cite{lehmann2001combinatorial}.
Their main result is an efficient algorithm that achieves a constant-factor approximation to the optimal contract when $f$ is XOS, using demand queries, and for submodular $f$ using value queries.
This result was shown to be tight by \cite{multimulti}, who proved that no PTAS exists for submodular rewards.
Building upon this multi-agent binary-action model, \cite{feldman2025budget,aharoni2025welfare} extend the above approximation guarantees for contracting under budget constraints and for a wide range of objective functions, beyond the principal's utility. 
\cite{gong2025approximating} extend the $O(1)$-approximation when the team of hired agents adheres to cardinality constraint.

The binary-action model was further generalized by \cite{multimulti}, where each agent can perform an arbitrary set from a personal collection of actions, and the reward is a combinatorial function of the actions taken in equilibrium.
They give poly-time $O(1)$-approximation to the optimal contract with demand queries.
\cite{dutting2025black} examine the assumption that agents engage in a Nash equilibrium and consider other solution concepts like correlated and coarse correlated equilibrium.
The works of \cite{feldman2026one} and \cite{feldman2026equal} study contract design under budget and fairness constraints, respectively.

\paragraph{Online Learning and Contract Design.}
An online perspective on contract design has also been studied through the lens of learning and regret minimization, beginning with \cite{ho2014adaptive}; see also \cite{cohen2023learning,ZhuBYWJJ23,BacchiocchiC0024,duetting2025pseudodimensioncontracts}.
This line of work studies how a principal learns good contracts when agents' costs, actions, or outcome distributions are initially unknown, typically in a single-agent contract environment.
Our setting is different in several aspects: First, the only unknown is which agents arrive, and the benchmark is the expected utility of the optimal offline contract.
Moreover, the nature of the problem is multi-agent, which introduces strategic considerations that are absent in the single-agent setting.

\paragraph{Bayesian Contract Design.}
There is a significant line of works that consider a principal which does not have full information about the agent's costs and actions-to-outcomes mapping, but only knows the distribution from which these traits (a.k.a, type) are drawn from \cite{alon2022bayesian, alon2021contracts,guruganesh2021contracts,GuruganeshSW023,CastiglioniM021,castiglioni2022designing,castiglioni2025reduction}.
This literature is mostly concerned with designing a (possibly randomized) mechanism that elicits the agent's type, while maximizing expected utility, where the expectation is taken over the type distribution as well.
This is fundamentally different from our setting, in which the agent's type is revealed upon arrival.

\paragraph{Concurrent Work.}
In a concurrent work \cite{lavi2026online} introduce an online non-Bayesian contracting model, where agents have binary effort levels, as in \cite{duetting2022multi}. 
In their model, the contracting game is comprised of two stages: In the selection stage, the principal decides at each point in time whether to include the arriving agent, and whether to irrevocably dismiss agents previously selected. Crucially, the principal does not know when the stream of agents will stop, and must maintain a ``good'' team of agents at any point.
In the second stage, the team is finalized and contracts are offered so that all selected agents are induced to exert effort in equilibrium. 
They design a tight $2$-competitive deterministic policy, with respect to the optimal principal's utility in hindsight, where the reward function $f$ is additive. When $f$ is XOS they show that even a randomized policy cannot achieve a constant competitive ratio.

\section{Model and Preliminaries}
In this section we present the online contract-design model studied in the paper, followed by the definition of \emph{\safe} policies, the classes of reward functions we consider, and a short toolbox of useful lemmas. The offline optimal that we compare against is the same objective studied in \cite{duetting2022multi}.

\subsection{Online model and timeline}

We consider the binary-outcome multi-agent setting initiated by \cite{duetting2022multi}. A principal wishes to delegate a task to a subset of agents from the ground set $[n]=\{1,\dots,n\}$. The task can either succeed or fail. Each agent $i$ can exert effort or shirk. Exerting effort comes at a cost of $c_i \geq 0$, while shirking incurs no cost.  The task's success probability is a combinatorial function $f:2^{[n]}\to [0,1]$, which maps sets of agents $S$ that exert effort to a success probability $f(S)$. We generally assume that $f$ is normalized so that $f(\emptyset) = 0$ and monotone so that $S \subseteq T$ implies $f(S) \leq f(T)$. For brevity we denote the singleton value by $f_i=f(\{i\})$.

If the task succeeds the principal receives a reward of $r \geq 0$, otherwise the principal's reward is zero. We normalize the principal's reward upon success to $1$. Under this assumption, $f$ can also be interpreted as the principal's expected reward function.

This basic setup, where the principal enjoys the reward from effort but the agents bear the cost, creates a situation, which economists refer to as \emph{moral hazard}. To mitigate this misalignment, the principal offers each agent $i$ a \emph{linear contract} $\alpha_i \in [0,1]$, which specifies that upon success a fraction $\alpha_i$ of the realized reward should be transferred to agent $i$.

We consider an online Bayesian/prophet-style model where agents arrive one-by-one in fixed adversarial order and each agent $i$ is \emph{active} with probability $p_i\in[0,1]$, independently. We index agents so that agent $i$ is the $i$-th agent to arrive. 
The principal knows the arrival order, the function $f$, the costs $\{c_i\}_{i\in[n]}$, and the activation probabilities $\{p_i\}_{i\in[n]}$ in advance, but does not 
observe which agents are active until they arrive.
Upon arrival of agent $i$ the principal must decide, immediately and irrevocably, which contract $\alpha_i$ to offer.
We refer to the algorithm that the principal uses to determine these contracts, which may depend on the history and may be randomized, as the \emph{online policy} $\pi$ of the principal. We denote by $A$ the set of active agents, and use $\contract = (\alpha_i)_{i \in A}$ to denote the contracts offered to the active agents.

At the end of the process, after all agents have arrived and all offers have been made, the active agents play a Nash equilibrium of the game induced by $\contract$, in order to decide which exerts effort. As standard, we assume that agents break ties in favor of the principal. 
Let $S \subseteq A$ denote the set of agents that exert effort. 
The expected utility of agent $i$ is $u_i(\contract,S) = \alpha_if(S) - c_i$ if $i \in S$ and it is $u_i(\contract,S) = \alpha_i f(S)$ if $i \not\in S$. We say that $S$ is an equilibrium if $u_i(\contract,S) \geq u_i(\contract,S\setminus \{i\})$ for $i \in S$ and $u_i(\contract,S) \ge u_i(\contract,S \cup \{i\})$ for $i \not\in S$.
Equivalently, $\alpha_i \ge \frac{c_i}{f_i(S)}$ for any $i \in S$ and $\alpha_i \le \frac{c_i}{f_i(S)}$ for any $i \in A \setminus S$, where $\marg{i}{S} = f(S \cup\{i\}) - f(S\setminus\{i\})$ is the marginal contribution of agent $i$ to the reward when working together with $S\setminus\{i\}$. 
We note that for every contract $\contract$ there is an equilibrium (this is because the induced game is a potential game \cite{deo2024supermodular}).

As usual, we assume tie-breaking in favor of the principal, and we denote by $S(\contract) \subseteq A$ the equilibrium that maximizes the principal's utility, $(1-\sum_{i \in A} \alpha_i)f(S)$.
We define the expected principal's utility induced by policy $\pi$ as
\[
ALG(\pi)=\mathbb{E}_{A,\pi}\bigg[\bigg(1-\sum_{i\in A}\alpha_i\bigg)\cdot f\big(S(\contract)\big)\bigg],
\]
We compare $ALG(\pi)$ to the offline optimum
\[
OPT=\mathbb{E}_{A}\Big[\max_{S\subseteq A} g(S)\Big],\qquad g(S):=\Big(1-\sum_{i\in S}\frac{c_i}{\marg{i}{S}}\Big)\cdot f(S),
\]
which is exactly the objective in \cite{duetting2022multi}: given full knowledge of which agents are active, the principal can choose a contract to incentivize a working set $S\subseteq A$ by setting $\alpha_i=c_i/\marg{i}{S}$ for $i\in S$, and $\alpha_i=0$ for $i\notin S$. 
The optimal offline contract thus reduces to maximizing $g(S)$ over subsets of $A$ (interpreting $c_i/\marg{i}{S}=0$ when $c_i=\marg{i}{S}=0$ and $=\infty$ when $\marg{i}{S}=0<c_i$).

We say an online policy $\pi$ achieves competitive ratio $\rho \ge 1$ if $OPT/ALG(\pi)\le \rho$ in the worst case over instances $\langle f,\{c_i\},\{p_i\}\rangle$.

\begin{remark}
From this point onwards, we focus on the case where $c_i > 0$ for all agents $i \in [n]$. 
This is the interesting case because for submodular $f$ (see definition of submodularity below), 
if we can get a $\rho$-competitive policy under this assumption, we can get a $(2\rho)$-competitive algorithm without this assumption (see Proposition~\ref{prop:positive_costs} in Appendix~\ref{sec:positive_costs}). Beyond submodular $f$ we mostly show negative results, which naturally extend to settings where costs can be zero. 
A main advantage of focusing on settings with positive costs is that it isolates an interesting effect, which is cleanest in this regime. Namely, it allows to distinguish between agents that are ``hired'' (get a positive contract $\alpha_i > 0$) and those that ``actually exert effort'' (those agents that belong to $S(\contract)$). The positive cost assumption ensures that only agents that are hired may have an incentive to actually exert effort. Without this assumption we may have agents that receive a zero contract ($\alpha_i = 0$), and in that sense are not hired, but still decide to work.
\end{remark}

\subsection{\Safe{} policies}

We refer to agents that receive a contract $\alpha_i > 0$ as \emph{hired}, and to agents $i \in S(\contract)$ as \emph{working}.
Roughly speaking, a \emph{\safe} policy never \emph{hires} an agent unless that agent will in equilibrium \emph{work} (i.e., exert effort). Equivalently, \safe{} policies avoid free-riders: every agent offered a non-zero contract indeed exerts effort in the resulting equilibrium.

 We emphasize that restricting attention to \safe{} policies can be a loss of generality (as we show in \Cref{obs:not-safe} and \Cref{thm:submodular_gap}), but \safe{} policies are natural, and  also facilitate stronger solution concepts when $f$ is submodular.

\begin{definition}[\Safe{} policy]
	An online policy $\pi$ is \emph{\safe} if for every realization of active agents $A$ and for every contract vector $\alpha$ produced by $\pi$, the set of hired agents
	$S^{\mathrm{hire}}:=\{i\in A\mid \alpha_i>0\}$ satisfies
	\[
	\alpha_i \ge \frac{c_i}{\marg{i}{S^{\mathrm{hire}}}}\qquad\text{for every }i\in S^{\mathrm{hire}}.
	\]
	In words: each hired agent receives at least the payment required to make effort a (weakly) best-response against the other hired agents.
\end{definition}

Under the positive costs assumption and submodular $f$, \safe{} policies induce (weakly) dominant strategies for each agent: if an agent is hired she will prefer exerting effort regardless of other agents' choices.
This property fails to hold when $f$ is XOS, as we show in \Cref{prop:safe_XOS_not_DS}. The proof of the observation below and the rest of the details are deferred to \Cref{sec:safe_equilibrium}.

\begin{restatable}{observation}{safeimpliesds}\label{obs:safe_implies_ds}
Assume that $f$ is monotone submodular and that $c_i>0$ for every agent $i$. Then every safe policy induces exerting effort as a weakly dominant strategy for every hired agent.
\end{restatable}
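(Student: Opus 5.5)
The plan is to fix a realization $A$ and the contract vector $\contract$ produced by the \safe{} policy, and to compare, for a hired agent $i\in S^{\mathrm{hire}}$, the two options \emph{work} and \emph{shirk} against an arbitrary profile of the other agents' actions. The relevant profiles are the sets $R\subseteq A\setminus\{i\}$ of other agents who exert effort. Exerting effort is weakly dominant for $i$ exactly when
\[
\alpha_i f(R\cup\{i\}) - c_i \;\ge\; \alpha_i f(R) \quad\text{for every } R\subseteq A\setminus\{i\},
\]
that is, when $\alpha_i\,\bigl(f(R\cup\{i\})-f(R)\bigr)\ge c_i$. So the whole task is a lower bound on $i$'s marginal contribution that holds for every $R$.

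The first step restricts which $R$ can occur. Since $c_j>0$ for all $j$, an agent $j$ with $\alpha_j=0$ has utility $-c_j<0$ from working and $0$ from shirking. Shirking is therefore strictly dominant for unhired agents, and after deleting these strictly dominated strategies we may assume $R\subseteq S^{\mathrm{hire}}\setminus\{i\}$. I would state the dominance relative to this reduced strategy space, since this is what the observation means: hired agents play against others who never work when unhired.

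If one insisted on all $R\subseteq A\setminus\{i\}$, the claim could fail, because an unhired agent working could make $i$'s marginal contribution zero. So the elimination step is essential, and I would flag it explicitly as the main subtlety. It is also exactly where the positive-cost assumption enters.

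The second step applies submodularity. For $R\subseteq S^{\mathrm{hire}}\setminus\{i\}$, decreasing marginals give
\[
f(R\cup\{i\})-f(R)\;\ge\; f(S^{\mathrm{hire}})-f(S^{\mathrm{hire}}\setminus\{i\}) \;=\; \marg{i}{S^{\mathrm{hire}}}.
\]
The safety condition gives $\alpha_i\ge c_i/\marg{i}{S^{\mathrm{hire}}}$, which also forces $\marg{i}{S^{\mathrm{hire}}}>0$ because $c_i>0$ and $\alpha_i\le 1$ is finite. Multiplying through, $\alpha_i(f(R\cup\{i\})-f(R))\ge \alpha_i\,\marg{i}{S^{\mathrm{hire}}}\ge c_i$. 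This holds for every admissible $R$, so working is a weakly dominant strategy for $i$.

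Monotonicity is used only to guarantee that the marginals are nonnegative, so that the inequality chain is meaningful. A brief remark can then note that the resulting equilibrium is $S^{\mathrm{hire}}$ itself.
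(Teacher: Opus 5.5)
Your proof is correct and follows the paper's argument: for any set $R\subseteq S^{\mathrm{hire}}\setminus\{i\}$ of other working agents, submodularity and the safety condition give $\alpha_i \ge c_i/\marg{i}{S^{\mathrm{hire}}} \ge c_i/\marg{i}{R}$. Your explicit elimination of the strictly dominated ``work'' strategy for unhired agents, which is where $c_j>0$ is used, spells out what the paper leaves implicit when it quantifies only over sets of other \emph{hired} agents.
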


We also observe that in general, restricting attention to safe policies entails a loss of generality. Specifically, there exist instances in which the optimal policy is not safe, and no safe policy achieves the same performance.

\begin{observation}
    [The optimal policy is generally not safe]
    \label{obs:not-safe}
    Let $\eps \in (0, 0.2)$ and consider the following instance with two agents and a unit-demand $f$, i.e., $f(S)=\max_{i\in S}f(\{i\})$.
    Agent $1$ is always active, i.e., $p_1=1$, with $c_1 = \eps^2$ and $f(\{1\})= \eps$.
    Agent $2$ is active with probability $p_2=\eps$, with $c_2 = \eps$ and $f(\{2\}) = 1$.
    Consider any safe policy $\pi$. It will not set $\alpha_1 > 0$ and $\alpha_2 > 0$ simultaneously. If $\alpha_1 > 0$, then the optimal contract is $\alpha_1 = \eps$, and so the principal utility will be $(1 - \eps) \cdot \eps$. Otherwise, if $\alpha_1 = 0$, the highest possible expected principal utility will be $\eps \cdot (1 - \eps)$, namely by setting $\alpha_2 = c_2/f_2$ whenever agent $2$ is active. 
    Compare this to a policy that always sets $\alpha_1 = \eps$ and that sets $\alpha_2 = 2 \eps/(1 - \eps)$ if agent $2$ is active, and $\alpha_2 = 0$ otherwise. 
    The final equilibrium is unique: If  agent $2$ is active, it will exert effort. Otherwise, agent $1$ will exert effort. 
    The competitive ratio between the two policies is
    \[
    \frac{\eps \cdot (1 - \eps - 2 \eps/(1 - \eps)) \cdot 1 + (1 - \eps) \cdot (1 - \eps) \cdot \eps}{\eps(1-\eps)} \ge \frac{\eps(2-6\eps)}{\eps(1-\eps)} \longrightarrow 2
    \]
\end{observation}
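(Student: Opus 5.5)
The plan is to check the two halves of the observation directly: an upper bound of $\eps(1-\eps)$ on every \safe{} policy, and an exact computation of the utility of the proposed non-\safe{} policy. Then I compare the two.

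\emph{Upper bound for \safe{} policies.} The key fact is that $f$ is unit-demand with $f(\{1\})=\eps<1=f(\{2\})$. Hence $\marg{1}{\{1,2\}} = f(\{1,2\})-f(\{2\}) = 0$. If both agents were hired, the \safe{} condition would require $\alpha_1 \ge c_1/0=\infty$, which is impossible since $c_1>0$. So a \safe{} policy never hires both agents. Because agent $1$ arrives first and is always active, the policy's (possibly randomized) choice of $\alpha_1$ does not depend on agent $2$. It therefore suffices to bound the conditional utility in each of two cases and then average.

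If $\alpha_1>0$, agent $2$ is never hired. With positive costs, agent $2$ then shirks, so the utility is at most $(1-\alpha_1)\eps$. Safety of agent $1$ alone needs $\alpha_1\ge c_1/f_1=\eps$, so the utility is at most $\eps(1-\eps)$.

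If $\alpha_1=0$, agent $1$ shirks because $c_1>0$. Utility can then only come from agent $2$, with probability $\eps$. Incentivizing agent $2$ requires $\alpha_2\ge c_2/f_2=\eps$, so the utility is at most $\eps(1-\eps)$.

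\emph{The non-\safe{} policy.} When agent $2$ is inactive, only agent $1$ is active with $\alpha_1=\eps=c_1/f_1$. Tie-breaking in favor of the principal makes agent $1$ work, giving utility $(1-\eps)\eps$.

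When agent $2$ is active, I check all four candidate sets to show the equilibrium is exactly $\{2\}$.
\begin{itemize}
\item $\emptyset$ is not an equilibrium, since agent $2$ gains by working: $\alpha_2 f_2 = 2\eps/(1-\eps) > \eps = c_2$.
\item $\{1\}$ is not an equilibrium, since agent $2$'s marginal contribution is $1-\eps$ and $\alpha_2(1-\eps)=2\eps>c_2$, so agent $2$ deviates.
\item $\{1,2\}$ is not an equilibrium, since agent $1$'s marginal contribution is $0<c_1/\alpha_1$, so agent $1$ shirks.
\item $\{2\}$ is an equilibrium: agent $2$ is incentivized because $\alpha_2\ge\eps$, and agent $1$ has zero marginal contribution, so it prefers not to join.
\end{itemize}
So the utility in this case is $1-\eps-2\eps/(1-\eps)$, while the policy pays $\alpha_1+\alpha_2$.

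\emph{Comparison.} Weighting the two cases by $p_2=\eps$ and $1-p_2$ gives the numerator in the statement. For $\eps<0.2$ we have $2\eps/(1-\eps)\le 2.5\eps$ and $(1-\eps)^2\ge 1-2\eps$. The numerator is therefore at least $\eps(1-3.5\eps)+\eps(1-2\eps)\ge\eps(2-6\eps)$. Dividing by $\eps(1-\eps)$ gives a ratio tending to $2$ as $\eps\to 0$.

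The step that needs the most care is the \safe{} upper bound with randomized policies. It rests on the observation that $\alpha_1$ is fixed before agent $2$ is revealed, so the bound holds pointwise for each choice of $\alpha_1$ and hence in expectation. The equilibrium uniqueness check is routine but must cover all four sets.
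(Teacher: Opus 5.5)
Your proposal is correct and follows the same route as the paper: you rule out hiring both agents via $\marg{1}{\{1,2\}}=0$, bound each branch of $\alpha_1$ by $\eps(1-\eps)$, and compute the non-safe policy's utility case by case. Your explicit four-set equilibrium check and the bounds $2\eps/(1-\eps)\le 2.5\eps$ and $(1-\eps)^2\ge 1-2\eps$ fill in steps the paper leaves implicit. One small point in your favor is the agent-$2$-inactive case. There agent $1$ is exactly indifferent, so $\emptyset$ is also an equilibrium and the paper's ``unique'' is slightly loose. Appealing to principal-favoring tie-breaking, as you do, is the right way to conclude that agent $1$ works.
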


\subsection{Classes of reward functions}
We now list the valuation classes used in the paper, in increasing order of generality.
\begin{description}
	\item[Additive:] $f$ is additive if there exist non-negative weights $f_i$ with $f(S)=\sum_{i\in S} f_i$ for all $S$. 
	\item[Coverage:] $f$ is coverage if there exists a finite universe $U$, with weights $w_u \ge 0$ for each $u \in U$, such that each agent $i$ covers a subset $U_i\subseteq U$ and $f(S)$ is the weighted sum of covered elements. I.e.,  $f(S)=\sum_{u \in U(S)} w_u$, where $U(S) = \bigcup_{i\in S} U_i$. Coverage functions are submodular.
	\item[Submodular:] $f$ is submodular if for every $S\subseteq T$ and $i\notin T$ we have $\marg{i}{S}\ge \marg{i}{T}$. Submodular functions capture diminishing returns.
	\item[XOS (fractionally subadditive):] $f$ is XOS if it can be written as the pointwise maximum of additive functions, i.e., $f(S)=\max_{\ell\in L}\sum_{i\in S} w_{\ell,i}$ for some collection of additive clauses indexed by $L$. 
	\item[Subadditive:] $f$ is subadditive if for every $S,T$ we have $f(S\cup T) \le f(S)+f(T)$. This is the most general class we consider.
\end{description}

When $f$ is submodular, the induced offline objective $g(S)=\big(1-\sum_{i\in S} c_i/\marg{i}{S}\big)\cdot f(S)$ is subadditive; see Lemma~\ref{lem:gSA}.

\subsection{Toolbox}
We build on several useful observations about the offline contracting problem which were made in prior work.
We use the following well-known property of XOS functions to apply concentration bounds to the reward of a random set of agents.
We prove an expected-value version of this lemma and defer the proof to \Cref{sec:expected-xos-margs-proof}.
\begin{lemma}[XOS Marginals Lemma \cite{duetting2022multi,fu2012conditional}]\label{lem:XOS_margs}
	For any XOS function $f$ and any sets $S\subseteq T$,
	$\sum_{i\in S} \marg{i}{T} \le f(S)$.
\end{lemma}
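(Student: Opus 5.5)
Since $f$ is XOS, write $f(T)=\max_{\ell\in L}\sum_{j\in T} w_{\ell,j}$. The plan is to pick a clause supporting $T$ and compare both sides of the inequality to its weights. So fix $\ell^\ast\in L$ attaining the maximum at $T$, meaning $f(T)=\sum_{j\in T} w_{\ell^\ast,j}$. All weights are nonnegative, as in the definition of XOS.

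\emph{Step 1: bound each marginal by its clause weight.} For $i\in S\subseteq T$, the marginal is $\marg{i}{T}=f(T)-f(T\setminus\{i\})$, since $i\in T$. The clause $\ell^\ast$ is feasible for $T\setminus\{i\}$, so
\[
f(T\setminus\{i\})\ \ge\ \sum_{j\in T\setminus\{i\}} w_{\ell^\ast,j} \;=\; f(T)-w_{\ell^\ast,i}.
\]
It follows that $\marg{i}{T}\le w_{\ell^\ast,i}$.

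\emph{Step 2: sum over $S$ and compare with $f(S)$.} Summing the bound from Step 1 over $i\in S$ gives
\[
\sum_{i\in S}\marg{i}{T}\ \le\ \sum_{i\in S} w_{\ell^\ast,i}.
\]
The right-hand side is the value of clause $\ell^\ast$ on $S$. By the max definition of $f$, this value is at most $f(S)$, which proves the lemma.

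There is no real obstacle here. The only subtle point is to use one clause $\ell^\ast$, the one supporting $T$, for every $i$ simultaneously, rather than a separate clause for each $T\setminus\{i\}$. The paper says it proves an expected-value version in the appendix. That version should follow the same way: apply the pointwise inequality for each realization of the random sets with $S\subseteq T$, then take expectations, using linearity of expectation on the left-hand side.
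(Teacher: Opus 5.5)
Your proof is correct. It is the standard supporting-clause argument from the works the paper cites for this lemma: fix one clause $\ell^\ast$ attaining $f(T)$, bound each $\marg{i}{T}$ by $w_{\ell^\ast,i}$, and sum over $S$. The paper only invokes this lemma without reproving it, and your argument does not even need the weights to be nonnegative. One note on your closing remark: the paper's expected version (the lemma applied with $S=T=A$) needs more than linearity. It also uses that the event $i\in A$ is independent of $A_{-i}$, in order to rewrite $\E[\mathbf{1}[i\in A](f(A)-f(A\setminus\{i\}))]$ as $p_i\,\E_{A_{-i}}[f(A_{-i}\cup\{i\})-f(A_{-i})]$.
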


\begin{lemma}[XOS Expected Marginals Lemma]\label{lem:expected_XOS_margs}
	Let $f:2^{[n]}\to\reals_{\ge 0}$ be an XOS function. Let $A\subseteq[n]$ be a random set in which each agent $i$ is included independently with probability $p_i$. Then
	\[
		\E_A[f(A)]
		\ge
		\sum_{i=1}^n p_i\cdot
		\E_{A_{-i}}\big[f(A_{-i}\cup\{i\})-f(A_{-i})\big],
	\]
	where $A_{-i}=A\setminus\{i\}$ is a random set obtained by independently sampling all agents but $i$.
\end{lemma}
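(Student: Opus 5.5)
We deduce the XOS Expected Marginals Lemma from its deterministic version, Lemma~\ref{lem:XOS_margs}: write the right-hand side as an expectation over the full random set $A$, apply Lemma~\ref{lem:XOS_margs} pointwise, and take expectations.

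\textbf{Step 1: rewrite each term.} Fix an agent $i$. Membership of $i$ in $A$ is independent of $A_{-i}$, so
\[
p_i\cdot \E_{A_{-i}}\big[f(A_{-i}\cup\{i\})-f(A_{-i})\big]
= \E_A\Big[\mathbf{1}[i\in A]\cdot\big(f(A_{-i}\cup\{i\})-f(A_{-i})\big)\Big].
\]
On the event $i\in A$ we have $A_{-i}\cup\{i\}=A$ and $A_{-i}=A\setminus\{i\}$. The difference inside the expectation is therefore exactly $\marg{i}{A}=f(A)-f(A\setminus\{i\})$, using the paper's convention $\marg{i}{S}=f(S\cup\{i\})-f(S\setminus\{i\})$.

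\textbf{Step 2: sum and bound pointwise.} Summing over $i$ and using linearity of expectation, the right-hand side equals
\[
\E_A\Big[\sum_{i\in A}\marg{i}{A}\Big].
\]
For every realization of $A$, apply Lemma~\ref{lem:XOS_margs} with $S=T=A$. This gives $\sum_{i\in A}\marg{i}{A}\le f(A)$.

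\textbf{Step 3: take expectations.} Taking expectations of this pointwise inequality gives $\E_A[f(A)]$ as an upper bound, which is the claim.

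The only point needing care is Step 1, namely the identification of the conditional marginal with $\marg{i}{A}$. This rests on independence: conditioning on $i\in A$ leaves the distribution of $A_{-i}$ unchanged. The rest is linearity of expectation together with the deterministic lemma.
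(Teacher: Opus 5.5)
Your proof is correct and matches the paper's argument. The paper likewise applies Lemma~\ref{lem:XOS_margs} with $S=T=A$ pointwise and takes expectations. It then uses the independence of $\mathbf{1}[i\in A]$ from $A_{-i}$ to rewrite each term $\E_A[\mathbf{1}[i\in A]\,(f(A)-f(A\setminus\{i\}))]$ as $p_i\,\E_{A_{-i}}[f(A_{-i}\cup\{i\})-f(A_{-i})]$; the only difference is that you run the chain from the right-hand side toward $\E_A[f(A)]$ rather than in the reverse direction.
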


The following lemma connects the sum of squared costs to the reward of a set. This connection is used, as in \cite{duetting2022multi}, to design prices such that the reward of a demand set provides a constant fraction of the optimal reward. It is a budgeted refinement of Lemma 3.3 in \cite{duetting2022multi}, which is recovered by taking $\rho=1$.
\begin{lemma}[Budgeted refinement of Lemma 3.3 in \cite{duetting2022multi}]\label{lem:budgeted_sqrt_cost}
	Let $f$ be XOS and let $S\subseteq[n]$ satisfy $\marg{i}{S}>0$ whenever $c_i>0$.
	If $\sum_{i\in S}\frac{c_i}{\marg{i}{S}} \le \rho$ for some $\rho>0$, then
	$\sum_{i \in S}\sqrt{c_i} \le \sqrt{\rho \cdot f(S)}$.
\end{lemma}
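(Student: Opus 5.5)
The plan is to apply Cauchy--Schwarz, splitting each $\sqrt{c_i}$ into a factor that feeds the cost constraint and a factor that feeds the XOS marginals bound. First dispose of agents with $c_i=0$: they contribute nothing to $\sum_{i\in S}\sqrt{c_i}$. So we may restrict the sum to $S^+:=\{i\in S: c_i>0\}$. For these agents the hypothesis gives $\marg{i}{S}>0$, so every ratio $c_i/\marg{i}{S}$ is well defined and finite.

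For $i\in S^+$, write
\[
\sqrt{c_i}=\sqrt{\frac{c_i}{\marg{i}{S}}}\cdot\sqrt{\marg{i}{S}}.
\]
Applying Cauchy--Schwarz then gives
\[
\sum_{i\in S^+}\sqrt{c_i}\le\Big(\sum_{i\in S^+}\frac{c_i}{\marg{i}{S}}\Big)^{1/2}\Big(\sum_{i\in S^+}\marg{i}{S}\Big)^{1/2}.
\]
We bound the two factors separately.

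\textbf{First factor.} It is at most $\sqrt{\rho}$ by the assumption $\sum_{i\in S}c_i/\marg{i}{S}\le\rho$. Dropping the zero-cost terms only decreases the sum, since each of them is $0$ under the stated convention.

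\textbf{Second factor.} Here we invoke Lemma~\ref{lem:XOS_margs}. Since $f$ is monotone, all marginals are nonnegative. Hence
\[
\sum_{i\in S^+}\marg{i}{S}\le\sum_{i\in S}\marg{i}{S}\le f(S),
\]
where the last inequality is Lemma~\ref{lem:XOS_margs} applied with both sets equal to $S$.

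Combining the two bounds yields $\sum_{i\in S}\sqrt{c_i}\le\sqrt{\rho\, f(S)}$.

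There is no real obstacle in this argument. The only care point is the bookkeeping for agents with zero cost or zero marginal, where the ratio $c_i/\marg{i}{S}$ could be $0/0$. Excluding the zero-cost agents up front, together with the hypothesis that $\marg{i}{S}>0$ whenever $c_i>0$, makes every Cauchy--Schwarz term well defined.
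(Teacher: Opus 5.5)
Your proof is correct and matches the paper's argument essentially verbatim. Both use the same Cauchy--Schwarz split $\sqrt{c_i}=\sqrt{c_i/\marg{i}{S}}\cdot\sqrt{\marg{i}{S}}$, the budget hypothesis to bound the first factor, and \Cref{lem:XOS_margs} with both sets equal to $S$ to bound the second. You also note explicitly that restricting the marginal sum to positive-cost agents is harmless because marginals are nonnegative, a bookkeeping step the paper leaves implicit.
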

\begin{proof}
	Agents with $c_i=0$ contribute nothing to the left-hand side, so assume $c_i>0$, hence $\marg{i}{S}>0$, for all $i \in S$. By Cauchy--Schwarz,
	\[
	\sum_{i\in S}\sqrt{c_i}
	=
	\sum_{i\in S}\sqrt{\frac{c_i}{\marg{i}{S}}}\cdot\sqrt{\marg{i}{S}}
	\le
	\bigg(\sum_{i\in S}\frac{c_i}{\marg{i}{S}}\bigg)^{1/2}
	\bigg(\sum_{i\in S}\marg{i}{S}\bigg)^{1/2}
	\le
	\sqrt{\rho}\cdot\bigg(\sum_{i\in S}\marg{i}{S}\bigg)^{1/2}.
	\]
	By \Cref{lem:XOS_margs} (applied with $S=T$), $\sum_{i\in S}\marg{i}{S} \le f(S)$, and the claim follows.
\end{proof}

The following exemplifies that when $f$ is submodular, the principal's utility function $g$ is subadditive. The proof is deferred to \Cref{sec:gSA}.
\begin{restatable}{lemma}{gsubadditive}\label{lem:gSA}
	Whenever $f$ is monotone and submodular, the function $g$ is subadditive.
\end{restatable}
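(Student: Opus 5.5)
The plan is to compare $g(S\cup T)$ with $g(S)+g(T)$ by splitting the payment term of $g(S\cup T)$ agent by agent and charging each agent to one of the two sets. Throughout, write $U=S\cup T$ and $P_X=\sum_{i\in X}c_i/\marg{i}{X}$, so that $g(X)=f(X)-\sum_{i\in X}c_i\,f(X)/\marg{i}{X}$. I would first dispose of degenerate cases. If some $i\in U$ has $\marg{i}{U}=0<c_i$, then $g(U)=-\infty$ under the paper's convention, and the inequality is trivial.

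The basic monotonicity fact is the following. For $i\in X\subseteq U$, monotonicity gives $f(U)\ge f(X)$ and submodularity gives $\marg{i}{U}\le\marg{i}{X}$. Hence
\[
\frac{c_i f(U)}{\marg{i}{U}}\;\ge\;\frac{c_i f(X)}{\marg{i}{X}}+\frac{c_i\,(f(U)-f(X))}{\marg{i}{U}} ,
\]
so each payment term in $U$ dominates the corresponding term in any subset containing $i$, with explicit slack. On the reward side, submodularity gives
\[
f(U)\le f(S)+f(T)-f(S\cap T).
\]

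Step 1 (the disjoint part). Charge every $i\in S$ to $S$, using the fact above with $X=S$, and every $i\in T\setminus S$ to $T$, using it with $X=T$. This yields
\[
g(U)\le g(S)+\Big[f(T)-f(S\cap T)-\sum_{i\in T\setminus S}\frac{c_i f(T)}{\marg{i}{T}}\Big]-\sum_{i\in S}\frac{c_i\,(f(U)-f(S))}{\marg{i}{U}} .
\]
The bracket equals $g(T)-f(S\cap T)+\sum_{i\in S\cap T}c_i f(T)/\marg{i}{T}$.

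Step 2 (the overlap). It then remains to show that the overlap terms are paid for:
\[
\sum_{i\in S\cap T}\frac{c_i f(T)}{\marg{i}{T}}\;\le\; f(S\cap T)+\sum_{i\in S}\frac{c_i\,(f(U)-f(S))}{\marg{i}{U}} .
\]
I would attack this in three ways, in order:
\begin{enumerate}
\item Choose which of $S$ or $T$ absorbs $S\cap T$, by symmetry taking the one with the smaller $P$ restricted to $S\cap T$.
\item Use \Cref{lem:budgeted_sqrt_cost} and \Cref{lem:XOS_margs} on $S\cap T$, which give $\sum_{i\in S\cap T}\marg{i}{T}\le f(S\cap T)$, to control the shared terms.
\item Exploit that whenever $P_T\ge 1$ we have $g(T)\le 0$. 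Then the desired inequality only needs $g(U)\le g(S)$ up to that negative amount, which may be handled by charging all of $U$ to $S$ and using the slack $f(U)-f(S)$.
\end{enumerate}

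This overlap step is the main obstacle. The naive chain of inequalities points the wrong way there, since $f(S)+f(T)-f(U)\ge f(S\cap T)$. If it cannot be closed for arbitrary (possibly negative) $g$ values, I would fall back to proving the inequality for the nonnegative part $\max(g,0)$. In that case $P_T<1$ and $P_S<1$, and the shared agents' payments are bounded via Cauchy--Schwarz against $f(S\cap T)$. That is the form actually used when $g$ is applied to prophet-style arguments.
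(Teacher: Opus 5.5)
\textbf{The gap is Step 2, and it is not merely unproved but false.}

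Part of this is unavoidable. The statement as literally written fails for overlapping pairs: for $S\subseteq T$ it reduces to $g(S)\ge 0$, and, e.g., $S=T=\{i\}$ with $c_i>f_i$ gives $g(S\cup T)=g(S)>2g(S)$. The paper's own chain silently needs $1-P_{S\cup T}\ge 0$, because it multiplies $f(S\cup T)\le f(S)+f(T)$ by that factor.

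More seriously for your route, Step 2 fails even when every $g$-value is positive. The cause is that replacing $f(S\cup T)$ by $f(S)+f(T)-f(S\cap T)$ discards too much. Take a coverage function in which agent $b$ alone covers an element of weight $1$, while agents $a$ and $c$ each cover a common element of weight $W$ plus a private element of weight $1$. Let $S=\{a,b\}$, $T=\{b,c\}$, $c_b=1/3$ and $c_a=c_c=\delta<1/3$. Every marginal in $S\cup T$ equals $1$, $f(T)=W+2$, $f_b(T)=1$, $f(S\cap T)=1$ and $f(S\cup T)-f(S)=1$. So Step 2 reads $(W+2)/3\le 4/3+\delta$, which fails for $W\ge 3$. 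Yet $P_{S\cup T}=1/3+2\delta<1$, $P_S,P_T<1$, and $g(S),g(T),g(S\cup T)$ are all positive.

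None of your attacks repairs this:
\begin{itemize}
\item The instance is symmetric under $a\leftrightarrow c$, so the choice of absorbing side is irrelevant.
\item The XOS marginals lemma bounds $\sum_{i\in S\cap T}f_i(T)$ from above, whereas the offending term has $f_i(T)$ in the denominator and is scaled by $f(T)$.
\item No payment sum reaches $1$, so the third attack never applies.
\item The shared term $(W+2)/3$ cannot be controlled by $f(S\cap T)=1$, so the Cauchy--Schwarz fallback fails on the same instance.
\end{itemize}

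\textbf{What survives, and how to close the gap.} When $S\cap T=\emptyset$, your Step 1 is already a complete proof with no sign condition. That is the only form in which the paper applies the lemma: every application splits a set into disjoint pieces. There, your termwise comparison is more robust than the paper's chain.

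For overlapping sets, the missing idea is to keep the exact gain $\Delta:=f(S\cup T)-f(S)$ rather than bounding it by $f(T)-f(S\cap T)$. Step 1 then becomes
\[
g(S\cup T)\le g(S)+\Delta(1-R)-\sum_{i\in T\setminus S}\frac{c_i f(T)}{f_i(T)},\qquad R:=\sum_{i\in S}\frac{c_i}{f_i(S\cup T)},
\]
so it suffices that $\Delta(1-R)\le f(T)(1-Q)$ with $Q:=\sum_{i\in S\cap T}c_i/f_i(T)$. Since $R\ge Q$ by submodularity and $0\le\Delta\le f(T)$, this holds whenever $Q\le 1$, e.g.\ whenever $g(T)>0$.

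The paper instead keeps $g$ in product form and lets the full rate $P_{S\cup T}$ act on both $f(S)$ and $f(T)$. This amounts to $g(S)+g(T)-g(S\cup T)\ge(1-P_{S\cup T})\bigl(f(S)+f(T)-f(S\cup T)\bigr)$, which is nonnegative whenever $g(S\cup T)\ge 0$. Either version yields subadditivity of $\max(g,0)$, the correct form of your fallback, with no Cauchy--Schwarz needed.
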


We also use the following one-sided Chebyshev's inequality.
\begin{lemma}[Cantelli's inequality]\label{fact:cantelli}
	Let $Z$ be a random variable with mean $m$ and variance $\sigma^2$. For every $t>0$,
	$\Pr[Z \le m - t] \le \frac{\sigma^2}{\sigma^2+t^2}$.
\end{lemma}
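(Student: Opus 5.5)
We prove Cantelli's inequality by the standard shifted-Markov argument: apply Markov's inequality to a square of the centered variable plus a free shift, then optimize the shift. Let $Y = m - Z$, so $\E[Y]=0$ and $\E[Y^2]=\sigma^2$. The event $\{Z \le m-t\}$ is the event $\{Y \ge t\}$. If $\sigma^2=\infty$ the bound is trivial (the right-hand side is $1$), so assume the variance is finite.

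Fix any $u \ge 0$. Since $t+u>0$, on the event $\{Y\ge t\}$ we have $Y+u \ge t+u > 0$, and hence $(Y+u)^2 \ge (t+u)^2$. Markov's inequality applied to the nonnegative variable $(Y+u)^2$ then gives
\[
\Pr[Y \ge t] \le \Pr\big[(Y+u)^2 \ge (t+u)^2\big] \le \frac{\E[(Y+u)^2]}{(t+u)^2} = \frac{\sigma^2+u^2}{(t+u)^2}.
\]
The last equality uses $\E[Y]=0$, which kills the cross term $2u\,\E[Y]$.

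It remains to choose $u$. Minimizing $h(u)=(\sigma^2+u^2)/(t+u)^2$ over $u\ge 0$, the derivative condition is $u(t+u) = \sigma^2+u^2$, which gives $u=\sigma^2/t \ge 0$. Substituting, the numerator becomes $\sigma^2+\sigma^4/t^2 = \sigma^2(t^2+\sigma^2)/t^2$. The denominator becomes $(t^2+\sigma^2)^2/t^2$. The ratio is therefore $\sigma^2/(\sigma^2+t^2)$, which is the claimed bound.

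There is no real obstacle here. The only point needing care is that the shift $u$ must be nonnegative, so that $Y\ge t$ implies $(Y+u)^2\ge (t+u)^2$; the choice $u=\sigma^2/t$ satisfies this. The degenerate case $\sigma=0$ is also covered: then $u=0$, and the bound gives probability $0$, which is correct because $Z=m$ almost surely and $t>0$.
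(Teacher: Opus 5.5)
Your proof is correct. It is the standard shifted-Markov derivation: for every $u\ge 0$ the bound $(\sigma^2+u^2)/(t+u)^2$ is valid, and substituting $u=\sigma^2/t$ gives exactly $\sigma^2/(\sigma^2+t^2)$. (You do not need to check that this $u$ is the minimizer, since any $u\ge 0$ already gives a valid bound.)

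The paper does not prove this lemma at all. It cites it as the well-known one-sided Chebyshev inequality and applies it to the bounded variable $Z=f(X^\star)$, whose variance is finite. So your argument simply supplies the textbook proof the paper omits.
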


\section{Similarities and Differences to the Classic Prophet Inequalities}\label{sec:similarities}

We compare our model to the classic prophet inequality and present two fundamental results.
First, when $f$ is unit-demand, i.e., $f(S) = \max_{i \in S}f_i$, the principal's problem reduces to the standard single-item prophet inequality with $n$ items.
In particular, a (tight) competitive ratio of $2$ can be achieved.
\begin{proposition}[see \Cref{sec:unit-demand}]
    When $f$ is unit-demand, there exists a $2$-competitive online policy.
    Moreover, for every $\eps > 0$, there exists a unit-demand instance for which any online policy achieves a competitive ratio of at least $2-\eps$.
\end{proposition}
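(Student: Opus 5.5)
Both parts rest on one reduction. For unit-demand $f$ the offline objective on an active set $A$ is attained by a single agent: if $|S|\ge 2$ and the top agent of $S$ is unique, every other agent has $\marg{i}{S}=0<c_i$, so $g(S)=-\infty$ (or $\le 0$ under ties). Hence
\[
OPT=\E_A\Big[\max_{i\in A} v_i^+\Big], \qquad v_i:=(1-c_i/f_i)\,f_i=f_i-c_i .
\]
This is exactly the prophet benchmark for the independent random variables $X_i:=v_i^+$ if $i$ is active and $X_i:=0$ otherwise.

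\textbf{Upper bound.} I will simulate the single-threshold policy of Samuel-Cahn. Fix $\tau$ with $\Pr[\max_i X_i\ge\tau]=1/2$, randomizing tie-breaking at $\tau$ if there are atoms. The policy offers $\alpha_i=c_i/f_i$ to the first active agent $i$ with $X_i\ge\tau$ and $\alpha_j=0$ to every other agent.
\begin{itemize}
\item Since $c_j>0$, agents with zero contract never work. The hired agent works alone, because $\alpha_i f_i = c_i$ and ties are broken in favor of the principal.
\item The principal's utility is therefore $(1-c_i/f_i)f_i=v_i=X_i$.
\end{itemize}
So the policy collects exactly what the stopping rule collects, and the classical analysis gives $ALG\ge \frac12\E[\max_i X_i]=OPT/2$. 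This is a safe policy.

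\textbf{Lower bound.} I will embed the standard two-item instance. Take costs $c_i=\delta$ tiny and values as follows:
\begin{itemize}
\item Agent 1: $p_1=1$, $f_1=1/2$ (i.e.\ $v_1\approx 1/2$).
\item Agent 2: $p_2=\eps$, $f_2=1$.
\end{itemize}
Then $OPT\approx (1-\eps)\cdot\frac12+\eps\cdot 1\approx \frac12(1+\eps)$, so I need every online policy to achieve at most about $1/2$ plus lower-order terms.

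\textbf{Main obstacle.} The difficulty is that, unlike a stopping rule, an online policy can hire agent 1 and later also hire agent 2, relying on free-riding (as in Observation~\ref{obs:not-safe}). I would argue this gains nothing:
\begin{itemize}
\item Whatever $\alpha_1$ is committed, the final equilibrium has one worker, since marginals vanish for the second agent.
\item If agent 2 is to work, it needs $\alpha_2\ge c_2/f_2$. The principal still pays $\alpha_1$, so the utility is at most $(1-\alpha_1-\alpha_2)\cdot 1$.
\item If agent 1 works, the utility is $(1-\alpha_1)\cdot\frac12$ with $\alpha_1\ge 2\delta$.
\end{itemize}
Combining, the expected utility is at most
\[
(1-\eps)\Big(\tfrac12\Big)+\eps\,(1-\alpha_1)\;\le\; \tfrac12+\tfrac{\eps}{2}+O(\delta)\ \text{ if } \alpha_1 \text{ is small},
\]
which is not yet enough. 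So I will instead use Observation~\ref{obs:not-safe}-style tuning: make agent 2's value $1/\eps$-scaled, so that its cost forces a large $\alpha_2$ unless $\alpha_1=0$.

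Concretely, I would rescale to the classic instance $X_1=1$ and $X_2=1/\eps$ with probability $\eps$. This requires reward at most 1, so I set
\begin{itemize}
\item $f_2=1$, $c_2=\delta$;
\item $f_1=\eps$, $c_1\to 0$;
\end{itemize}
which gives $OPT\approx \eps(1-\eps)+\eps\approx 2\eps$. An online policy either offers agent 1 a positive contract, or it does not.
\begin{itemize}
\item If $\alpha_1=0$, agent 1 never works, and the utility is at most $\eps$.
\item If $\alpha_1>0$, then agent 1 working requires $\alpha_1\ge c_1/\eps$. The tricky case is when agent 1 is not to work once agent 2 arrives: it must then be free-riding, with $\alpha_1 f_1(\{1,2\})=0<c_1$, consistent with equilibrium. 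The principal still pays $\alpha_1$, which costs $\alpha_1\cdot 1$ when agent 2 is active.
\end{itemize}
The key step is to choose $c_1$ so that any $\alpha_1$ inducing effort when agent 2 is absent costs a constant fraction in the other branch. To get $2-\eps$ this must be balanced: the principal gets at most $\max\{\eps,\ \eps(1-\alpha_1)+(1-\eps)\eps(1-\alpha_1)\}$ with $\alpha_1\approx c_1/\eps$. I would set $c_1=\eps^2\gamma$ and optimize $\gamma$; if this tension fails to give $2$, I would fall back to a multi-agent geometric instance with many agents that drive the ratio to $2$.
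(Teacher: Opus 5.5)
Your upper bound is the paper's argument: the benchmark is the single-item prophet value with $v_i=(f_i-c_i)^+$, and a threshold rule that pays one agent $c_i/f_i$ realizes the gambler's reward. That half is fine.

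The lower bound has a genuine gap. None of the instances you write down gives a ratio near $2$, and the proposed tuning cannot repair them. In all of them agent $2$ is nearly free ($c_2=\delta$), which is exactly what makes free-riding profitable. Take your rescaled instance ($f_1=\eps$, $c_1\to 0$; $f_2=1$, $c_2=\delta$, $p_2=\eps$). The policy that offers $\alpha_1=c_1/\eps$ and, if agent $2$ shows up, also $\alpha_2=\delta$, has agent $2$ working while agent $1$ free-rides, since agent $1$'s marginal is $0$. It earns about $(1-\eps)\eps+\eps\approx OPT$, so the ratio tends to $1$. More generally, write $c_1=\beta\eps$. The prophet gets about $\eps(2-\beta)$. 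The better of ``skip agent $1$'' and ``hire agent $1$, then also hire agent $2$ if active'' gets about $\eps\max\{1,2(1-\beta)\}$. So within your two-agent family the ratio never exceeds roughly $3/2$ (attained at $\beta=1/2$). Your choice $c_1=\eps^2\gamma$ has $\beta\to 0$ and ratio $\to 1$. The unspecified ``multi-agent geometric'' fallback does not close this gap.

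The missing idea is to make agent $2$ expensive too, so that inducing effort from both agents would consume the whole reward. Then the free-riding option becomes worthless. The paper takes $p_1=1$, $f_1=\eps$, $c_1=\eps/2$ and $p_2=\eps$, $f_2=1$, $c_2=1/2$, so $c_i/f_i=1/2$ for both agents. Agent $1$ can work only if $\alpha_1\ge 1/2$. In any equilibrium where agent $2$ works, it works alone, which requires $\alpha_2\ge 1/2$.

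\begin{itemize}
\item If the policy commits to $\alpha_1\ge 1/2$, it earns at most $\eps/2$ in either branch, because inducing agent $2$ yields at most $1/2-\alpha_1\le 0$.
\item If $\alpha_1<1/2$, it earns nothing when agent $2$ is absent and at most $1/2$ when agent $2$ is present.
\end{itemize}

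Either way the online utility is at most $\eps/2$, against the prophet's $\frac{\eps}{2}(2-\eps)$. This gives the ratio $2-\eps$.
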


Observe that incentivizing a single agent $i^\star$ is done optimally using $\alpha_{i^\star} = \frac{c_{i^\star}}{f_{i^\star}}$. This yields utility $(1-\frac{c_{i^\star}}{f_{i^\star}})f_{i^\star} = f_{i^\star}-c_{i^\star}$. 
Thus, a valid policy is to pick one of $n$ agents (boxes), each available with probability $p_i$, each with a value of $v_i=\max\{(f_i-c_i),0\}$. As such, a competitive ratio of $2$ can be achieved by a simple threshold policy \cite{samuel1984comparison,kleinberg2012matroid}.
The matching lower bound in our setting does not carry over immediately, as the principal may deploy a policy which is not safe, and hire both agents. However, minor adaptations to the well-known hard instance provide a matching 2-competitive lower bound. We defer the details to \Cref{sec:unit-demand}.

We also observe that the tight connection between the two problems disappears even for $2$-demand $f$, i.e., $f(S) = \max_{S' \subseteq S, |S'|\le 2} \sum_{i \in S'} f_i$.
In this case, we present an instance in which any policy achieves a competitive ratio at least $9/4$, which implies a strict separation from the online combinatorial allocation problem, in which a competitive ratio of $2$ can be achieved even when valuations are XOS \cite{FGL15}.

\begin{restatable}[see \Cref{sec:add_lower_bound}]{proposition}{additivelowerbound}
    For $2$-demand or additive $f$, there exists an instance of the online contracting problem 
    for which any algorithm achieves a competitive ratio of at least $9/4$.
\end{restatable}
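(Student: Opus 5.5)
The plan is to build one explicit small instance and compute both benchmarks exactly. I will use an instance in which no relevant set has more than two agents. On such sets the $2$-demand function $f(S)=\max_{|S'|\le 2}\sum_{i\in S'}f_i$ coincides with the additive function, so a single instance handles both cases at once.

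The first step is to simplify the space of online policies when $f$ is additive. Here $\marg{i}{S}=f_i$ does not depend on $S$. So in any equilibrium agent $i$ works iff $\alpha_i\ge c_i/f_i$, up to ties, which are broken in the principal's favour. Any $\alpha_i$ that is positive but below $c_i/f_i$ only lowers the factor $1-\sum\alpha_i$ without changing $S(\contract)$. Any $\alpha_i$ above $c_i/f_i$ is dominated by exactly $c_i/f_i$. Hence without loss of generality an online policy, after seeing each active agent, decides irrevocably whether to hire it at price $\beta_i:=c_i/f_i$. The history available at that moment is exactly the set of agents hired so far. Randomization cannot help, since the expected utility is linear in the mixture. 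So the optimal online policy is found by backward induction (a small dynamic program) over the hired set.

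The second step is the instance. The tension I want to exploit is a multiplicative externality that prophet inequalities lack: hiring an early agent raises the total share $\sum\alpha_i$ that multiplies \emph{all} future reward. Concretely, take a deterministic agent $1$ with $p_1=1$, small value $f_1=a$ and rate $\beta_1=\beta$. Then add a later agent (or a short sequence of agents) with value $1$, rate $\gamma$, and activation probability $p$. The offline benchmark is
\[
OPT = p\cdot\max\{1-\gamma,\ (1-\beta-\gamma)(1+a)\} + (1-p)\,a(1-\beta).
\]
The online policy must commit on agent $1$ without knowing whether the big agent arrives, so its value is
\[
\max\Big\{\,p(1-\gamma),\ \ a(1-\beta)+p\cdot\max\{0,\ (1-\beta-\gamma)(1+a)-a(1-\beta)\}\Big\}.
\]
I will first tune $a,\beta,\gamma,p$ so that the two online options tie and the ratio is maximized. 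If two stages cannot reach $9/4$, I will add a third agent, for example a second tier of big agents in prophet-style $\eps$-probability fashion. I would keep the structure such that every set worth considering has at most two agents, so that the $2$-demand reduction still applies.

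The main obstacle is this parameter optimization: choosing the number of stages and the values so that the ratio is exactly $9/4$ in the limit. My first attempt is to set $\beta=1/2$-type values, where the offline optimum would combine both agents when the big one appears. The factor $9/4=(3/2)^2$ suggests two balanced stages, each losing a factor $3/2$. Once the parameters are fixed, the rest is routine: verify via the dynamic program that every online branch achieves at most $\tfrac{4}{9}OPT+O(\eps)$, and check that each offline optimal set has size at most two.
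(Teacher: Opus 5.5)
Your proposal leaves the one non-routine step undone. It never exhibits an instance, and the template you do write down cannot give more than $2$. Write $A=a(1-\beta)$, $B=1-\gamma$ and $C=(1-\beta-\gamma)(1+a)$ for the values of hiring agent $1$ alone, agent $2$ alone, and both. The prophet earns at most $p\max\{A,B,C\}+(1-p)A$. The online policy earns at least $\max\{pB,\,(1-p)A+p\max\{A,C\}\}\ge\max\{pB,A\}$. If the prophet's maximum is $A$ or $C$, the online policy already matches it. Otherwise the prophet gets $pB+(1-p)A\le 2\max\{pB,A\}$. So every instance of this two-stage shape has ratio at most $2$, exactly as in the classical prophet inequality.

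The heuristic $9/4=(3/2)^2$ does not point to a construction; in the paper $9/4$ arises as $1+\frac54$, not as a product. Your fallback (``add a third agent, e.g.\ a prophet-style tier'') is left unspecified. The natural three-agent attempt also stalls at $2$: the four-agent instance described next, with agent $3$ removed, has prophet value tending to $1+\frac12\cdot\frac32+\frac12\cdot\frac12=2$ against online value $1$.

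The missing idea is to force the online principal to commit in advance to one of \emph{three} mutually exclusive alternatives, all of expected value $1$, while the prophet can switch between them ex post. The paper uses four agents:
agent $1$, deterministic, with $(f_1,c_1)=(1/2,1/M)$;
agent $2$, active with probability $1/2$, with $(f_2,c_2)=(1,2/M)$, so hiring agent $1$ and then agent $2$ when active yields about $\frac12\cdot\frac12+\frac12\cdot\frac32=1$;
agent $3$, deterministic, with $(M,M-1)$;
agent $4$, active with probability $1/M$, with $(M^2,M^2-M)$.

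Agents $3$ and $4$ have rate $c_i/f_i=1-1/M$. Your multiplicative externality is exactly what enforces exclusivity: any set of working agents containing $3$ or $4$ plus some other agent requires $\sum_i\alpha_i>1$. Hence only $\{1\},\{2\},\{3\},\{4\},\{1,2\}$ can be profitable. This also settles the $2$-demand case, because any working agent still needs $\alpha_i\ge c_i/f_i$, so free-riding cannot help.

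Every online policy then earns at most $1$ in the limit. The prophet earns $1$ from agent $4$. When agent $4$ is inactive, it additionally earns $\frac12\cdot\frac32+\frac12\cdot 1=\frac54$ by choosing between $\{1,2\}$ and $\{3\}$ after seeing agent $2$, for a total of $9/4$ as $M\to\infty$. The extra $\frac14$ beyond the factor $2$ comes precisely from this ex-post choice between two alternatives the online policy must commit to in advance. Without a construction realizing it, your argument does not reach $9/4$.
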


In the standard argument for the lower bound of $2$ in prophet inequalities, the gambler has to choose a-priori between receiving a reward of $1$ with probability $1$ (taking the first agent/box) and a reward of $1/\eps$ with probability $\eps$ (the second agent/box).
While the prophet, which observes the realization can target the high reward whenever it is available.

We augment this construction by adding two more agents \emph{at the beginning} of the process, where the first is deterministic (i.e., $p_1=1$), and the other is stochastic.
We carefully choose the activation probabilities, rewards, and costs so  that the principal must pick, a-priori, one of  three mutually exclusive alternatives, all with the same expected value: 
(i) incentivize agent $1$ to exert effort, hoping that agent $2$ will be active as well, 
(ii) skip $1$ and $2$, and incentivize agent $3$, and
(iii) skip $1,2$ and $3$, and incentivize $4$, if active.
The prophet, which can observe whether agents $2$ and $4$ are active or not, can guarantee the maximum of the three.
We bring the full details in \Cref{sec:add_lower_bound}.

\newcommand{\OPT}{S^\star}
\newcommand{\cheapAgents}{L}
\newcommand{\OPTbudget}{L^\star}
\newcommand{\smallAgents}{X}
\newcommand{\budgetHalfDefinition}{\OPTbudget \in \argmax_{S \subseteq A} f(S) \;\text{ s.t. }\; \sum_{i \in S} \frac{c_i}{\marg{i}{S}} \le \frac12}

\section{\boldmath Constant Competitive Ratio for Submodular $f$}\label{sec:positive_SM}

In this section we design an online policy which achieves a constant
competitive ratio, compared to the offline optimum, when the reward function
$f$ is submodular.  Notably, our algorithm achieves the same guarantees even
when the arrival order is unknown a-priori and chosen adversarially.

\begin{theorem}\label{thm:SM_const_CR}
When $f$ is submodular, there exists an online policy which achieves a
competitive ratio of $174$.
\end{theorem}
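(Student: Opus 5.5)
We prove the theorem by splitting the benchmark into a ``large single agent'' part and a ``team of small agents'' part, and running one of two policies, each chosen with constant probability. For the decomposition, let $\OPT(A)$ be an optimal offline set for the realization $A$. By \Cref{lem:gSA}, $g$ is subadditive. So for any partition of $[n]$ into agents that are individually valuable (roughly, $f_i - c_i$ is large compared to $OPT$) and the rest, we get $OPT \le \E[\max_{i\in A} (f_i-c_i)^+] + \E[g(\OPT(A)\cap \smallAgents)]$, where $\smallAgents$ is the set of small agents. The first term is handled by the single-choice prophet inequality: offer $\alpha_i = c_i/f_i$ to the first active agent whose value $f_i-c_i$ exceeds the Samuel-Cahn threshold, and offer nothing afterwards. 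This policy is safe and loses a factor $2$ (as in \Cref{sec:similarities}). The rest of the proof handles the small agents.

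\textbf{Reducing to a budgeted reward benchmark.} The standard offline step is to take a subset of $\OPT$ with total payment at most $1/2$ that keeps a constant fraction of $g(\OPT)$. This defines $\OPTbudget$, a maximizer of $f(S)$ over $S\subseteq A\cap\smallAgents$ subject to $\sum_{i\in S} c_i/\marg{i}{S}\le \frac12$, and we aim to compete with $\E[f(\OPTbudget)]$. Let $\mu = \E[f(\OPTbudget)]$. The function $A \mapsto f(\OPTbudget(A))$ is a maximum of XOS functions over a downward-closed family, so it is self-bounding. Because every agent in $\smallAgents$ has small singleton value, a variance bound together with Cantelli's inequality (\Cref{fact:cantelli}) gives $f(\OPTbudget)\ge \mu/2$ with constant probability. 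Call this the good event.

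\textbf{Pricing policy.} Set prices $q_i = \beta\sqrt{c_i\,\mu}$ for a constant $\beta$, following \cite{duetting2022multi}. Keep a hired set $T$. When agent $i\in\smallAgents$ arrives active, hire it if $\marg{i}{T}\ge q_i$ and $f(T\cup\{i\})\le \gamma\mu$, and offer $\alpha_i = 2c_i/\marg{i}{T\cup\{i\}}$. We then need three properties.
\begin{itemize}
\item[(a)] \emph{Reward.} On the good event, either the cap is reached, so $f(T)\ge\Omega(\mu)$, or every rejected agent of $\OPTbudget$ has marginal below its price. In the latter case submodularity gives $f(T\cup \OPTbudget)\le f(T)+\sum_{i\in\OPTbudget} q_i$. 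By \Cref{lem:budgeted_sqrt_cost} with $\rho=1/2$, $\sum_{i\in\OPTbudget} q_i\le \beta\sqrt{\mu\, f(\OPTbudget)/2}$, which is a small constant times $\mu$ when $f(\OPTbudget)\approx\mu$. Hence $f(T)=\Omega(\mu)$.
\item[(b)] \emph{Payment.} When $i$ was hired, $\marg{i}{T^{(i-1)}}\ge q_i$, so $c_i/\marg{i}{T^{(i)}}\le c_i/q_i = \sqrt{c_i}/(\beta\sqrt\mu)$. Summing and using Cauchy--Schwarz gives $\sum_i\sqrt{c_i}\le \sum_i q_i/(\beta\sqrt{\mu})$. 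Since each $q_i$ is at most the marginal at the time of hiring, $\sum_i q_i\le f(T)\le\gamma\mu$, so the total payment is at most $2\gamma/\beta^2$. Choosing $\beta$ large makes this at most $1/2$.
\item[(c)] \emph{Equilibrium.} Let $W$ be the equilibrium set of working agents. A hired agent $i\notin W$ must have $\marg{i}{W}<\frac12\marg{i}{T^{(i-1)}}$. Summing over the shirkers in arrival order and using submodularity gives $f(T)-f(W)\le \frac12 f(T)$, hence $f(W)\ge f(T)/2$.
\end{itemize}

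\textbf{Main obstacle.} The hardest step is (c) combined with the decomposition. Showing $f(T\setminus W)$-type losses are bounded needs the ordering argument: each shirker's marginal against $W$ has to be compared with its marginal at the time of hiring, even though $W$ may include agents hired later. I would first try to bound $f(W\cup T)-f(W)$ by telescoping over $T\setminus W$ in arrival order, using $\marg{i}{W\cup T^{(i-1)}}\le \marg{i}{W}$. A second obstacle is defining $\smallAgents$ so that the concentration bound holds while the large agents are still covered by the prophet-inequality part. The constant $174$ comes from multiplying the randomization factor, the Cantelli probability, the factors $1/2$ from (c) and the payments, and the $g$-to-$\OPTbudget$ reduction loss.
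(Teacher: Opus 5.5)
Your skeleton is the paper's: prices $q_i\propto\sqrt{c_i\mu}$, a cap on $f(T)$, doubled contracts, the telescoping bound $f(W)\ge f(T)/2$, and concentration of the XOS function $h$ via Efron--Stein and Cantelli. Your reason for $h$ being XOS is also correct: the budget-feasible family is downward closed under submodularity.

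\textbf{Main gap.} It is the step you defer as your ``second obstacle.'' Your variance bound is $\Var(h)\le(\max_{i\in X}f_i)\cdot\mu$, where $\mu$ is the budgeted benchmark over $X$ itself. So Cantelli needs $\max_{i\in X}f_i\le\epsilon\mu$. The cap case of (a) needs the same thing, since it only gives $f(T)>\gamma\mu-\max_{i\in X}f_i$. But $X$ must be fixed by a threshold before $\mu$ is defined, and a threshold tied to the overall benchmark does not stop $\mu$ from being much smaller. The paper resolves this with a case split. Let $\mu^\star=\E[f(L^\star)]$ be the budgeted benchmark over all agents with $c_i/f_i\le\frac12$. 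It takes $X=\{i: c_i/f_i\le\frac12,\ f_i\le\mu^\star/14\}$ and $\nu^\star=\E[f(X^\star)]$.
\begin{itemize}
\item If $\nu^\star\ge\frac{13}{14}\mu^\star$, then $f_i\le\nu^\star/13$ on $X$. Hence $\Var(h)\le(\nu^\star)^2/13$, and the cap case gives $f(T)>\frac{\nu^\star}{5}-\frac{\mu^\star}{14}\ge\frac{4}{35}\mu^\star$.
\item If $\nu^\star<\frac{13}{14}\mu^\star$, the large agents carry more than $\mu^\star/14$, and hiring a single large agent suffices.
\end{itemize}

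\textbf{Second gap.} It is exactly that single-agent step. Your inequality $OPT\le\E[\max_{i\in A}(f_i-c_i)^+]+\E[g(S^\star(A)\cap X)]$ does not follow from subadditivity of $g$. Subadditivity only gives $g(S^\star\setminus X)\le\sum_{i\in S^\star\setminus X}g(\{i\})$, and $S^\star\setminus X$ may contain several large agents. For instance, two deterministic agents with $f_1=f_2=\frac12$ and tiny costs, both classified as large, already violate it. The paper instead uses the ex-ante LP $\max\sum_i z_i\,g(\{i\})$ s.t.\ $\sum_i z_i\le 1$ over $L\setminus X$.
\begin{itemize}
\item If $\sum_i p_i\le 1$, its value is at least $\sum_i\Pr[i\in L^\star]\,g(\{i\})\ge\E[g(L^\star\setminus X)]\ge\frac12(\mu^\star-\nu^\star)$.
\item If $\sum_i p_i>1$, its value is at least $\min_i g(\{i\})\ge\frac12\min_i f_i>\mu^\star/28$. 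This is where the lower threshold on large agents is used.
\end{itemize}
An OCRS then secures half of the LP value.

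\textbf{Smaller issues.}
\begin{itemize}
\item In (b) you cannot take $\beta$ large. On the event $f(X^\star)\ge\mu/2$, your bound in (a) only gives $\sum_{i\in X^\star}q_i\le\beta f(X^\star)$, which forces $\beta<1$. The fix is a small cap, $\gamma\le\beta^2/4$; the paper uses $\beta=2/\sqrt5$, $\gamma=\frac15$, and the event $f(X^\star)\ge\frac58\nu^\star$.
\item With free constants your sketch gives only $O(1)$, not $174$. In the paper $174=3(56+2)$: the factor $3$ comes from Lemma~\ref{lem:decomposition}, the $+2$ from mixing with the $2$-competitive unit-demand policy, and $56$ from the budgeted problem.
\item Step (c) is not the hard part. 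The telescoping argument you sketch is exactly the paper's Lemma~\ref{lem:smallAgents_lowerbound_fW}.
\end{itemize}
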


To this end, we reduce the problem into two sub-problems, while maintaining a
constant competitive ratio: (i) Designing an $O(1)$-competitive policy for the
best single-agent contract. (ii) Designing an $O(1)$-competitive policy to an
online budgeted contracting problem (see below).
While problem (i) can be solved using standard prophet inequality algorithms for the canonical single-item setting \cite{krengel1977semiamarts,samuel1984comparison,kleinberg2012matroid}, as we show in \Cref{sec:unit-demand}, problem (ii) is more challenging and requires new ideas.
In \Cref{subsec:budgeted_online,sec:good_event} we present an online policy that achieves a constant competitive ratio for the budgeted problem, and conclude with the proof of \Cref{thm:SM_const_CR}.

\paragraph{A budgeted variant.}
We define an online budgeted contracting problem, which is identical to the
online contracting problem, except that the principal is trying to maximize the
value of $f(S)$ (where $S$ is the best pure Nash equilibrium induced by the
payments), subject to total payments not exceeding $1/2$.  For a fixed set of
active agents $A$, we denote the offline optimum by $L^\star$.
\begin{equation}\label{eq:budget_half}
\budgetHalfDefinition
\end{equation}
Observe that for submodular $f$ it is without loss of generality to only
consider agents from $L = \{i \in [n] \mid c_i/f(\{i\}) \le 1/2\}$.  This is
because agents outside $L$ can only be incentivized using a contract exceeding
$1/2$, which would violate the budget constraint.

The reduction from the online contracting problem to the online budgeted contracting problem is formalized in the following proposition.
We defer the proof of \Cref{prop:reduction_to_budget} to \Cref{subsec:decomposition}.
\begin{restatable}{proposition}{reductiontobudget}\label{prop:reduction_to_budget}
    Any $\gamma$-competitive policy for the online budgeted contracting problem, implies a $3(\gamma+2)$-competitive policy for the online contracting problem.
\end{restatable}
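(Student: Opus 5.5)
We split the offline benchmark into a ``single big agent'' part and a ``budgeted team'' part. For each realization $A$, let $S^\star \in \argmax_{S\subseteq A} g(S)$, where $g$ is the objective from the model section. Our policy tosses a coin with carefully chosen probabilities, or simply compares two expected values and runs the better sub-policy. The first sub-policy is the single-agent prophet policy of \Cref{sec:unit-demand}. The second is the given $\gamma$-competitive budgeted policy, with its contracts kept as is. With total payments at most $1/2$, the principal keeps $(1-\sum\alpha_i)f(S)\ge f(S)/2$. Hence the budgeted sub-policy earns at least $\frac{1}{2\gamma}\E[f(L^\star)]$.

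The key step is an offline decomposition inequality. For every $A$, we aim to show
\[
g(S^\star)\;\le\; \max_{i\in S^\star}(f_i-c_i) \;+\; 2\,f(L^\star)
\]
or a similar bound with small constants. Write $\beta_i=c_i/\marg{i}{S^\star}$, so that $\sum_{i\in S^\star}\beta_i\le 1$. Single out the agents with $\beta_i>1/2$; there is at most one such agent, call it $j$. Let $S'=S^\star\setminus\{j\}$.

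By \Cref{lem:gSA}, $g$ is subadditive. Therefore $g(S^\star)\le g(\{j\})+g(S')$, with $g(\{j\})=f_j-c_j$. For the rest, split $S'$ into two parts $S_1,S_2$ by a greedy partition, each with $\sum\beta_i\le 1/2$ computed against $S^\star$; this may take up to three parts. Since marginals only grow on subsets by submodularity, each part $T$ satisfies $\sum_{i\in T} c_i/\marg{i}{T}\le 1/2$. Each part is therefore feasible for the budget problem, so $g(T)\le f(T)\le f(L^\star)$. Subadditivity of $g$ again gives $g(S^\star)\le (f_j-c_j)+3f(L^\star)$.

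Taking expectations, $OPT\le \E[\max_i (f_i-c_i)^+]+3\E[f(L^\star)]$. The first term is at most $2\,ALG_1$ by the prophet inequality. The second is at most $3\cdot 2\gamma\,ALG_2$, using the budgeted guarantee and the factor-$2$ utility loss. So $OPT\le 2ALG_1+6\gamma ALG_2\le (6\gamma+2)\max(ALG_1,ALG_2)$, which is within the claimed $3(\gamma+2)$ after adjusting constants. If the constants come out slightly off, I would refine the partition (parts with $\beta$-sum at most $1/2$) or mix the policies randomly with weights proportional to the coefficients. Mixing gives $OPT\le (\text{sum of coefficients})\cdot ALG$, and this sum can be made exactly $3\gamma+6$ by using the bound $3\E[f(L^\star)]\le 3\gamma\cdot(\ldots)$ appropriately.

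The main obstacle I expect is the partition step and the validity of $g$-subadditivity with the right marginals. We need the parts to have $\beta$-sum at most $1/2$ when recomputed against the part itself, which submodularity gives. We also need the budget-violating agent handled cleanly. The greedy partition of items with weights at most $1/2$ and total at most $1$ into bins of capacity $1/2$ needs at most $3$ bins, which is the source of the factor $3$.
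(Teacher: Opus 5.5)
Your decomposition is sound and close to the paper's. Both arguments use subadditivity of $g$, at most one agent with payment share above $1/2$, a greedy three-way partition whose parts are budget-feasible by submodularity, and then a mixture of the single-agent prophet policy with the budgeted policy.

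The gap is the final constant. You read $\gamma$-competitiveness as a reward guarantee, $\E[f(W)]\ge \E[f(L^\star)]/\gamma$, so the budgeted branch earns utility at least $\E[f(L^\star)]/(2\gamma)$. Your bound $OPT\le 2ALG_1+6\gamma\,ALG_2$ then gives ratio $6\gamma+2$. But $6\gamma+2\le 3(\gamma+2)$ only when $\gamma\le 4/3$, so the sentence ``within the claimed $3(\gamma+2)$ after adjusting constants'' is false.

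Neither of your repairs helps. Mixing with weights proportional to the coefficients yields exactly their sum, $6\gamma+2$. Refining the partition also does nothing: weights such as $0.3,0.3,0.3$ genuinely need three bins of capacity $1/2$. The real loss is that you discard the factor $1-\sum_i\beta_i$ when you bound each $g(T_k)$ by $f(T_k)$, and then pay a factor $2$ again when converting reward into utility.

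The missing idea is to keep that factor. Consider three cases.
\begin{itemize}
\item If some $\beta_j>1/2$, then $\sum_{i\in S'}\beta_i<1/2$. So $S'$ is a single feasible part, and $g(S^\star)\le (f_j-c_j)+f(L^\star)$. You needed no three-way split here.
\item If no such $j$ exists and $\sum_i\beta_i\le 1/2$, then $S^\star$ is itself feasible and $g(S^\star)\le f(L^\star)$.
\item Otherwise $g(S^\star)=(1-\sum_i\beta_i)f(S^\star)<\frac12 f(S^\star)\le\frac12\sum_k f(T_k)\le\frac32 f(L^\star)$, by subadditivity of $f$.
\end{itemize}
Hence $g(S^\star)\le\max_i(f_i-c_i)^+ +\frac32 f(L^\star)$ pointwise, so $OPT\le 2ALG_1+3\gamma\,ALG_2$. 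Mixing with probabilities $2/(3\gamma+2)$ and $3\gamma/(3\gamma+2)$ is then $(3\gamma+2)$-competitive, which implies the statement. This is essentially the case analysis the paper uses to prove $g(S^\star)\le 3\max\{g(L^\star),\max_i g(\{i\})^+\}$; there the case $\sum_i\beta_i>1/2$ is handled through $g(S^\star)<\frac12 f(S^\star)$.

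Alternatively, note the paper's convention, visible in Proposition~\ref{prop:muStar_smallAgentCase} and in its own mixing computation. There a $\gamma$-competitive budgeted policy guarantees expected principal utility, not just reward, of at least $\E[f(L^\star)]/\gamma$. Under that convention your original inequality already gives $OPT\le 2ALG_1+3\gamma\,ALG_2$.
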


\subsection{The Online Budgeted Problem}\label{subsec:budgeted_online}

Let $\mu^\star = \E[f(L^\star)]$ be the expected offline optimum for the
budgeted problem (\Cref{eq:budget_half}).  In order to design a competitive
policy with respect to this benchmark, we look at the same problem but with a
restricted set of agents: those whose individual reward is no more than $\mu^\star / 14$.
More formally, for a realized set of agents $A$, let
$X = \{ i \in A \mid c_i/f_i \le 1/2 \text{ and } f_i \le \mu^\star /14\}$.  We
denote by $X^\star$ the optimal set for the offline problem, while restricting
attention to agents in $X$.
\begin{equation}\label{eq:_Xstar}
    X^\star \in \argmax_{S \subseteq X} f(S) \text{ s.t. } \sum_{i \in S} \frac{c_i}{\marg{i}{S}} \le \frac{1}{2}.
\end{equation}

Let $\nu^\star = \E[f(X^\star)]$.  We use one of two algorithms, according to
the ratio $\nu^\star / \mu^\star$ (observe that it is upper bounded by $1$).

If $\nu^\star / \mu^\star < 13/14$, we show in \Cref{subsec:budget_single_agent} that it is enough to pick at most one agent from $L \setminus X$.  We use standard prophet inequality techniques,
formulating an ex-ante relaxation of this problem as an LP, and approximating this stronger benchmark with an OCRS.

If $\nu^\star / \mu^\star \ge 13/14$, a single agent may not be sufficient to
well approximate $\mu^\star$, and in \Cref{subsec:budget_multi_agent} we
present an algorithm for online team contracting, under budget constraints.

\subsubsection{Incentivizing a Single Agent}\label{subsec:budget_single_agent}

In this section we consider the problem of online hiring a single agent from
the following set
$$
\max_{i \in L \setminus X} g(\{i\}) \qquad \text{where }
L \setminus X = \{i \in A \mid c_i/f_i \le 1/2 \text{ and } f_i > \mu^\star / 14\}$$
We consider the ex-ante relaxation of the offline optimum.  It is well-known
that there is an OCRS which guarantees $1/2$ of the optimal solution to the LP
below.
\begin{align}\label{eq:large_agent_LP}
\max \quad & \sum_{i \in L \setminus X} z_i \, g(\{i\})
\qquad \text{s.t.}
\qquad \sum_{i \in L \setminus X} z_i \le 1
\quad \text{and} \quad z_i \ge 0 \quad \forall i \in L \setminus X
\end{align}

We show that if $\nu^\star < \frac{13}{14}\mu^\star$, hiring using the algorithm
of \cite{alaei2014bayesian} achieves an $O(1)$-CR for the budgeted problem.

\begin{proposition}\label{prop:muStar_largeAgentCase}
If $\nu^\star < \frac{13}{14}\mu^\star$, then running the OCRS of
\cite{alaei2014bayesian} w.r.t.~the optimal solution of
\Cref{eq:large_agent_LP} induces a single agent whose payment is at most one-half, and whose expected reward is at
least $\mu^\star / 56$.
\end{proposition}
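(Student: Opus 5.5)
We prove the proposition by splitting the budgeted benchmark $\mu^\star$ into a contribution from the small agents $X$ and one from the large agents $L\setminus X$. We then lower-bound the LP value in \Cref{eq:large_agent_LP} by the second part, and conclude with the $1/2$-guarantee of the OCRS of \cite{alaei2014bayesian}.

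\textbf{Payment.} The payment claim is immediate. The OCRS selects at most one agent $i\in L\setminus X$, and we offer it $\alpha_i=c_i/f_i\le 1/2$. This agent is alone and $c_i>0$, so it works, and the payment never exceeds one-half.

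\textbf{Splitting $f(L^\star)$.} Fix a realization $A$ and write $L^\star = (L^\star\cap X)\cup(L^\star\setminus X)$. By submodularity, marginals only increase when passing to a subset. Hence $L^\star\cap X$ is feasible for \Cref{eq:_Xstar}: its budget sum is at most that of $L^\star$, which is at most $1/2$. Therefore $f(L^\star\cap X)\le f(X^\star)$. For the large part, the same monotonicity of marginals gives $\sum_{i\in L^\star\setminus X} c_i/f_i \le 1/2$. By subadditivity,
\[
f(L^\star)\le f(X^\star)+\sum_{i\in L^\star\setminus X} f_i .
\]
Taking expectations gives $\E[\sum_{i\in L^\star\setminus X} f_i]\ge \mu^\star-\nu^\star> \mu^\star/14$.

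\textbf{Converting to single-agent values.} For each $i\in L\setminus X$ we have $c_i/f_i\le1/2$, so $g(\{i\})=f_i-c_i\ge f_i/2$. The main obstacle is that $L^\star\setminus X$ may contain several agents, while the LP allows total ex-ante mass $1$. To handle this, I would bound the number of large agents in $L^\star\setminus X$. Each such agent has $f_i>\mu^\star/14$ and $f\le 1$, which alone does not bound the count. Instead, I would use the reward-weighted budget: for agents in $L^\star$, $\sum c_i/\marg{i}{L^\star}\le 1/2$ together with $\sum_{i\in L^\star}\marg{i}{L^\star}\le f(L^\star)$.

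The cleaner route is a direct LP construction. Set $z_i=\Pr[i\in L^\star\setminus X]/K$, where $K$ bounds $\E|L^\star\setminus X|$, and this feasible $z$ has value at least $\frac{1}{2K}\E[\sum_{i\in L^\star\setminus X} f_i]$. I expect $K\le 2$ to be obtainable by comparing with $\mu^\star$: $\E[\sum_{i\in L^\star\setminus X} f_i]\ge (\mu^\star/14)\,\E|L^\star\setminus X|$. This is combined with an upper bound of the form $\E[\sum_{i\in L^\star\setminus X} f_i]\le$ a constant times $\mu^\star$, which follows from $\sum c_i/f_i$-type arguments or by simply using the top large agent. The alternative I would try first is the top-agent version. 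Let $z_i=\Pr[i=\arg\max_{j\in L^\star\setminus X} f_j]$, which satisfies $\sum_i z_i\le1$. Its LP value is at least $\frac12\E[\max_{j\in L^\star\setminus X} f_j]$. It then suffices to show that $\E[\max_{j\in L^\star\setminus X} f_j]\ge \mu^\star/14$ up to a constant; equivalently, that large agents beyond the top one are covered by the slack between $\mu^\star-\nu^\star$ and the constant $56$. Pinning this step down is where I expect the real work.

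\textbf{Conclusion.} Once the LP value is shown to be at least $\mu^\star/28$, the OCRS guarantees expected $g$ value, and hence expected reward, at least half of that, namely $\mu^\star/56$.
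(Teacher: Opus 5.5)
Your payment argument, the bound $\E\bigl[\sum_{i\in L^\star\setminus X} f_i\bigr]\ge\mu^\star-\nu^\star>\mu^\star/14$, and $g(\{i\})\ge f_i/2$ on $L\setminus X$ are all correct. The proof has a gap at the step that matters: you never exhibit a feasible LP solution of value at least $\mu^\star/28$, and the route you plan cannot supply one.

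Bounding $K=\E|L^\star\setminus X|$ by a constant would require $\E\bigl[\sum_{i\in L^\star\setminus X} f_i\bigr]=O(\mu^\star)$, which is false for submodular $f$. Take $N$ always-active agents, and no others. Each covers a common element of weight $1/2$ plus a private element of weight $\delta$, with $c_i=\delta^2$ and $N\delta\le 1/2$. In the full set every marginal is $\delta$ and every payment is $\delta$, so $L^\star$ is the full set. Then $\mu^\star\le 1$, $X=\emptyset$, every agent is large, and $|L^\star\setminus X|=N$ is unbounded. Even $K\le 2$ would give only LP value $\mu^\star/56$ and final reward $\mu^\star/112$. The top-agent alternative is also left unproven, and the natural proof of it goes through a correlation-gap bound that would lose a further constant factor.

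The missing observation is that you do not need to control how many large agents $L^\star$ contains. Each of them is individually worth $g(\{i\})\ge f_i/2>\mu^\star/28$, so any feasible $z$ with total mass $1$ already suffices. Your decomposition is needed only when the total mass is below $1$.

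Concretely, your own inequality $\E\bigl[\sum_{i\in L^\star\setminus X} f_i\bigr]\ge(\mu^\star/14)\,\E|L^\star\setminus X|$ finishes the job if you set $K=\max\{1,\E|L^\star\setminus X|\}$ instead of bounding $K$ from above. Then $z_i=\Pr[i\in L^\star\setminus X]/K$ is feasible, including $z_i\le p_i$. Its value is at least $\tfrac12(\mu^\star-\nu^\star)>\mu^\star/28$ when $K=1$. When $K>1$ it is more than $\tfrac{1}{2K}\cdot\tfrac{\mu^\star}{14}\cdot K=\mu^\star/28$.

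The paper makes the same case split, but on $P=\sum_{i\in L\setminus X}p_i$. If $P\le 1$, it uses $\vec p$, which dominates $\Pr[i\in L^\star]$, together with subadditivity of $g$ to get $\E[g(L^\star\setminus X)]\ge\tfrac12\E[f(L^\star\setminus X)]\ge\tfrac12(\mu^\star-\nu^\star)$. If $P>1$, it uses $\vec p/P$, whose value exceeds $\mu^\star/28$ simply because every $f_i>\mu^\star/14$.
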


The correctness of \Cref{prop:muStar_largeAgentCase} follows immediately from
the $2$-competitiveness guarantee of \cite{alaei2014bayesian}, and the lemma
below.

\begin{lemma}\label{lem:largeAgent_LP_value}
Assume $\nu^\star < \frac{13}{14}\mu^\star$.  If $z^\star_i$ is the optimal
solution to (\ref{eq:large_agent_LP}), then
$\sum_{i \in L\setminus X}z^\star_i \cdot g(\{i\}) \ge \mu^\star / 28$.
\end{lemma}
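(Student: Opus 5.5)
Let $L^\star$ be the budgeted offline optimum from \eqref{eq:budget_half}. The plan is to split $f(L^\star)$ into the part coming from ``small'' agents in $X$ and the part coming from ``large'' agents in $L\setminus X$. We will show that when $\nu^\star<\frac{13}{14}\mu^\star$ the large part has expected value at least $\mu^\star/14$. We then build a feasible LP solution $z_i=\Pr[i\in L^\star\cap(L\setminus X)]$, suitably scaled, whose value is comparable.

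\textbf{Step 1 (decomposition).} Fix a realization $A$. Write $L^\star=Y\cup Z$ with $Y=L^\star\cap X$ and $Z=L^\star\setminus X\subseteq L\setminus X$. Since $f$ is submodular, marginals only grow on subsets: $\marg{i}{Y}\ge\marg{i}{L^\star}$. Hence
\[
\sum_{i\in Y}c_i/\marg{i}{Y}\le\sum_{i\in L^\star}c_i/\marg{i}{L^\star}\le 1/2,
\]
so $Y$ is feasible for \eqref{eq:_Xstar} and $f(Y)\le f(X^\star)$. Subadditivity gives $f(L^\star)\le f(X^\star)+f(Z)$. Taking expectations, $\E[f(Z)]\ge\mu^\star-\nu^\star>\mu^\star/14$.

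\textbf{Step 2 (passing to singletons).} Apply the same argument to each $i\in Z$: we get $c_i/\marg{i}{\{i\}}\le c_i/\marg{i}{L^\star}$, so $c_i/f_i\le 1/2$ and $g(\{i\})=f_i-c_i\ge f_i/2$. Subadditivity gives $f(Z)\le\sum_{i\in Z}f_i\le 2\sum_{i\in Z}g(\{i\})$. This bound is weak if $|Z|$ is large, but I expect $|Z|$ to be small. Every $i\in Z$ has $f_i>\mu^\star/14$, and $\sum_{i\in Z}\marg{i}{L^\star}\le f(L^\star)\le 1$. The cleaner route is to use the budget instead. Each $i\in Z$ satisfies $c_i\le \frac12\marg{i}{L^\star}$, so $g(\{i\})\ge f_i/2$, and we may keep only the best single element of $Z$ if needed.

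\textbf{Step 3 (the main obstacle).} The LP constraint $\sum z_i\le1$ needs $\E|Z|\le$ const, which is not automatic. I would try to handle it as follows. Let $i^\dagger\in Z$ be the agent with largest $f_i$, and set $z_i=\Pr[i=i^\dagger]$, which is feasible with total at most $1$. The LP value is then at least
\[
\E\bigl[g(\{i^\dagger\})\bigr]\ge\tfrac12\E\bigl[f_{i^\dagger}\bigr].
\]
To compare $f_{i^\dagger}$ with $f(Z)$, I cannot simply use $|Z|$. I would instead bound the number of large agents a set can afford: $|Z|\cdot\mu^\star/14<\sum_{i\in Z}f_i$, but this is not bounded by $f(Z)$ in general. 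If a pure single-agent comparison fails, the fallback is a more careful choice of cutoffs.

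We can also note $\mu^\star\le\E[\max f]$-type bounds. Combining with $\E f(Z)\ge\mu^\star/14$ and the factor $2$ from $g\ge f/2$ yields the target $\mu^\star/28$. The place I expect real work is justifying $\E[f_{i^\dagger}]\gtrsim\E[f(Z)]$, or otherwise replacing the single maximizer with a fractional LP solution of comparable value. This step carries the constant $28=2\cdot14$, which suggests the authors obtain $\E[\max_{i\in Z}f_i]\ge\E f(Z)-\nu^\star$-style bounds with no further loss.
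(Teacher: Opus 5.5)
Your Steps 1 and 2 are correct and contain the paper's ingredients: $\E[f(Z)]\ge\mu^\star-\nu^\star>\mu^\star/14$, and $\sum_{i\in Z}g(\{i\})\ge\frac12\sum_{i\in Z}f_i\ge\frac12 f(Z)$. But the proof is not finished, and the route you propose in Step 3 cannot finish it. The inequality you hope for, $\E[f_{i^\dagger}]\gtrsim\E[f(Z)]$ with no loss, is false. Take additive $f$ with no small agents, and $k\ge 13$ agents with $f_i=1/k$, $c_i=1/(2k^2)$ and $p_i=13/k$. Then $L^\star=Z=A$, $\nu^\star=0$, and $\mu^\star=13/k$. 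Every agent is large because $1/k>\mu^\star/14$. Yet $\E[f(Z)]=13/k$ while $\E[f_{i^\dagger}]\le 1/k$. So comparing the single maximizer to $f(Z)$ loses a factor of about $13$. Your closing claim that the bounds combine to give $\mu^\star/28$ is therefore unjustified.

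The missing idea is that when the constraint $\sum_i z_i\le 1$ binds, you never need to compare with $f(Z)$. Each large agent is individually valuable: for $i\in L\setminus X$, $g(\{i\})=f_i-c_i\ge f_i/2>\mu^\star/28$. Hence any $z$ supported on $L\setminus X$ with $\sum_i z_i=1$ already has LP value above $\mu^\star/28$. The paper splits on $P=\sum_{i\in L\setminus X}p_i$. If $P\le 1$, then $z=\vec p$ is feasible, and its value is at least $\sum_{i}\Pr[i\in Z]\,g(\{i\})\ge\frac12\E[f(Z)]>\mu^\star/28$. This is just your Step 2, and your worry about $\E|Z|$ disappears here since $\E|Z|\le P\le 1$. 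If $P>1$, take $z=\vec p/P$ and apply the per-agent bound. In your own notation you could equally set $Q=\E|Z|$ and $z_i=\Pr[i\in Z]/\max\{1,Q\}$. For $Q\le 1$, Step 2 gives the bound; for $Q>1$, the value is $\frac1Q\sum_i\Pr[i\in Z]\,g(\{i\})>\mu^\star/28$. All of these choices satisfy $z_i\le p_i$, which the ex-ante relaxation needs for the OCRS to apply.
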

\begin{proof}
Recall that $p_i$ is the probability that agent $i$ is active.  Let
$\vec{p} \in [0,1]^{|L\setminus X|}$ denote the vector of probabilities of the
agents in $L\setminus X$.  We consider two cases.

If $\sum_{i \in L\setminus X} p_i \le 1$, then $\vec{p}$ is a feasible solution
to (\ref{eq:large_agent_LP}).
\begin{align*}
    \sum_{i \in L\setminus X}z^\star_i \cdot g(\{i\})
    &\ge
    \sum_{i \in L\setminus X}p_i \cdot g(\{i\}) \\
    &\ge
    \sum_{i \in L\setminus X}\Pr[i \in L^\star] \cdot g(\{i\}) && (\text{whenever $i \in L^\star$, agent $i$ is active}) \\
    &\ge
    \E_{A}[g(L^\star \setminus X)] && (\text{subadditivity of $g$}) \\
    &\ge
    \frac12 \E_{A}[f(L^\star \setminus X)] && (L^\star \setminus X\text{ is a feasible solution to (\ref{eq:_Xstar})}) \\
    &\ge
    \frac12(\E_{A}[f(L^\star)] - \E_{A}[f(L^\star \cap X)]) && (\text{subadditivity of $f$}) \\
    &\ge
    \frac12(\mu^\star - \nu^\star) && (\text{optimality of $X^\star$}) \\
    &>
    \frac12\cdot\frac{\mu^\star}{14}
    =
    \frac{\mu^\star}{28} && \bigg(\nu^\star < \frac{13}{14} \mu^\star \bigg)
\end{align*}
Otherwise, let $P = \sum_{i \in L\setminus X} p_i$.  By assumption $P>1$, and
the vector $\vec{p'} = \frac{\vec{p}}{P}$ is a feasible solution to
(\ref{eq:large_agent_LP}), as $\sum_i p'_i = \sum_i \frac{p_i}{P} = 1$.
\begin{align*}
    \sum_{i \in L\setminus X}z^\star_i \cdot g(\{i\})
    \ge
    \sum_{i \in L\setminus X}p'_i \cdot g(\{i\})
    \overset{(1)}{\ge}
    \sum_{i \in L\setminus X}p'_i \cdot \frac12 \cdot f(\{i\})
    \overset{(2)}{>}
    \sum_{i \in L\setminus X}p'_i \cdot \frac{\mu^\star}{28}
    =
    \frac{\mu^\star}{28}
\end{align*}
where $(1)$ and $(2)$ use the fact that for any $i \in L \setminus X$,
$\frac{c_i}{f_i} \le \frac12$ and $f_i > \frac{\mu^\star}{14}$, respectively.
\end{proof}

\subsubsection{Incentivizing a Team of Agents}\label{subsec:budget_multi_agent}

We now address the main technical challenge of this section.  Namely, design an
online policy which only hires agents from
$X = \{i \in A \mid c_i/f_i \le 1/2,\ f_i \le \mu^\star / 14\}$, while adhering
to the budget constraint of $1/2$, and whose expected utility is at least
$c \cdot \mu^\star$, for some constant $c>0$.

Our guarantees in this section hold under two assumptions.  First,
$\nu^\star \ge \frac{13}{14}\mu^\star$ (with
\Cref{subsec:budget_single_agent} handling the complementary case).  Second, the
realized reward of the best set $X^\star \subseteq X$ (see \Cref{eq:_Xstar}) is
at least $f(X^\star) \ge \tfrac58\,\nu^\star$.
We term this event the ``good event''.  In \Cref{sec:good_event} we establish,
by applying the Efron--Stein inequality \cite{efron1981jackknife}, that the good
event occurs with probability at least $117/181$.

\begin{proposition}\label{prop:muStar_smallAgentCase}
If $\nu^\star \ge \frac{13}{14}\mu^\star$, then \Cref{alg:small_agents_SM}
achieves a competitive ratio of $55$ for the online budgeted problem.  That is,
the payments made in any realization are at most $1/2$, and it yields an
expected principal's utility of at least $\mu^\star / 55$.
\end{proposition}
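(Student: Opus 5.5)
Since \Cref{alg:small_agents_SM} is not spelled out above, we follow the description in the techniques overview. Fix prices $q_i$ in the style of \cite{duetting2022multi}, e.g.\ $q_i = \sqrt{c_i \cdot \beta\nu^\star}$ for a constant $\beta$, or $q_i = 2c_i$ plus a $\sqrt{c_i\nu^\star}$ term. On arrival of $i\in X$, hire $i$ if $\marg{i}{T^{(i-1)}} \ge q_i$ and $f(T^{(i-1)}\cup\{i\})$ stays below a cap $\tau = \Theta(\nu^\star)$. The contract is $\alpha_i = 2c_i/\marg{i}{T^{(i)}\setminus\{i\}}$. The proof has two parts: a deterministic budget bound valid in every realization, and an expected reward bound obtained by conditioning on the good event.

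\textbf{Budget.} For hired $i$ we have $\alpha_i \le 2c_i/q_i$, so $\sum_{i\in T}\alpha_i \le 2\sum_{i \in T} c_i/q_i$. With $q_i$ proportional to $\sqrt{c_i\nu^\star}$, each term is $\sqrt{c_i}/\sqrt{\nu^\star}$ up to constants. We then charge $c_i$ to the marginal. Since $c_i/f_i\le 1/2$ and $q_i\le \marg{i}{T^{(i-1)}}$, we get $c_i/q_i \le \sqrt{c_i}\,/\sqrt{\beta\nu^\star}$, and $\sqrt{c_i} \le q_i/\sqrt{\beta\nu^\star}\le \marg{i}{T^{(i-1)}}/\sqrt{\beta\nu^\star}$. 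Hence $c_i/q_i \le \marg{i}{T^{(i-1)}}/(\beta\nu^\star)$. The marginals telescope to $f(T)\le \tau + \mu^\star/14$, because the last hire adds at most $f_i\le\mu^\star/14$. Choosing $\tau$ and $\beta$ appropriately gives total payment at most $1/2$ deterministically.

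\textbf{Equilibrium.} Let $W$ be the best equilibrium. We show that playing effort with $W$ equal to the hired agents who find effort profitable loses at most half the reward. Submodularity gives that a hired-but-shirking $i$ has $\marg{i}{W} < \tfrac12\marg{i}{T^{(i-1)}}$. Summing over shirkers in order and using submodularity yields $f(T)-f(W)\le \tfrac12 f(T)$, so $f(W)\ge f(T)/2$. An equilibrium exists by the potential argument, and the best equilibrium is at least as good as the one reached from $T$ by best-response dynamics that only drop agents. I expect some care here: the dropping dynamics can keep cascading, so I would argue that the fixed point $W\subseteq T$ reached by iteratively removing violators still satisfies the inequality, since marginals only increase as agents are removed.

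\textbf{Reward on the good event.} Condition on $f(X^\star)\ge \tfrac58\nu^\star\ge \tfrac58\cdot\tfrac{13}{14}\mu^\star$. If the cap is reached, then $f(T)\ge \tau$ and we are done. Otherwise every rejected $i\in X^\star$ had $\marg{i}{T^{(i-1)}}<q_i$, so $\marg{i}{T}<q_i$. By submodularity,
\[
f(X^\star)\le f(T)+\sum_{i\in X^\star\setminus T}\marg{i}{T}\le f(T)+\sum_{i\in X^\star} q_i .
\]
By \Cref{lem:budgeted_sqrt_cost} with $\rho=1/2$, $\sum_{i\in X^\star}\sqrt{c_i}\le\sqrt{f(X^\star)/2}$. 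Therefore $\sum_{i\in X^\star} q_i \le \sqrt{\beta\nu^\star f(X^\star)/2}$, which for small $\beta$ is at most a constant fraction of $f(X^\star)$ on the good event. This forces $f(T)=\Omega(\nu^\star)$.

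Combining the parts with the probability $117/181$ of the good event and the loss from equilibrium, the utility is at least $(1-\tfrac12)\,f(W)$, which gives $\Omega(\mu^\star)$. Tuning constants should produce $55$. The main obstacle is balancing $\beta$ and $\tau$ so that the budget bound and the reward bound hold simultaneously, while also handling the subtlety that the cap check uses $f(T^{(i)})$ including the last agent.
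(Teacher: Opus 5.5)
Your architecture matches the paper's. You bound payments by charging $\sqrt{c_i}$ to the telescoping marginals $f_i(T^{i-1})$ under the reward cap, and get $f(W)\ge f(T)/2$ from the shirking condition. You then split into cap and no-cap cases on the good event, using \Cref{lem:budgeted_sqrt_cost} with $\rho=1/2$, and finish with the factor $117/181$. Your no-cap case compares $T$ with $X^\star$ directly instead of going through a demand set; this is a valid shortcut of the paper's Case~1. With the algorithm's actual price $q_i=\frac{2}{\sqrt5}\sqrt{c_i\nu^\star}$ (your $\beta=4/5$) it gives $f(T)\ge\frac15 f(X^\star)\ge\nu^\star/8\ge\frac{13}{112}\mu^\star$. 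The genuine problem is the cap: you flag it as the main obstacle but then handle it incorrectly, and it is exactly the case that determines the constant $55$.

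\Cref{alg:small_agents_SM} hires $i$ only if $f(T^{i-1}\cup\{i\})\le\nu^\star/5$. For this rule, your claim ``if the cap is reached then $f(T)\ge\tau$'' is false. An agent $i$ blocked by the cap only certifies $f(T^{i-1}\cup\{i\})>\nu^\star/5$. The correct bound is $f(T)\ge f(T^{i-1})\ge f(T^{i-1}\cup\{i\})-f_i>\nu^\star/5-\mu^\star/14\ge\frac{4}{35}\mu^\star$, using $f_i\le\mu^\star/14$ on $X$ and $\nu^\star\ge\frac{13}{14}\mu^\star$. On the budget side you must use the direct bound $f(T)\le\nu^\star/5$. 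Your looser $f(T)\le\tau+\mu^\star/14$, combined with $\sum_{i\in T}\alpha_i\le\frac{5}{2\nu^\star}f(T)$, yields only $\frac12+\frac{5\mu^\star}{28\nu^\star}$. Your pair of bounds belongs to the other rule, hiring while $f(T^{i-1})<\tau$. That variant also works, with $\tau=\nu^\star/5-\mu^\star/14$ and the same constant, but it is not \Cref{alg:small_agents_SM}, and mixing the two rules breaks the cap case. Since $\frac{4}{35}<\frac{13}{112}$, the cap case is binding, and the constant is $\frac12\cdot\frac12\cdot\frac{4}{35}\cdot\frac{117}{181}=\frac{117}{6335}\ge\frac{1}{55}$. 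This is the computation your ``tuning constants'' has to produce.

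Finally, in the equilibrium step, consider a hired shirker $s$. The bound $f_s(W)\le\frac12 f_s(T^{s-1})$ comes from the equilibrium condition $\alpha_s\le c_s/f_s(W)$ together with $\alpha_s=2c_s/f_s(T^{s-1})$, not from submodularity. It holds for \emph{every} equilibrium $W\subseteq T$ (unhired agents have $\alpha_i=0<c_i$), so the best equilibrium is handled directly. The drop-only dynamics detour is therefore unnecessary, and as stated it is unjustified. Dropping agents raises the marginals of agents dropped earlier, so the fixed point need not be an equilibrium.
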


\Cref{alg:small_agents_SM} operates as follows.  It maintains a set of hired
agents, which are offered a non-zero contract; we denote this set by $T^i$ after
the arrival of agent $i$.  An active agent $i \in X$ is offered a non-zero
contract if its marginal contribution with respect to $T^{i-1}$ is large enough,
namely $\marg{i}{T^{i-1}} \ge \frac{2}{\sqrt5}\sqrt{c_i \nu^\star}$, and the
total reward is not too large, $f(T^{i-1}\cup\{i\}) \le \nu^\star /5$.  In this
case, the non-zero contract offered to $i$ is
$\alpha_i = \frac{2c_i}{\marg{i}{T^{i-1}}}$.  Note that this is twice the
optimal contract for $i$ if the set of effort-exerting agents is
$T^{i-1}\cup \{i\}$.  In any other case, agent $i$ is not hired, i.e.,
$\alpha_i=0$.  We denote the final set of agents which receive a non-zero
contract by $T$.

\begin{algorithm}[t]
\caption{Online Selection Algorithm for Small Agents}
\label{alg:small_agents_SM}

\KwIn{$n$, costs $\{c_i\}_{i \in [n]}$, the values $\mu^\star$ and $\nu^\star$, the set $X  = \{i \in N \mid \frac{c_i}{f_i}\le \frac12 \land f_i \le \frac{\mu^\star}{14}\}$.}
\KwOut{A vector of contracts $\contract \in [0,1]^{A}$, and $T = \{i \in A \mid \alpha_i >0\}$}

$T^0 \gets \emptyset, \contract \gets \textbf{0}$\;

\For{$i = 1$ \KwTo $n$}{
    $q_i \gets \frac{2}{\sqrt5}\sqrt{c_i \cdot \nu^\star}$\;

    \eIf{$i$ is active \textbf{and} $i \in \smallAgents$ \textbf{and} $f_{T^{(i-1)}}(i) \ge q_i$ \textbf{and} $f(T^{i-1} \cup \{i\}) \le \nu^\star/5$}{
        $\alpha_i \gets \frac{2c_i}{\marg{i}{T^{i-1}}}$,
        $T^{i} \gets T^{i-1} \cup \{i\}$ \;
    }{
        $\alpha_i \gets 0$,
        $T^{i} \gets T^{i-1}$\;
    }
}
$T \gets T^n$\;
\Return $\contract$, $T$\;

\end{algorithm}

We now turn to analyze \Cref{alg:small_agents_SM}, and begin by showing that it
respects the budget constraint in any realization.

\begin{lemma}\label{lem:smallAgents_bounded_payment}
For any realized set of active agents $A$, the contract $\contract$ produced by
\Cref{alg:small_agents_SM} satisfies $\sum_{i \in A} \alpha_i \le 1/2$.
\end{lemma}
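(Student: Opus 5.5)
The idea is to bound each hired agent's contract by its share of the total reward of the hired set, using the price threshold and Cauchy--Schwarz, and then use the cap on total reward built into \Cref{alg:small_agents_SM}. Only agents in $T$ receive nonzero contracts, so $\sum_{i\in A}\alpha_i=\sum_{i\in T}\frac{2c_i}{\marg{i}{T^{i-1}}}$. For each hired $i$, the acceptance rule gives $\marg{i}{T^{i-1}}\ge q_i=\frac{2}{\sqrt5}\sqrt{c_i\nu^\star}$. This lets us rewrite the cost as $c_i\le \frac{5}{4}\cdot\frac{\marg{i}{T^{i-1}}^2}{\nu^\star}$, so that
\[
\frac{2c_i}{\marg{i}{T^{i-1}}}\;\le\;\frac{2\sqrt{c_i}}{\sqrt{\marg{i}{T^{i-1}}}}\cdot\sqrt{\frac{c_i}{\marg{i}{T^{i-1}}}}\;\le\;\frac{5}{2}\cdot\frac{\marg{i}{T^{i-1}}}{\nu^\star}.
\]
The simplest way to see this is directly: $\frac{c_i}{\marg{i}{T^{i-1}}}\le \frac{5}{4}\frac{\marg{i}{T^{i-1}}}{\nu^\star}$ is exactly the squared threshold inequality divided by $\marg{i}{T^{i-1}}$. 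Hence each $\alpha_i\le \frac{5}{2}\cdot \marg{i}{T^{i-1}}/\nu^\star$.

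Next we sum over the hired agents in order of arrival. The quantities $\marg{i}{T^{i-1}}$ telescope, because $T^{i}=T^{i-1}\cup\{i\}$ exactly when $i$ is hired. This gives $\sum_{i\in T}\marg{i}{T^{i-1}}=f(T)$. The algorithm only hires $i$ when $f(T^{i-1}\cup\{i\})\le \nu^\star/5$, so $f(T)\le\nu^\star/5$ (assuming $T\neq\emptyset$; otherwise the sum is zero). Combining, we get $\sum_i\alpha_i\le \frac52\cdot\frac{1}{5}=\frac12$.

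The main point to check is the edge cases. The division by $\marg{i}{T^{i-1}}$ is fine, since $c_i>0$ and the threshold force it to be positive. The case $\nu^\star=0$ makes $q_i=0$ and the cap $f\le 0$. In that case no agent with positive marginal can be hired, and agents with zero marginal cannot pass $\marg{i}{T^{i-1}}\ge q_i$ meaningfully; they also cannot receive a finite contract. So we would observe that if $\nu^\star=0$ the algorithm can be taken to hire no one. Otherwise everything is pure algebra, and no submodularity is needed beyond the telescoping identity. The crux is recognizing that the threshold choice $q_i\propto\sqrt{c_i\nu^\star}$ converts each payment into a marginal-reward share, so the explicit reward cap controls the budget.
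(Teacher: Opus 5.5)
Your proof is correct and is essentially the paper's argument. The paper bounds $\alpha_i \le 2c_i/q_i = \sqrt{5}\sqrt{c_i}/\sqrt{\nu^\star}$ and then $\sqrt{c_i} \le \frac{\sqrt5}{2} f_i(T^{i-1})/\sqrt{\nu^\star}$, which composes to exactly your per-agent bound $\alpha_i \le \frac52 f_i(T^{i-1})/\nu^\star$, followed by the same telescoping to $f(T)$ and the cap $f(T)\le \nu^\star/5$. (The Cauchy--Schwarz you mention at the start is never actually used, and your $\nu^\star=0$ remark handles a degenerate case the paper leaves implicit.)
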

\begin{proof}
Recall that $T \subseteq A$ is the set of agents for which $\alpha_i>0$.  For
any $i \in T$, the marginal with respect to the intermediate set
$T^{i-1} \subseteq T$ is at least the price
$q_i= \frac{2}{\sqrt5} \sqrt{c_i \cdot \nu^\star}$.  Thus,
\begin{align*}
    \sum_{i \in A} \alpha_i
    =
    \sum_{i \in T} \alpha_i
    =
    \sum_{i \in T} \frac{2c_i}{\marg{i}{T^{i-1}}}
    \le
    \sum_{i \in T} \frac{2c_i}{q_i}
    =
    \sum_{i \in T} \frac{2c_i \sqrt5}{2\sqrt{c_i\nu^\star}}
    =
    \frac{\sqrt5}{\sqrt{\nu^\star}} \cdot \sum_{i \in T} \sqrt{c_i}.
\end{align*}
To bound the right hand side we again use
$\marg{i}{T^{(i-1)}} \ge q_i = \frac{2}{\sqrt5} \sqrt{c_i \cdot \nu^\star}$,
which rearranges to
$\sqrt{c_i} \le \frac{\sqrt5}{2}\cdot\frac{\marg{i}{T^{(i-1)}}}{\sqrt{\nu^\star}}$.
Hence
\begin{align*}
    \sum_{i \in T} \sqrt{c_i}
    \le
    \frac{\sqrt5}{2\sqrt{\nu^\star}}\sum_{i \in T} \marg{i}{T^{(i-1)}}
    = \frac{\sqrt5}{2\sqrt{\nu^\star}}
    \sum_{i \in T} \big(f(T^{(i)}) - f(T^{(i-1)})\big)
    = \frac{\sqrt5\, f(T)}{2\sqrt{\nu^\star}}
    \le \frac{\sqrt5}{2\sqrt{\nu^\star}}\cdot\frac{\nu^\star}{5}
    = \frac{\sqrt{\nu^\star}}{2\sqrt5},
\end{align*}
where the last inequality follows since $f(T) \le \nu^\star / 5$.  Combining the
two inequalities above gives,
\[
\sum_{i \in A} \alpha_i
\le
\frac{\sqrt5}{\sqrt{\nu^\star}}\cdot\frac{\sqrt{\nu^\star}}{2\sqrt5}
=
\frac12,
\]
which concludes the proof.
\end{proof}

In the following lemma we show that even though some agents in $T$ may not exert
effort under the best equilibrium, the set of agents that do exert effort yields reward of
at least $f(T)/2$.  

\begin{lemma}\label{lem:smallAgents_lowerbound_fW}
Let $W \subseteq T$ be the set of agents that work in the best equilibrium,
i.e., $i \in W$ if and only if $\alpha_i \ge \frac{c_i}{\marg{i}{W}}$.  Then
$f(W) \ge \frac{1}{2}f(T)$.
\end{lemma}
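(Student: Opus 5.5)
Fix the best equilibrium $W \subseteq T$, and let $D = T \setminus W$ be the hired agents who shirk. The plan is to show $f(T) \le f(W) + \tfrac12 f(T)$ by bounding $f(T)-f(W)$ through the marginals of the shirkers.

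First, extract the non-working condition. Since $i \in D$ does not work and costs are positive, the equilibrium condition gives $\alpha_i \le c_i/\marg{i}{W}$. Here $\marg{i}{W}$ is read as $f(W\cup\{i\})-f(W)$, because $i\notin W$. We also have $\alpha_i = 2c_i/\marg{i}{T^{i-1}}$, and $\marg{i}{T^{i-1}} \ge q_i > 0$ since $i$ was hired. Combining the two,
\[
\marg{i}{W} \le \tfrac12\, \marg{i}{T^{i-1}} \qquad \text{for every } i \in D.
\]
If $\marg{i}{W}=0$ the inequality holds trivially.

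Next, telescope over $D$ in arrival order. Write $D=\{d_1<\dots<d_k\}$. Then
\[
f(T)-f(W) \le f(W\cup D)-f(W) = \sum_{j} \big(f(W\cup\{d_1,\dots,d_j\}) - f(W\cup\{d_1,\dots,d_{j-1}\})\big) \le \sum_{i\in D}\marg{i}{W},
\]
where the last step is submodularity. Since $T = W \cup D$, the first inequality is in fact an equality.

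By the first step, this gives
\[
f(T)-f(W) \le \tfrac12\sum_{i\in D}\marg{i}{T^{i-1}} \le \tfrac12\sum_{i\in T}\marg{i}{T^{i-1}} = \tfrac12 f(T).
\]
The last sum telescopes exactly as in the proof of \Cref{lem:smallAgents_bounded_payment}, and all marginals are nonnegative by monotonicity. Rearranging yields $f(W)\ge \tfrac12 f(T)$.

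The main subtlety is being careful about which sets the marginals are taken with respect to. The shirking condition compares $\marg{i}{W}$ with $\marg{i}{T^{i-1}}$, and $W$ and $T^{i-1}$ are not nested in general. So we should not try to compare these two marginals directly. The fix is to never use $W$ versus $T^{i-1}$ containment at all: the $\tfrac12$ factor comes purely from the equilibrium condition combined with the definition of $\alpha_i$. Submodularity is needed only for the telescoping bound $f(W\cup D)-f(W)\le\sum_{i\in D}\marg{i}{W}$, and the final sum over $D$ is bounded by the full telescoping sum over $T$.
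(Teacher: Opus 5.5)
Your proof is correct and follows essentially the same route as the paper. Both arguments derive $\marg{i}{W}\le\frac12\marg{i}{T^{i-1}}$ for each hired shirker from the non-working condition together with $\alpha_i=2c_i/\marg{i}{T^{i-1}}$, telescope $f(T)-f(W)$ over the shirkers, bound each increment by $\marg{i}{W}$ via submodularity, and compare with half of the full telescoping sum $\sum_{i\in T}\marg{i}{T^{i-1}}=f(T)$. The only difference is cosmetic: you use the weak equilibrium inequality and treat $\marg{i}{W}=0$ explicitly, whereas the paper uses the strict version, and this changes nothing.
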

\begin{proof}
Let $T\setminus W = \{s_1,\dots,s_l\}$, where the order is aligned with the
order of appearance in \Cref{alg:small_agents_SM}.  Observe that for any
$s \in T \setminus W$,
$\frac{2c_s}{\marg{s}{T^{s}}} = \alpha_{s} < \frac{c_s}{\marg{s}{W}}$, thus
$\marg{s}{W} < \frac12 \marg{s}{T^{s}}$.
\begin{align*}
    f(T)
    &=
    f(W) + \sum_{j=1}^l f(W \cup \{s_1,\dots,s_{j}\})-f(W \cup \{s_1,\dots,s_{j-1}\}) && (\text{telescopic sum}) \\
    &\le
    f(W) + \sum_{j=1}^l \marg{s_j}{W} && (\text{submodularity}) \\
    &<
    f(W) + \frac12 \sum_{j=1}^l \marg{s_j}{T^{s_j}} \\
    &\le
    f(W) + \frac12 \sum_{i \in T} \marg{i}{T^{i}} && (\{s_1,\dots,s_l\} \subseteq T) \\
    &=
    f(W) + \frac12 f(T) && (\text{telescopic sum}).
\end{align*}
This completes the proof.
\end{proof}

Next, we establish that under the good event, 
i.e., whenever the set $T$ yields high reward.

\begin{lemma}\label{lem:smallAgents_lowerbound_fT}
Fix a realized set of active agents $A$. If both $f(X^\star)\ge \frac58\nu^\star$
and $\nu^\star \ge \frac{13}{14}\mu^\star$ hold, then
$f(T) \ge \frac{4}{35}\mu^\star$.
\end{lemma}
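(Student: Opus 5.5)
The plan is to split according to whether the reward cap $f(T^{i-1}\cup\{i\})\le \nu^\star/5$ in \Cref{alg:small_agents_SM} ever actually blocks an agent. Either the cap binds, in which case $T$ is already close to $\nu^\star/5$, or it never binds. In the second case $T$ behaves like an approximate online demand set with respect to the prices $q_i$, and we compare it with $X^\star$ using \Cref{lem:budgeted_sqrt_cost}. Throughout we convert from $\nu^\star$ to $\mu^\star$ via $\nu^\star\ge\frac{13}{14}\mu^\star$, which also gives $\mu^\star/14\le \nu^\star/13$.

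\emph{Case 1: the cap binds.} Suppose some active $i\in X$ is rejected even though $\marg{i}{T^{i-1}}\ge q_i$, because $f(T^{i-1}\cup\{i\})>\nu^\star/5$. By monotonicity and subadditivity,
\[
f(T)\ge f(T^{i-1})\ge f(T^{i-1}\cup\{i\})-f_i>\frac{\nu^\star}{5}-\frac{\mu^\star}{14}\ge \frac{\nu^\star}{5}-\frac{\nu^\star}{13}=\frac{8\nu^\star}{65}.
\]
Multiplying by $\nu^\star\ge\frac{13}{14}\mu^\star$ gives $\frac{8}{65}\cdot\frac{13}{14}\mu^\star=\frac{4}{35}\mu^\star$. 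This is exactly the constant in the statement, which suggests it is the intended binding case.

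\emph{Case 2: the cap never binds.} Then every active agent of $X^\star\setminus T$ (all of which lie in $X$) was rejected only because $\marg{i}{T^{i-1}}<q_i$. Since $T^{i-1}\subseteq T$, submodularity gives $\marg{i}{T}<q_i$. Hence
\[
f(X^\star)\le f(T)+\sum_{i\in X^\star\setminus T}\marg{i}{T}< f(T)+\frac{2}{\sqrt5}\sqrt{\nu^\star}\sum_{i\in X^\star}\sqrt{c_i}.
\]
Every agent in $X$ has $c_i>0$, and $X^\star$ satisfies the budget constraint $\sum_{i\in X^\star}c_i/\marg{i}{X^\star}\le 1/2$, so $\marg{i}{X^\star}>0$ for all $i\in X^\star$. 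We may therefore apply \Cref{lem:budgeted_sqrt_cost} with $\rho=1/2$, which gives $\sum_{i\in X^\star}\sqrt{c_i}\le\sqrt{f(X^\star)/2}$. Writing $y=f(X^\star)$, we get
\[
f(T)> y-\sqrt{\tfrac{2}{5}\nu^\star y}.
\]

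It remains to lower-bound $h(y)=y-\sqrt{\tfrac25\nu^\star y}$ on $y\ge\frac58\nu^\star$, where the good event supplies this lower bound on $y$. The derivative is $1-\sqrt{2\nu^\star/5}/(2\sqrt y)$, which on this range is at least $1-\tfrac{2}{5}>0$, so $h$ is increasing there. Therefore $h(y)\ge h(\tfrac58\nu^\star)=\tfrac58\nu^\star-\tfrac12\nu^\star=\tfrac18\nu^\star$. Converting, $\tfrac18\nu^\star\ge\tfrac{13}{112}\mu^\star>\tfrac{4}{35}\mu^\star$.

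The main obstacle is setting up Case 2 correctly. We must verify that, absent the cap, the only reason an agent of $X^\star$ is left out is a small marginal, and that this marginal with respect to the intermediate set $T^{i-1}$ transfers to the final set $T$ by submodularity. We also need the hypotheses of \Cref{lem:budgeted_sqrt_cost}, namely $c_i>0$ and positive marginals on $X^\star$. After that, the remaining work is the monotonicity check on $h$ and the constant bookkeeping.
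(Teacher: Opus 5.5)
Your proof is correct and follows the paper's argument. It uses the same case split on whether the reward cap ever blocks an agent with $\marg{i}{T^{i-1}}\ge q_i$, the same subadditivity bound $f(T)>\nu^\star/5-\mu^\star/14$ when the cap binds, and \Cref{lem:budgeted_sqrt_cost} with $\rho=1/2$ plus the good event to get $f(T)\ge\nu^\star/8$ otherwise. The differences are cosmetic: you compare $T$ to $X^\star$ directly instead of through the paper's auxiliary demand set $D$ (its inequality $f(D)-f(T)<\sum_{i\in D\setminus T}q_i$ holds for any $D$ anyway), and you finish via monotonicity of $y-\sqrt{\tfrac25\nu^\star y}$ rather than substituting $\nu^\star\le\tfrac85 f(X^\star)$.
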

\begin{proof}
Let $A$ be the realized set of active agents.  Assume
$f(X^\star)\ge \frac58\nu^\star$ and $\nu^\star \ge \frac{13}{14}\mu^\star$.  We
consider two cases.

\medskip\noindent
\textbf{Case 1:} For every agent $i \in \smallAgents \cap A$, if
$\marg{i}{T^{(i-1)}} \ge q_i$, then $f(T^{i-1} \cup \{i\}) \le \nu^\star/5$.

Let $D \subseteq X \cap A$ be a demand set with respect to the prices defined in
\Cref{alg:small_agents_SM}, $q_i=\frac{2}{\sqrt5}\sqrt{c_i \cdot \nu^\star}$.
We show that $f(T)$ is at least the net utility induced by $D$,
i.e., $f(D)-\sum_{i \in D}q_i$.

Observe that $\marg{i}{T^{(i-1)}} < q_i$ for any $i \in D \setminus T$, or we
would have $i \in T$ by assumption.
\begin{align*}
    f(D) - f(T)
    &\le f(D \cup T) - f(T) && (\text{monotonicity})\\
    &= f(D \mid T)\\
    &\le \sum_{i \in D \setminus T} \marg{i}{T} && (\text{submodularity})\\
    &\le \sum_{i \in D \setminus T} \marg{i}{T^{(i-1)}} && (\text{submodularity})\\
    &< \sum_{i \in D \setminus T} q_i && (i \notin T \text{ for every } i \in D \setminus T),\\
    &\le \sum_{i \in D} q_i && (q_i \ge 0).
\end{align*}
Under the good event we can bound $f(T)$ in terms of $\mu^\star$.  First, incentivizing the set $\smallAgents^\star$ requires payments at most $1/2$. Thus, we can apply
\Cref{lem:budgeted_sqrt_cost} with $S=\smallAgents^\star$ and $\rho=1/2$,
\begin{equation}\label{eq:sqrt_cost_Xstar}
\sum_{i\in\smallAgents^\star}\sqrt{c_i}\;\le\;\sqrt{\tfrac12 f(\smallAgents^\star)}.
\end{equation}
Second, the good event $f(\smallAgents^\star)\ge\frac58\nu^\star$ rearranges to
$\nu^\star \le \frac85 f(\smallAgents^\star)$.  Therefore
\begin{align*}
    f(T)
    &\ge
    f(D) - \sum_{i \in D} q_i \\
    &\ge
    f(\smallAgents^\star) - \sum_{i \in \smallAgents^\star} q_i  && (D \text{ is a demand set})\\
    &=
    f(\smallAgents^\star) - \frac{2}{\sqrt5}\sqrt{\nu^\star}\cdot \sum_{i \in \smallAgents^\star} \sqrt{c_i}  \\
    &\ge
    f(\smallAgents^\star) - \frac{2}{\sqrt5}\cdot\sqrt{\tfrac85 f(\smallAgents^\star)}\cdot \sqrt{\tfrac12 f(\smallAgents^\star)}
    && (\text{good event and \eqref{eq:sqrt_cost_Xstar}}) \\
    &=
    f(\smallAgents^\star)\bigg(1-\frac{2}{\sqrt5}\cdot\sqrt{\frac{8}{10}}\bigg)
    =
    f(\smallAgents^\star)\bigg(1-\frac{2}{\sqrt5}\cdot\frac{2}{\sqrt5}\bigg)
    =
    \frac{1}{5}f(\smallAgents^\star) \\
    &\ge
    \frac{1}{5}\cdot\frac58\nu^\star
    =
    \frac{\nu^\star}{8} && (f(X^\star) \ge \tfrac58\nu^\star)\\
    &\ge
    \frac{13}{14}\cdot\frac{\mu^\star}{8}
    =
    \frac{13}{112}\mu^\star && (\nu^\star \ge \tfrac{13}{14}\mu^\star).
\end{align*}

\medskip\noindent
\textbf{Case 2:} There exists an agent $i \in \smallAgents \cap A$ such that
$\marg{i}{T^{i-1}} \ge q_i$ and $\nu^\star/5< f(T^{i-1}\cup \{i\})$.
\begin{align*}
f(T) &\overset{(1)}{\ge}
f(T^{i-1})
\overset{(2)}{\ge} f(T^{i-1} \cup \{i\})-f(\{i\})
> \frac{\nu^\star}{5} - f(\{i\})
\ge \frac{\nu^\star}{5} - \max_{i \in \smallAgents} f(\{i\}) \\
&\overset{(3)}{\ge} \frac{\nu^\star}{5} - \frac{\mu^\star}{14}
\ge
\frac{13\mu^\star}{70} - \frac{5\mu^\star}{70}
=
\frac{4}{35}\mu^\star,
\end{align*}
where inequality (1) holds by the monotonicity of $f$, inequality (2) by subadditivity of $f$, and inequality (3) since $f_i \le \mu^\star/14$ for every
$i \in \smallAgents$, and the last since $\nu ^\star \ge \frac{13}{14}\mu^\star$.

\medskip\noindent
Since $\frac{13}{112} = \frac{65}{560} > \frac{64}{560} = \frac{4}{35}$, the
binding case is Case 2, and $f(T)\ge\frac{4}{35}\mu^\star$ in both.  This
concludes the proof.
\end{proof}

We are now ready to prove the main result of this section.  We use the fact that
the good event occurs with probability at least $117/181$, which is established
in \Cref{lem:small_agents_good_case}, see \Cref{sec:good_event}.

\begin{proof}[Proof of \Cref{prop:muStar_smallAgentCase}]
First, \Cref{alg:small_agents_SM} always outputs a budget-feasible contract, as
shown in \Cref{lem:smallAgents_bounded_payment}; in particular the principal
retains at least half of the reward.  To bound the expected
utility, consider a set of active agents $A$, let $T \subseteq A$ be the set of
agents to receive non-zero contracts by \Cref{alg:small_agents_SM}, and let
$W \subseteq T$ be the set of effort-exerting agents under the best equilibrium
induced by these payments.  Conditioning on the good event $\mathcal{G}$,
\[
\E \bigg[\bigg(1-\sum_{i \in A}\alpha_i\bigg) f(W) \;\bigg|\; \mathcal{G} \bigg]
    \ge
    \frac12 \E[f(W) \mid \mathcal{G}]
    \ge
    \frac14 \E[f(T) \mid \mathcal{G}]
    \ge
    \frac14\cdot\frac{4}{35} \mu^\star
    =
    \frac{1}{35} \mu^\star,
\]
where the first inequality follows from \Cref{lem:smallAgents_bounded_payment},
the second from \Cref{lem:smallAgents_lowerbound_fW}, and the third from
\Cref{lem:smallAgents_lowerbound_fT}.  By the non-negativity of $f$ we conclude
that
\[
\E \bigg[\bigg(1-\sum_{i \in A}\alpha_i\bigg) f(W)\bigg]
\ge
\Pr[\mathcal{G}]\cdot \E\bigg[\bigg(1-\sum_{i \in A}\alpha_i\bigg) f(W) \;\bigg|\; \mathcal{G} \bigg]
\ge
\frac{117}{181}\cdot\frac{1}{35}\,\mu^\star
=
\frac{117}{6335}\,\mu^\star,
\]
where the last inequality follows from \Cref{lem:small_agents_good_case}.  Since
$\frac{117}{6335} \ge \frac{1}{55}$ (indeed $6335/117 = 54.14\ldots$), the claim
follows.
\end{proof}

\begin{proof}[Proof of \Cref{thm:SM_const_CR}]
By \Cref{prop:muStar_largeAgentCase} and \Cref{prop:muStar_smallAgentCase}, the
expected principal's utility is at least
$\min\{\frac{1}{56},\frac{117}{6335}\}\,\mu^\star = \frac{1}{56}\mu^\star$, so
\Cref{alg:small_agents_SM} together with the single-agent policy is
$56$-competitive for the online budgeted problem. By
\Cref{prop:reduction_to_budget} this implies a competitive ratio of
$3(56+2)=174$ for the online contracting problem.
\end{proof}

\subsection{Lower Bounding the Good Event}\label{sec:good_event}
\newcommand{\RVx}{\mathcal{X}}

Recall that $\mu^\star = \E[f(\cheapAgents^\star)]$, where $L^\star$ is the
optimal set for the contracting problem with budget $1/2$ and agents in
$L = \{i \in A \mid c_i/f_i \le 1/2\}$ (see \Cref{eq:budget_half}), and that
$\nu^\star = \E[f(\smallAgents^\star)]$ is the optimal solution to the problem
when restricting attention to the agents in
$X = \{ i \in L \mid f_i \le \mu^\star/14\}$ (see \Cref{eq:_Xstar}).  We show
that whenever $\nu^\star / \mu^\star \ge 13/14$, with probability at least
$117/181$ it holds that $f(\smallAgents^\star) \ge \frac58\nu^\star$.

To this end, let $h:\smallAgents \to \reals_{\ge0}$ be the function that maps the
set of active agents in $\smallAgents$ to the value of $f(\smallAgents^\star)$,
as specified in \Cref{eq:_Xstar}.  Namely, for any
$\smallAgents' \subseteq \smallAgents$,
\begin{equation}\label{eq:def_h}
    h(\smallAgents') = \max_{S \subseteq \smallAgents'} f(S) \quad \text{s.t.} \quad \sum_{i \in S} \frac{c_i}{\marg{i}{S}} \le \frac{1}{2}.
\end{equation}

\begin{restatable}{claim}{hisXOS}\label{clm:h_is_XOS}
Whenever $f$ is submodular, $h$ is XOS.
\end{restatable}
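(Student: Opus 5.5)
To show $h$ is XOS, I will exhibit $h$ as a pointwise maximum of additive functions. For every \emph{feasible} set $S\subseteq \smallAgents$, meaning $\sum_{i\in S} c_i/\marg{i}{S}\le \frac12$, I define the additive clause
\[
a_S(\smallAgents') \;=\; \sum_{i\in S\cap \smallAgents'} w_{S,i},
\]
where the weights $w_{S,i}\ge 0$ come from an additive supporting function of $f$ at $S$. For submodular $f$ the natural choice is the greedy/telescoping decomposition $w_{S,i}=f(S^{<i}\cup\{i\})-f(S^{<i})$ along a fixed order of $S$, which satisfies $\sum_{i\in S} w_{S,i}=f(S)$. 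The claim is then
\[
h(\smallAgents')=\max_{S \text{ feasible}} a_S(\smallAgents').
\]

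The lower bound ($\ge$) is where the real content lies. For any $\smallAgents'$ and any feasible $S$, I must show $a_S(\smallAgents')\le h(\smallAgents')$. The key point is that $S'=S\cap \smallAgents'$ is again feasible. By submodularity, for every $i\in S'\subseteq S$ we have $\marg{i}{S'}\ge \marg{i}{S}$, so
\[
\sum_{i\in S'} \frac{c_i}{\marg{i}{S'}}\;\le\;\sum_{i\in S}\frac{c_i}{\marg{i}{S}}\;\le\;\frac12 .
\]
Dropping agents therefore only relaxes the budget constraint; this downward-closedness of the feasible family is exactly where submodularity is used. It then remains to show $a_S(S')\le f(S')$. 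With the telescoping weights along the fixed order, submodularity gives $\sum_{i\in S'} w_{S,i}\le \sum_{i\in S'}\big(f(S'^{<i}\cup\{i\})-f(S'^{<i})\big)=f(S')$, because each prefix of $S'$ is a subset of the corresponding prefix of $S$. Combining, $a_S(\smallAgents')=a_S(S')\le f(S')\le h(\smallAgents')$, since $S'\subseteq \smallAgents'$ is feasible.

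The upper bound ($\le$) is immediate. Let $S$ be the maximizer in \eqref{eq:def_h} for $\smallAgents'$. Then $S\subseteq \smallAgents'$ is feasible, so $a_S(\smallAgents')=\sum_{i\in S}w_{S,i}=f(S)=h(\smallAgents')$.

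I expect the main obstacle to be the feasibility-under-restriction step, specifically the conventions for zero marginals. If $\marg{i}{S'}=0$ but $c_i>0$, the ratio would be $\infty$; this cannot happen, since $\marg{i}{S'}\ge\marg{i}{S}>0$ whenever $S$ is feasible under positive costs. The edge case $S'=\emptyset$ is trivially feasible. Monotonicity also gives nonnegative weights, so each $a_S$ is a valid additive clause and $h$ is XOS.
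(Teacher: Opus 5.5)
Your proof is correct and follows essentially the same route as the paper. Both use one additive clause per budget-feasible set, with telescoping (greedy) weights; the maximizer's clause attains $h(X')$, and submodularity is used twice: to show that $S\cap X'$ stays feasible, and to bound the restricted clause by $f(S\cap X')$. Your feasibility step, via $\marg{i}{S\cap X'}\ge\marg{i}{S}$, actually makes explicit what the paper's step (i) leaves implicit, since the paper bounds the sum over $X'\cap Z$ only with marginals taken relative to $Z$.
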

The proof is deferred to \Cref{sec:h_is_XOS_proof}.

We use the fact that $h$ is XOS to apply the Efron--Stein inequality, and bound
the variance of $h$ in terms of the maximal marginal contribution of agents in
$X$.  By definition, agents in $X$ have small marginal contributions, at most
$\mu^\star/14$, which implies, together with
$\nu^\star \ge \frac{13}{14}\mu^\star$, that the variance of $h$ can be upper
bounded by $\frac{1}{13}(\nu^\star)^2$.  Applying Cantelli's inequality then
concludes the proof.

\begin{lemma}\label{lem:small_agents_good_case}
Assume $\nu^\star \ge \frac{13}{14}\mu^\star$.  Then
$\Pr\big[f(\smallAgents^\star) > \frac{5}{8}\nu^\star\big]\ge \frac{117}{181} > 0.646$.
\end{lemma}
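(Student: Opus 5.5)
We view $Z := h(X \cap A) = f(\smallAgents^\star)$ as a function of the independent activation indicators of the agents in $\smallAgents$. We bound its variance by the Efron--Stein inequality, exploiting that $h$ is XOS (\Cref{clm:h_is_XOS}) and that every agent in $X$ has singleton value at most $\mu^\star/14$. Cantelli's inequality (\Cref{fact:cantelli}) then gives the lower tail bound. Note that $\E[Z]=\nu^\star$ by definition.

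\textbf{Step 1 (Efron--Stein).} The one-sided form of Efron--Stein gives
\[
\Var(Z) \le \E\Big[\sum_{i} \big(Z - Z^{(i)}\big)_+^2\Big],
\]
where $Z^{(i)}$ is obtained by resampling the activation of agent $i$. A positive term can arise only when $i$ is active in $A$ and inactive after resampling, so
\[
(Z-Z^{(i)})_+ \le h(A_X) - h(A_X\setminus\{i\}).
\]
By monotonicity and subadditivity of the XOS function $h$, this marginal is at most $h(\{i\}) \le f_i \le \mu^\star/14$. Therefore
\[
\Var(Z) \le \frac{\mu^\star}{14}\,\E\Big[\sum_{i\in A_X} \big(h(A_X)-h(A_X\setminus\{i\})\big)\Big].
\]
Applying the XOS marginals lemma (\Cref{lem:XOS_margs}) with $S=T=A_X$ to $h$ bounds the sum by $h(A_X)$. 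This yields
\[
\Var(Z) \le \frac{\mu^\star}{14}\,\nu^\star \le \frac{1}{13}(\nu^\star)^2,
\]
where the last step uses $\nu^\star\ge\frac{13}{14}\mu^\star$. Alternatively, one can use \Cref{lem:expected_XOS_margs} to handle the expectation directly.

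\textbf{Step 2 (Cantelli).} Apply \Cref{fact:cantelli} with $m=\nu^\star$, $t=\frac38\nu^\star$, and $\sigma^2\le \frac{1}{13}(\nu^\star)^2$:
\[
\Pr\Big[Z\le \tfrac58\nu^\star\Big] \le \frac{1/13}{1/13+9/64} = \frac{64}{64+117}=\frac{64}{181}.
\]
Hence $\Pr[Z>\frac58\nu^\star]\ge 117/181$, which is exactly the claimed constant. We use that $\sigma^2/(\sigma^2+t^2)$ is increasing in $\sigma^2$, so the upper bound on the variance suffices. The degenerate case $\nu^\star=0$ need not be handled separately, since the positive-cost setting gives $\mu^\star>0$ whenever the problem is nontrivial.

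\textbf{Main obstacle.} The delicate point is Step 1: making sure the Efron--Stein form used bounds the variance by $(\max_i h(\{i\}))\cdot\E[h]$. This self-bounding property needs the one-sided form of Efron--Stein. It also needs $h$ to be monotone and XOS, so that the marginals of active agents sum to at most $h(A_X)$ and each marginal is at most $\mu^\star/14$. Monotonicity of $h$ is immediate, because enlarging the ground set only enlarges the feasible family in \eqref{eq:def_h}. The XOS property comes from \Cref{clm:h_is_XOS}.
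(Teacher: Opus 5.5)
Your proof is correct and follows essentially the same route as the paper. Both arguments apply Efron--Stein, bound each coordinate's difference by $\mu^\star/14$, bound the squared marginal by $\frac{\mu^\star}{14}$ times the marginal, apply the (expected) XOS marginals lemma to $h$ to get $\Var \le \frac{\mu^\star}{14}\nu^\star \le (\nu^\star)^2/13$, and finish with Cantelli at $t=\frac38\nu^\star$; the one-sided form you call essential is not needed, since the paper uses the symmetric form $\frac12\sum_i\E[(h(\mathcal{X})-h(\mathcal{X}^{(i)}))^2]$ with $\Pr[\mathcal{X}_i\ne\mathcal{X}'_i]=2p_i(1-p_i)$, which by exchangeability equals your one-sided expression.
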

\begin{proof}
Given an instance of the problem, we assume without loss of generality that the
agents in $\smallAgents$ are indexed $1, \dots, |\smallAgents|$.  Let $\RVx_i$ be
the random variable that indicates whether $i$ is active.  Slightly abusing
notation, we define $h(\RVx_1,\dots,\RVx_{|\smallAgents|}) =h(A(\RVx))$, where
$A(\RVx) = \{i \in \smallAgents \mid \RVx_i=1\}$.  The Efron--Stein inequality
\cite{efron1981jackknife} gives
\[
\Var(h(\RVx)) \le \frac12 \sum_{i=1}^n\E[(h(\RVx)-h(\RVx^{(i)}))^2],
\]
where
$\RVx^{(i)} = (\RVx_1,\dots,\RVx_{i-1},\RVx'_i,\RVx_{i+1},\dots,\RVx_{|\smallAgents|})$
and $\RVx'_i$ is an i.i.d.\ copy of $\RVx_i$.

First, observe that if $\RVx_i = \RVx'_i$ then $|h(\RVx)-h(\RVx^{(i)})| = 0$, and
the probability that $\RVx_i \ne \RVx'_i$ is exactly $2p_i(1-p_i)$, for any
$\RVx_{-i}$.

\begin{observation}\label{obs:h_diff}
Assume $f$ is subadditive and $\RVx_i \ne \RVx'_i$.  Then
$|h(\RVx)-h(\RVx^{(i)})| \le \mu^\star/14$.
\end{observation}
\begin{proof}
Observe that $h$ is monotone non-decreasing, so if $\RVx_i \ne \RVx'_i$ then
$|h(\RVx)-h(\RVx^{(i)})| = h(1,\RVx_{-i}) - h(0,\RVx_{-i})$.  Let $S$ be the set
that attains the value $h(1,\RVx_{-i})$.  If $i \notin S$ then
$h(1,\RVx_{-i}) = h(0,\RVx_{-i})$.  Otherwise, by subadditivity,
$h(1,\RVx_{-i}) = f(S) \le f(\{i\}) + f(S \setminus \{i\}) \le \mu^\star/14 + h(0,\RVx_{-i})$,
and the claim follows.
\end{proof}

\begin{align*}
\Var(h(\RVx))
&\le
\frac12\sum_{i=1}^{|\smallAgents|}\E_{\RVx,\RVx'_i}[(h(\RVx)-h(\RVx^{(i)}))^2] && \text{(Efron-Stein)}\\
    &=
    \sum_{i=1}^{|\smallAgents|}p_i(1-p_i)\E_{\RVx_{-i}}[(h(1,\RVx_{-i})-h(0,\RVx_{-i}))^2] \\
    &\le
    \frac{\mu^\star}{14}
    \sum_{i=1}^{|\smallAgents|}p_i\cdot \E_{\RVx_{-i}}[h(1,\RVx_{-i})-h(0,\RVx_{-i})] && (\text{\Cref{obs:h_diff}})\\
    &\le
    \frac{\mu^\star}{14}\,\E\big[h(A(\RVx))\big] && (\text{\Cref{lem:expected_XOS_margs}})\\
     &=
    \frac{\mu^\star}{14} \cdot \nu^\star && (\nu^\star = \E_A[f(\smallAgents^\star)]) \\
    &\le
    \frac{1}{14}\cdot\frac{14}{13}\nu^\star\cdot\nu^\star
    =
    \frac{(\nu^\star)^2}{13}  && (\nu^\star \ge \tfrac{13}{14}\mu^\star).
\end{align*}

Write $\sigma^2 = \Var(h(\RVx)) \le \frac{1}{13}(\nu^\star)^2$.  By Cantelli's
inequality (\Cref{fact:cantelli}) applied to $Z = f(\smallAgents^\star)$ with
mean $\nu^\star$ and $t = \frac38\nu^\star$,
\[
\Pr\bigg[f(\smallAgents^\star) \le \frac58\nu^\star\bigg]
=
\Pr\bigg[f(\smallAgents^\star) \le \nu^\star - \frac38\nu^\star\bigg]
\le
\frac{\sigma^2}{\sigma^2 + \frac{9}{64}(\nu^\star)^2}
\le
\frac{\frac{1}{13}}{\frac{1}{13} + \frac{9}{64}}
=
\frac{64}{64+117}
=
\frac{64}{181},
\]
where the second inequality uses that $x \mapsto \frac{x}{x+c}$ is increasing.
We conclude that 
$\Pr\big[f(\smallAgents^\star) > \frac58\nu^\star\big] \ge 1 - \frac{64}{181} = \frac{117}{181}$, and the claim follows.
\end{proof}

\section{\boldmath Safe Policies Incur Super-Constant Loss for Submodular $f$}
In this section we show a super-constant gap between safe and general policies when $f$ is submodular. Our construction is based on \cite{gobel2014online}, and implies a stronger result:\footnote{Similar constructions have also appeared in \cite{im2011secretary,chawla2019pricing}.}
The gap between safe and general policies persists even when the arrival order is random: At each time step an agent is selected uniformly at random and the algorithm is informed whether this agent is active or not.

\begin{theorem}\label{thm:submodular_gap}

    There exists an online contract instance with coverage $f$ such that the ratio between the expected utilities of the optimal general policy and the optimal safe policy is $\Omega\left(\frac{\log n}{(\log \log n)^2}\right)$.
    This holds even if agents arrive in random order, rather than adversarial.
    
\end{theorem}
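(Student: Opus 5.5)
Our plan is to build, following the interval-selection lower bound of \cite{gobel2014online}, a coverage instance on a laminar set system, and to compare a safe policy with a general (free-riding) policy. Take a complete $d$-ary tree of depth $k$. The universe $U$ consists of the leaves, with appropriate weights. Each tree node $v$ at level $\ell$ corresponds to an agent (or a small group of copies) whose covered set $U_v$ is the set of leaves below $v$. For coverage functions, $\marg{i}{S}=0$ whenever the set of $i$ is contained in the union of the sets of $S\setminus\{i\}$. In a laminar system this happens exactly when some descendant-free cover of $U_v$ is already hired. The key structural claim is therefore the following. In a safe policy, once an agent at node $v$ is hired, no ancestor of $v$ whose leaves are all covered may be hired, and more generally every hired agent must retain a positive, in fact cost-covering, marginal. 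Ancestors lose marginal as more descendants are hired, so hiring an agent constrains future hiring along its root path. This is the ``hard constraint'' described in the introduction.

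Parameters. We choose the levels so that deeper (smaller) nodes arrive earlier in expectation and have large activation probability. Upper nodes arrive later, are active with small probability $p_\ell$, and have total weight comparable per level. Costs are set so that each level, taken alone, yields principal utility about $\Theta(1)$ in expectation. The total payments incentivizing a full level are bounded by a constant via \Cref{lem:budgeted_sqrt_cost}-style accounting, i.e.\ $c_v$ of order $w(U_v)^2\cdot(\text{const}/\#\text{level})$. We then take $k\approx \log n/\log\log n$ levels with $d\approx$ polylog.

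Lower bound for the general policy. The general policy may hire greedily at every level, offering each hired agent twice its needed contract, as in \Cref{alg:small_agents_SM}. Using \Cref{lem:smallAgents_lowerbound_fW}-type reasoning, the equilibrium then lets the heavy ancestors work while hired descendants free-ride. The prophet benchmark and this policy collect roughly the best level in each subtree, which sums to $\Omega(k)$ times the value of any single level, after scaling so that payments stay below $1/2$.

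Upper bound for a safe policy, which is the main obstacle. Since arrivals are random, we would use a charging argument in the spirit of \cite{gobel2014online}. Fix a safe policy and consider the realized set of hired agents. By safety it forms an antichain-like family: no hired node has its leaf set covered by other hired nodes, and each needs marginal at least $c_v/\alpha_v$. We then charge each leaf's weight to the unique level at which it is covered. The random order with the chosen $p_\ell$ forces a trade-off. Committing to a node at an early time blocks its ancestors, and the probability that a blocked ancestor would have been active is controlled. Waiting, on the other hand, loses the descendants. A potential-function or LP argument bounds the expected value by $O((\log\log n)^2)$ times a single level's value. One $\log\log n$ factor comes from the branching $d$ and one from the union bound over time windows. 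The ratio is then $\Omega(k/(\log\log n)^2)=\Omega(\log n/(\log\log n)^2)$. We expect the delicate part to be tuning $p_\ell$ and the costs so that safety truly forbids partial covering of ancestors with sufficient marginal, and making the random-order charging tight.
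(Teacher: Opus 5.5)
Your skeleton is right: laminar coverage on a $d$-ary tree, safety forcing an antichain, and random order producing a harmonic loss. But the step that carries the theorem, the upper bound on every safe policy, is only asserted (``a potential-function or LP argument''). The missing content is exactly the two ingredients the proof needs.

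First, you must identify the optimal safe policy. With negligible costs, safety only requires the hired set to be an antichain. Hiring an ancestor $a$ of a hired node $v$ makes $\marg{v}{S}=0$, so hiring $v$ forbids \emph{every} ancestor of $v$, not just those whose leaves are already covered as you state. An exchange argument then shows it never pays to accept a node $i$ while some ancestor $j$ is still unrevealed and unblocked. The node $j$ covers at least $(d-1)|U(i)|$ leaves and is active with probability $p$; if $j$ is forgone, those leaves stay uncovered with probability at least $(1-p)^h\ge 1/2$. The expected loss is therefore at least $p(1-p)^h(d-2)|U(i)|\ge\frac{d-2}{4h}|U(i)|$, which exceeds the sure gain $|U(i)|$ when $p=1/(2h)$ and $d$ is large compared to $h$. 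This is why $d$ is taken polylogarithmic, about $4h^2$.

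Second, under random order this policy can accept a depth-$\ell$ node only if all $\ell$ of its ancestors arrived before it. An ancestor can only become blocked through an acceptance in its own subtree, and that itself requires the ancestor to have been revealed. This event has probability $1/(\ell+1)$, so each leaf is covered with probability at most $p\sum_{\ell=0}^{h}\frac{1}{\ell+1}=O(\log h/h)$.

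Your bookkeeping of the factors is also off. With $k=\Theta(\log n/\log\log n)$ levels, a safe bound of $O((\log\log n)^2)$ levels' worth gives only $\Omega(\log n/(\log\log n)^3)$. The branching factor's $\log\log n$ is already inside $k$. The safe policy actually gets $O(\log k)$ levels' worth, and the ratio is $\Omega(k/\log k)$.

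Your parameterization would moreover destroy the gap. Under random order, ``deeper nodes arrive earlier in expectation'' has no meaning. More seriously, each level of the tree partitions the leaves. If some level has large (constant) activation probability, the safe policy that hires all active nodes of that level, which form an antichain, already collects a constant fraction of $w(U)$. So the gap requires $p_\ell=O(1/k)$ at every level, e.g.\ the uniform choice $p=1/(2h)$.

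The level-tuned costs, the square-root cost accounting, and the doubled contracts of \Cref{alg:small_agents_SM} only complicate matters. Every hired free-rider is paid, and ancestors whose subtrees are partially covered become expensive. The clean route is to give all agents the same arbitrarily small cost $\eps$, so that utility is essentially coverage. The general policy then hires every active agent at contract $\eps$. In equilibrium the topmost active node on each root-to-leaf path works, so each leaf is covered with probability at least $1-(1-\frac{1}{2h})^h\ge 1-e^{-1/2}$.
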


\paragraph{Construction.}
We consider a laminar instance. We set
\begin{equation}\label{eq:d_ary}
    d = \left\lceil \max\left\{ 4 \left(2 \frac{\log n}{\log \log n} + 2\right)^2 + 1, \sqrt{\log n} \right\} \right\rceil 
\end{equation}
and consider a complete $d$-ary tree of height $h$. We choose $h$ to be the smallest integer so that the tree has at least $n$ nodes. Note that
\[
\left\lceil \frac{\log n}{\log d} \right\rceil - 1 \leq h \leq \left\lceil \frac{\log n}{\log d} \right\rceil + 1,
\]
If necessary, we remove leaves from the tree so that it has exactly $n$ nodes. This way, each path from the root to a leaf has length either $h-1$ or $h$.

Agents correspond to nodes in the tree, where all have the same positive, and arbitrarily small cost $c_i = \eps>0$ for all $i \in N$. Agents also share the activation probabilities, for every $i \in N$ $p_i = p = \frac{1}{2h}$.

We construct a coverage function $f$ defined with respect to a universe $U$.
We have one element in $U$ for each of the leaves of the tree.
The elements that can be covered are the leaves of the tree. Each agent covers the set of leaves it is an ancestor of. In particular, an agent which corresponds to a leaf only covers that leaf, and the root of the tree covers all elements.
Our construction is illustrated in \Cref{fig:submodular-gap}.

Since costs are arbitrarily small, the payments to the agents made by any reasonable policy are negligible, which implies that expected principal's utility $\approx$ expected reward (i.e., $f$ value).
Thus, in the rest of the section we focus on bounding the expected reward generated by a given policy.

\begin{figure}[htbp]
\centering
\begin{tikzpicture}[
    x=1cm,y=1cm,
    every node/.style={font=\small},
    agent/.style={circle, draw, minimum size=5mm, inner sep=0pt},
    active/.style={circle, draw, fill=gray!35, minimum size=5mm, inner sep=0pt},
    colbox/.style={draw, rounded corners, inner sep=6pt},
    itembox/.style={rectangle, draw, fill=white, minimum size=5mm, inner sep=0pt},
    level 1/.style={sibling distance=4.8cm, level distance=1.2cm},
    level 2/.style={sibling distance=2.4cm, level distance=1.2cm},
    level 3/.style={sibling distance=1.2cm, level distance=1.2cm},
    edge from parent/.style={draw, thick}
]

\node[agent] (root) at (4, 4.5) {}
    child {node[active] (l) {}
        child {node[agent] (ll) {}
            child {node[active] (lll) {}}
            child {node[agent] (llr) {}}
        }
        child {node[active] (lr) {}
            child {node[agent] (lrl) {}}
            child {node[active] (lrr) {}}
        }
    }
    child {node[agent] (r) {}
        child {node[active] (rl) {}
            child {node[agent] (rll) {}}
            child {node[agent] (rlr) {}}
        }
        child {node[agent] (rr) {}
            child {node[active] (rrl) {}}
            child {node[agent] (rrr) {}}
        }
    };

\foreach \n/\i in {lll/1, llr/2, lrl/3, lrr/4, rll/5, rlr/6, rrl/7, rrr/8} {
    \node[itembox] (u\i) at ([yshift=-1cm]\n) {$u_{\i}$};
    \draw[dashed, gray] (\n) -- (u\i);
}


\node[align=right, anchor=east] at (-1.5, 4.5) {Level 0\\(Root)};
\node[align=right, anchor=east] at (-1.5, 3.3) {Level 1};
\node[align=right, anchor=east] at (-1.5, 2.1) {$\vdots$};
\node[align=right, anchor=east] at (-1.5, 0.9) {Level $K$\\ (Leaves)};
\node[align=right, anchor=east] at (-1.5, -0.1) {Items};

\draw[decorate,decoration={brace,amplitude=6pt,mirror}]
    ([xshift=-2mm,yshift=-2mm]u1.south west) -- ([xshift=2mm,yshift=-2mm]u8.south east)
    node[midway,below=8pt] {Universe $U$ of $\Theta(n)$ elements};

\node[draw, rounded corners, inner sep=4pt, dashed, thick, fit=(u5)(u6)(u7)(u8)] (covL) {};
\draw[dashed, thick, ->] (r) to[bend left=30] 
    node[midway, right=4pt, align=left, font=\footnotesize] {An agent covers\\all items in\\its subtree} (covL.north east);
\end{tikzpicture}
\caption{Visualization of the hierarchical submodular coverage gap instance, for $d=2$. 
A safe policy must rigidly select an anti-chain to prevent coverage overlap, while a general policy organically utilizes ancestral dominance.}
\label{fig:submodular-gap}
\end{figure}

\begin{observation}\label{obs:unsafe_SM_gap}
    There exists an online policy which achieves an expected utility of $\Theta(|U|)$.
\end{observation}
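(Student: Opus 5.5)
The plan is to use a policy that is deliberately not safe: \emph{hire every active agent} with the same small contract. The laminar structure then makes the ``topmost'' active agents work, and every lower active agent free-rides at zero marginal contribution. Let $w>0$ be the common weight of each element of $U$ (so $f\le w|U|$, with $w=1/|U|$ if we normalize $f\le 1$). Define the policy by setting $\alpha_i=\delta:=\frac{1}{2n}$ for every active agent $i$, independently of the history. The cost $\eps$ is arbitrarily small, so we may assume $\eps\le \delta w$. Since at most $n$ agents are hired, payments always satisfy $\sum_{i\in A}\alpha_i\le \frac12$.

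The first step is to exhibit a good equilibrium. Let $W\subseteq A$ be the set of active nodes that have no active proper ancestor. Distinct nodes of $W$ are incomparable in the tree, so their leaf sets are disjoint. Moreover, each active node outside $W$ has an ancestor in $W$, and therefore covers only leaves already covered by $W$.
\begin{itemize}
\item For $i\in W$: removing $i$ uncovers at least one leaf, so $\marg{i}{W}\ge w$ and $c_i/\marg{i}{W}\le \eps/w\le\delta=\alpha_i$. Hence $i$ weakly prefers to work.
\item For $i\in A\setminus W$: $\marg{i}{W}=0<c_i$, so $i$ strictly prefers to shirk.
\end{itemize}
Thus $W$ is an equilibrium. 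Since ties are broken in favour of the principal, the chosen equilibrium $S(\contract)$ gives the principal utility at least $(1-\frac12)f(W)=\frac12 f(W)$.

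The second step lower-bounds $\E[f(W)]$. By construction, $f(W)$ equals $w$ times the number of leaves that have at least one active node on their root-to-leaf path. Every such path contains at least $h$ nodes (length $h-1$ or $h$), and each node is active independently with probability $p=\frac{1}{2h}$. So each leaf is covered with probability at least
\[
1-\Big(1-\tfrac{1}{2h}\Big)^{h}\ \ge\ 1-e^{-1/2}.
\]
By linearity of expectation, $\E[f(W)]\ge (1-e^{-1/2})\,w|U|$. Together with the trivial upper bound $f\le w|U|$, this shows the policy's expected utility is $\Theta(w|U|)$, i.e.\ $\Theta(|U|)$ in the unnormalized units of the statement.

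The main point to check carefully is the equilibrium step. The argument uses the paper's tie-breaking convention: $S(\contract)$ is the principal-optimal equilibrium, so it suffices to show that one equilibrium attains value $f(W)$, even though other equilibria may exist. It also uses the positive-cost assumption ($c_i=\eps>0$): this makes nested agents with zero marginal contribution shirk rather than sit at indifference. Nothing in the argument depends on the arrival order, so the same bound holds under random arrival order.
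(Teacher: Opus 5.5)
Your proof is correct and follows the paper's approach: hire every active agent at a small uniform contract, note that the topmost active nodes $W$ work and cover every leaf that has an active node on its root path, and bound each leaf's coverage probability by $1-(1-\frac{1}{2h})^h \ge 1-e^{-1/2}$. Your justification needs only that $W$ is one equilibrium, combined with principal-favoring tie-breaking and the explicit cap $\sum_i \alpha_i \le \frac12$; this is slightly more careful than the paper's stronger assertion about which agents work in every equilibrium.
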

\begin{proof}
    It is a feasible online policy to accept all active agents (regardless of the order in which they are revealed), and offer each a contract of $\eps$. In every equilibrium, exactly those agents will work who have no active ancestor in the tree. 
    The expected principal utility will be (up to the negligible payments to the agents) the expected number of leaves with the property that at least one node on the path to the root is active. 
    For each leaf, the probability is at least $1-(1 - \frac{1}{2h})^h \geq 1 - \frac{1}{\sqrt{e}}$, 
    and the expected utility is at least $(1 - \frac{1}{\sqrt{e}}) |U|$.
\end{proof}

\paragraph{Safe Policies.}

Recall that in a safe policy, the principal selects a subset of active agents $S \subseteq A$, in an online fashion, and pays each agent $c_i/\marg{i}{S}$ at the end of the process.
In particular, any optimal safe policy, for any realized set of active agents, selects a set $S$ such that every $i \in S$ has a strictly positive marginal, i.e., $\marg{i}{S} > 0$. Otherwise, the principal will suffer infinite payment.
In the context of our construction, the above implies the observation below.
\begin{observation}\label{obs:safe_no_anc}
    Any optimal online policy, for any realized set of active agents, does not select two agents $u$ and $v$ such that $u$ is an ancestor of $v$.
\end{observation}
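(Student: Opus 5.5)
The claim to establish is Observation~\ref{obs:safe_no_anc}: an optimal safe policy never selects two agents $u,v$ with $u$ an ancestor of $v$. I will argue by contradiction, using the structure of the coverage function on the laminar tree together with the safety condition.

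The first step is a computation in the construction. Each agent covers exactly the leaves in its subtree. So if $u$ is an ancestor of $v$, then $U_v \subseteq U_u$. Take any set $S$ containing both $u$ and $v$. Then $U(S\setminus\{v\}) \supseteq U_u \supseteq U_v$, so
\[
\marg{v}{S} = f(S) - f(S\setminus\{v\}) = 0 .
\]
Since every agent has cost $c_v=\eps>0$, the safety condition $\alpha_v \ge c_v/\marg{v}{S^{\mathrm{hire}}}$ requires $\alpha_v \ge \eps/0 = \infty$. Under the paper's convention this means infinite payment, which is impossible with $\alpha_v\in[0,1]$. Hence any safe policy whose realized hired set contains an ancestor--descendant pair is infeasible, or equivalently incurs unbounded loss.

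The second step is to make the online nature of the argument explicit. A safe policy fixes $\alpha_i$ at agent $i$'s arrival. Safety is required for every realization and for the final hired set $S^{\mathrm{hire}}$. Consider any realization in which both $u$ and $v$ end up hired, regardless of which of them arrived first. By the first step, such a realization violates safety. Therefore a safe policy must, at every history, refuse to hire an agent that is an ancestor or descendant of an already hired agent. This refusal is well-defined online because the policy knows which agents it has already hired.

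Finally, I would note that any optimal safe policy is in particular a safe policy, so the claim follows. Alternatively, if one views policies as merely paying $c_i/\marg{i}{S}$ at the end, including an infinite-payment pair yields utility $-\infty$. That is strictly worse than dropping $v$ (setting $\alpha_v=0$), which keeps the policy safe: $v$'s leaves are already covered by $u$, so dropping $v$ does not reduce the other agents' marginals. There is no real obstacle here. The only care needed is checking that $\marg{v}{S}=0$ holds for the descendant in every superset $S\ni u$, which is immediate from the laminar coverage structure.
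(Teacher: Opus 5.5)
Your proof is correct and follows the paper's reasoning: in the laminar coverage instance a descendant $v$ has zero marginal contribution to any set containing its ancestor $u$, so with cost $c_v=\eps>0$ safety would demand an infinite contract, which the paper phrases as the principal suffering infinite payment. You also correctly read the statement as concerning safe policies, as the surrounding context intends, and your argument in fact shows that no safe policy at all, optimal or not, can hire such a pair.
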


It has been shown on \cite{gobel2014online} that the following policy, which we denote by $\pi$, is optimal for an online selection problem:
Accept an agent if and only if all of its ancestors in the tree cannot be accepted, i.e., every ancestor is either inactive, or it is an ancestor of an already accepted agent.
Below we prove an analogous result for the problem of online contracting, where an accepted agent receives a contract of $c_i/|U(i)|$, where $|U(i)|$ is the number of leaves in the subtree rooted at $i$ (i.e., the number of elements covered by $i$).
\begin{proposition}[\cite{gobel2014online}]\label{prop:pi_is_opt_safe}
    The policy $\pi$ is the optimal safe policy.
\end{proposition}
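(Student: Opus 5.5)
We reduce the safe contracting problem on this laminar instance to the online antichain-selection problem of \cite{gobel2014online}. The argument has three steps.

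\textbf{Step 1: the utility of a safe policy.} Fix a safe policy and a realization, and let $S$ be the set of hired agents. By \Cref{obs:safe_no_anc}, $S$ is an antichain in the tree. The sets $U(i)$ for $i\in S$ are therefore pairwise disjoint, so $\marg{i}{S}=|U(i)|$ for every $i\in S$. Consequently the cheapest safe contract for $i$ is $\alpha_i=c_i/|U(i)|$. This payment depends only on $i$ and not on the rest of $S$, so the online choice of $\alpha_i$ at arrival is consistent with safety whatever is selected later. The principal's utility is then
\[
\Big(1-\eps\sum_{i\in S} 1/|U(i)|\Big)\cdot \sum_{i\in S}|U(i)|.
\]
Since $\eps$ is arbitrarily small and $|S|\le n$, the payment term is negligible. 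Up to this vanishing term, maximizing expected safe utility is the same as maximizing the expected number of leaves covered by an online-selected antichain of active nodes.

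\textbf{Step 2: optimality of $\pi$ for the antichain problem.} This is the statement proved in \cite{gobel2014online}; we either cite it or re-derive it by an exchange argument, as follows. Take any safe policy that, at some history, rejects an active node $v$ that is still selectable. We can instead select $v$ immediately, and afterwards imitate the original policy restricted to nodes outside the subtree of $v$. Every node the original policy would later select inside $v$'s subtree covers a subset of $U(v)$, and those subsets are disjoint, so the total coverage cannot decrease. Selecting a node rules out only its ancestors and descendants. Ancestors of $v$ arriving later could equally be blocked by any selection inside $v$'s subtree; the step we must justify is that this cost is dominated. Symmetrically, accepting a node all of whose ancestors are blocked loses nothing, because it only excludes descendants, and they cover less. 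For random arrival order we repeat the same argument pathwise over the arrival sequence, since the exchange never uses knowledge of the future order.

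\textbf{Main obstacle.} The subtle point is the exchange step when some ancestor $u$ of $v$ is still unblocked, possibly inactive-so-far, and could arrive later. This is exactly why $\pi$ waits in that situation rather than accepting greedily. Proving that waiting is optimal needs the conditioning argument of \cite{gobel2014online}: compare the expected value of the subtree of $u$ under both choices, by induction on height, using that activation probabilities are independent and i.i.d. across nodes. I would carry out this induction on subtree height, with value function $V(\text{subtree},\text{history})$, showing that the value of reserving $u$ dominates. I expect this step to take the most care.
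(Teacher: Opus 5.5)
Your Step 1 is fine and matches how the paper treats the instance: hired agents form an antichain, $\alpha_i=c_i/|U(i)|$, and payments are negligible.

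The gap is in Step 2, at exactly the case that defines $\pi$: an active, selectable node $i$ arrives while some ancestor of $i$ can still be selected.

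\begin{itemize}
\item Your exchange argument (accept $v$ now, then imitate the original policy outside $v$'s subtree) argues for \emph{greedy} acceptance. That is the opposite of what $\pi$ does in this case.
\item The exchange breaks precisely when the original policy would later hire an ancestor of $v$. So the step you flag as ``must justify'' is the whole content of the proposition, and the proposal leaves it as a plan.
\item That plan cannot succeed as stated. It is an induction on subtree height that uses only independence and identical activation probabilities. But waiting is not optimal for general parameters. Take $d=2$, $h=1$, $p=1/2$, and suppose an active leaf arrives first. Accepting it yields expected coverage $3/2$. Rejecting it, under either remaining order, yields at most $5/4$.
\end{itemize}

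So any correct proof must use the quantitative relation between $d$, $h$ and $p$.

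The missing ingredient is a quantitative one-step deviation argument, by induction over arrivals, so that both branches continue with $\pi$.
\begin{itemize}
\item Let $j$ be the highest ancestor of $i$ that can still be accepted. Then $U(j)$ is entirely uncovered and $|U(j)|\ge (d-1)|U(i)|$.
\item Accepting $i$ gains at most $|U(i)|$.
\item With probability $p$, independent of the history, $j$ is active. In that case $\pi$ hires $j$ and covers all of $U(j)$. After accepting $i$, however, each $u\in U(j)\setminus U(i)$ stays uncovered whenever all other unrevealed ancestors of $u$ are inactive. This has probability at least $(1-p)^h\ge 1-ph=1/2$.
\item The expected loss is therefore at least $p(1-p)^h(d-2)|U(i)|\ge \frac{d-2}{4h}|U(i)|$.
\item The choice of $d$ in \eqref{eq:d_ary}, together with $h\le \frac{2\log n}{\log\log n}+2$, gives $4h\le 2\sqrt{d-1}\le d-2$. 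So the loss dominates the gain.
\end{itemize}

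The same small example shows that citing \cite{gobel2014online} is legitimate only if the cited statement covers this exact tree, coverage objective and parameter regime. There is no parameter-free version of the claim to appeal to.
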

\begin{proof}
    We prove the claim by induction.
    Consider the arrival of agent $i$, which can be selected by a safe policy, i.e., no descendant of $i$ has been selected. 
    Let $U(i) \subseteq U$ be the set of elements covered by $i$, and let $T_i$ be the sub-tree of agents rooted at $i$.
    By induction hypothesis, we know that applying $\pi$ in future steps is optimal. We now examine whether $i$ should be selected at the current step.
    
    If $i$ has no ancestor that could be accepted, taking $i$ is optimal -- there is no better way to cover $U(i)$ or a subset thereof. This would coincide with $\pi$.
    
    Otherwise, let $j$ be an ancestor of $i$ that could be accepted, which is closest to the root of the tree.
    Let $U(j) \supsetneq U(i)$ be the set of elements $j$ covers, and let $T_j$ be the sub-tree rooted at $j$.
    First, $U(j)$ is completely uncovered, or $j$ could not have been chosen by a safe policy. Moreover, because the tree is $d$-ary, $|U(j)| \ge (d-1)|U(i)|$.
     
    By \Cref{obs:safe_no_anc}, any safe policy which selects $i$ cannot select $j$. 
    Such a policy will gain $U(i)$ with probability $1$, we now compute a lower bound on its expected loss by not picking agent $j$.
    Let $\pi'$ be the policy which selects $i$ and then continues by applying $\pi$. It is enough to show that $\pi'$ will yield lower expected utility than rejecting $i$ and applying $\pi$.

    Since $j$ is the closest-to-root ancestor of $i$ that can be selected, $\pi'$ and $\pi$ behave identically when considering future agents outside $T_j$.
    Observe that $\pi'$ and $\pi$ will also behave identically for agents in $T_j \setminus \{j\}$, except for those in $T_i \setminus \{i\}$, which may be selected by $\pi$ and are not available for $\pi'$. We will show that $\pi'$ yields less than $\pi$, even when ignoring those agents.
    The above implies two things:
    \begin{itemize}
        \item If $j$ is not active, $\pi'$ covers at most $|U(i)|$ more elements than $\pi$.
        \item If $j$ is active, $\pi$ selects it, and thus cover any elements in $U(j)$. $\pi'$ cannot select $j$.
    \end{itemize}

     Now we bound the difference in expected reward when $j$ is active.
     To this end, for any element $u \in U(j) \setminus U(i)$, we define the event $X_{u}$ in which, besides $j$, every yet-to-be-revealed ancestor of $u$ is inactive. 
     Observe that (i) if $X_u$ holds, than $u$ cannot be covered by $\pi'$, (ii) $\Pr[X_u]\ge(1-p)^h$, and (iii) any such event is independent of whether $j$ is active.
    \begin{align*}
    \E[(\pi - \pi') \cdot \textbf{1}[j \text{ is active}]]
    &=
    p \cdot \sum_{u \in U(j) \setminus U(i)} (\Pr[\pi \text{ covers }u \mid j \text{ is active}] - \Pr[\pi' \text{ covers }u \mid j \text{ is active}]) \\
    &=
    p \cdot \sum_{u \in U(j) \setminus U(i)} (1 - \Pr[\pi' \text{ covers }u \mid j \text{ is active}]) \\
    &\ge
    p \cdot \sum_{u \in U(j) \setminus U(i)} \Pr[X_u] \cdot (1 - \Pr[\pi' \text{ covers }u \mid j \text{ is active} \land X_u]) \\
     &\ge
    p \cdot \sum_{u \in U(j) \setminus U(i)} (1-p)^h \cdot (1 - 0) \hspace{4.6cm} (\text{(i) and (ii)}) \\
    &\ge
    p(1-p)^h (|U(j)|-|U(i)|).
    \end{align*}

    Using the fact that $|U(j)| \ge (d-1)|U(i)|$, we get
    \begin{align*}
    \E[(\pi-\pi')]
    &=
    \E[
    (\pi-\pi') \cdot \textbf{1}[j\text{ is active}]
    +
    (\pi-\pi') \cdot \textbf{1}[j\text{ is not active}]]
    \ge
    p(1-p)^h (d-2)|U(i)|
    \end{align*}

    Finally, we use the definitions of $d,h$ and $p$ to show that the above is non-negative.
    First, by Bernoulli's inequality, $(1-p)^h \ge 1-ph = 1/2$.
    Since $d \ge \sqrt{\log n}$, we have 
    \[
    h \le \left\lceil\frac{\log n}{\log d}\right\rceil+1 
    \le 
    \left\lceil\frac{2\log n}{\log \log n}\right\rceil+1 \le \frac{2\log n}{\log \log n}+2
    \le \sqrt{\frac{d-1}{4}},
    \]
    where the last inequality follows from \Cref{eq:d_ary}.
    Combining the above together with $p = 1/(2h)$ we get
    \[
    \E[\pi - \pi'] \ge \frac{d-2}{4h}|U(i)| \ge \frac{2(d-2)}{\sqrt{d-1}} \ge 0,
    \]
    where the last inequality holds for any $d \ge 2$.
\end{proof}

\begin{proposition}\label{prop:pi_util_bound}
    The policy $\pi$ generates expected utility of $ O\left(|U| \cdot \frac{(\log \log n)^2}{\log n}\right)$.
\end{proposition}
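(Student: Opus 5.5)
We bound the expected number of covered leaves under $\pi$ by analyzing, for a fixed leaf $u$, the probability that $\pi$ covers it. This probability is the same for every leaf, up to the depth discrepancy of one, so linearity of expectation gives the claim. Fix $u$ with root-to-leaf path $v_0,v_1,\dots,v_k$ ($k\in\{h-1,h\}$), where $v_0$ is the root. Under $\pi$, the leaf $u$ is covered if and only if exactly one agent on this path is accepted. Call the path \emph{alive} at depth $\ell$ if none of $v_0,\dots,v_\ell$ is accepted. Under $\pi$, an active agent $v_\ell$ is rejected only when some ancestor of it could still be accepted, namely when an ancestor is active and not yet blocked.

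The key observation concerns a deep node $v_\ell$. Its subtree contains many siblings' subtrees, roughly $d$ per level, each with about $d^{h-\ell}$ nodes, so some descendant of an ancestor of $u$ is likely to be active. We aim to show that the process covering $u$ behaves like a ``thinning'' across levels, as follows. Let $E_\ell$ be the event that $v_\ell$ is accepted. For $v_\ell$ to be accepted, $v_\ell$ must be active (probability $p=1/(2h)$), and every active ancestor $v_m$ with $m<\ell$ must already have been blocked by an accepted descendant. Such a blocking descendant lies in $T_{v_m}$ and is accepted only if it is itself unblocked. Moreover, the root of any subtree with an active node that has no active ancestor other than blocked ones gets accepted, and this blocks the ancestor.

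The plan is to bound $\Pr[\text{some } v_\ell \text{ accepted}]\le \sum_\ell \Pr[E_\ell]$ and argue that $\Pr[E_\ell]$ is small. Given that $v_\ell$ is accepted, all active ancestors must be blocked, and blocking consumes their coverage. We therefore switch to counting covered leaves by \emph{accepted nodes}. The expected reward is $\sum_v \Pr[v\text{ accepted}]\,|U(v)|$. For a node $v$ at depth $\ell$, we have $\Pr[v \text{ accepted}]\le p\cdot\Pr[\text{no active ancestor of } v \text{ remains acceptable when } v \text{ arrives}]$. An active ancestor $v_m$ becomes unacceptable only if some other descendant of $v_m$ has been accepted before. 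With random arrival order, $v$ is a uniformly placed element among the roughly $d$ siblings at each level. By the symmetry between $v$ and its siblings and cousins under the random order, the probability that $v$ (rather than another descendant of $v_m$) is the first to claim is about the inverse of the number of active candidates in $T_{v_m}$. This gives a factor that we aim to show is $O(\log\log n / \log n)$ per unit of covered mass, summing to $O(|U|(\log\log n)^2/\log n)$ over the $h=\Theta(\log n/\log\log n)$ levels, with one $\log\log n$ factor lost to $\log d$.

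The main obstacle is making this symmetry argument rigorous. I would first try a charging scheme: charge each covered leaf $u$ to the accepted node $v$ covering it. I would then show that $v$ at depth $\ell$ is accepted only if it was active and no active proper ancestor existed, or every active ancestor was blocked. The first event has probability at most $p\,(1-p)^{\ell}$ times the chance that no ancestor is active, and the number of leaves it covers summed over depth-$\ell$ nodes is $|U|$. Summing $p$ over $h$ levels gives a constant, which is too weak. So the real content must come from showing that with probability $1-O((\log\log n)^2/\log n)$ the root region gets ``fragmented'': some active ancestor sits high, and its blocking via a deep accepted descendant covers only a $1/d$ fraction. I expect to need a recursive bound $C(\ell)$ on the expected covered fraction of a depth-$\ell$ subtree, of the form $C(\ell)\le p + (1-p)\,\text{(avg over children)}$ adjusted by the blocking loss. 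Solving this recursion with $d\approx(\log n/\log\log n)^2$ is where I expect the bound to emerge, and getting it right is the hardest step.
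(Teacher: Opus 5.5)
Your decomposition is the right one: a union bound over the root-to-leaf path, $\Pr[u \text{ covered}] \le \sum_{\ell} \Pr[E_\ell]$. But the proposal has a genuine gap. You never bound $\Pr[E_\ell]$ by anything better than $p$, which, as you note, only gives a constant, and the fragmentation argument and recursion $C(\ell)$ are left unspecified.

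The missing idea is short. Under random arrival order, $\pi$ can accept $v_\ell$ only if all $\ell$ ancestors $v_0,\dots,v_{\ell-1}$ have already arrived. To see this, take an ancestor $a$ that has not yet arrived. It is still acceptable unless it already has an accepted descendant (an accepted ancestor of $a$ would block $v_\ell$ as well). Now consider the earliest accepted descendant $x$ of $a$. It could be accepted only if $a$ was already unacceptable when $x$ arrived. Since $a$ had no accepted descendant at that time, this forces $a$ to have arrived before $x$, hence before $v_\ell$. Activity is independent of the order, so $\Pr[E_\ell] \le p\cdot\frac{1}{\ell+1}$. Therefore $\Pr[u\text{ covered}] \le p\,H_{h+1} = O(\log h/h) = O\big((\log\log n)^2/\log n\big)$, using $p=1/(2h)$ and $h=\Theta(\log n/\log\log n)$.

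Two related errors in your plan point away from this argument. First, your rejection rule (``an ancestor is active and not yet blocked'') omits the case that matters: an ancestor that has not yet arrived. That case is what produces the factor $\frac{1}{\ell+1}$. So the relevant symmetry is between $v_\ell$ and its $\ell$ ancestors, not between $v_\ell$ and its siblings and cousins. The bound genuinely depends on the order: in a top-down order, $\pi$ covers each leaf with constant probability. Second, your accounting of the two $\log\log n$ factors does not add up. A per-level bound of $O(\log\log n/\log n)$ summed over $h=\Theta(\log n/\log\log n)$ levels gives $O(1)$. In the correct bound, one factor comes from $p=1/(2h)$ (via $\log d=\Theta(\log\log n)$) and the other from the harmonic sum $H_{h+1}=\Theta(\log h)$.
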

\begin{proof}
    Consider node $i$ with height $\ell \in \{0,\dots,h\}$.
    Such a node will be selected by $\pi$ if and only if it is active and every ancestor of $i$ has already appeared and was inactive.
    Since $i$ has $\ell$ ancestors, the probability of this event is
     $$\Pr[\pi \text{ selects agent $i$ of height $\ell$}] \le p \cdot \frac{1}{\ell+1}(1-p)^{\ell} \le p/(\ell+1)$$.
    
    The probability that an element $u \in U$ is covered by $\pi$ is the probability that one of its ancestors is selected.
    \[
    \Pr[\pi \text{ covers } u] \le \sum_{\ell=0}^h \Pr[\pi \text{ selects $u$'s ancestor of height $\ell$}]
    \le \sum_{\ell=0}^h \frac{p}{\ell+1} = O(p \log h) = O\left(\frac{\log h}{h}\right),
    \]
    where the last inequality follows since $p = 1/(2h)$.

Recall that $h = O\left(\frac{\log n}{\log \log n}\right)$.
The expected utility generated by $\pi$ is thus
\[
\E[\pi] = \sum_{u \in U} \Pr[\pi \text{ covers } u] = O\left(|U| \cdot \frac{(\log \log n)^2}{\log n}\right)
\]

\end{proof}

The proof of \Cref{thm:submodular_gap} follows immediately from the above results.
\begin{proof}[Proof of \Cref{thm:submodular_gap}]
By \Cref{obs:unsafe_SM_gap}, an online policy can achieve expected utility of $\Omega(|U|)$. \Cref{prop:pi_is_opt_safe} and \Cref{prop:pi_util_bound} together imply that a safe policy achieves utility $O\left(|U|\frac{(\log \log n)^2}{\log n}\right)$.
\end{proof}

\section{\boldmath Super-Constant Competitive Ratio for XOS $f$}
\label{sec:xos}

In this section, we show that the constant-factor approximation achievable for submodular reward functions cannot be extended to the broader class of XOS functions. 
Moreover, there is a super-constant gap between the optimal (not necessarily \safe) online policy and the best \safe online policy.
\begin{theorem}\label{thm:xos}
    There exists an instance with $n$ agents and an XOS reward function $f$ such that,
    \begin{itemize}
        \item The expected utility of the prophet is $\Theta\big(\frac{\log n}{\log \log n}\big)$.
        \item The expected utility of the best (not necessarily \safe) online policy is $\Theta \big(\frac{\log \log n}{\log \log \log n}\big)$.
        \item The expected utility of any safe policy is $O(1)$.
    \end{itemize}
\end{theorem}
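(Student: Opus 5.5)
I would instantiate the construction sketched in the introduction. Partition the $n$ agents into $m=\sqrt n$ disjoint groups $G_1,\dots,G_m$ of $\sqrt n$ agents each. Every agent is active with probability $p=1/\sqrt n$. The reward is
\[
f(S)=\frac{1}{K}\max_{j}|S\cap G_j|,
\]
an XOS function (a maximum of $m$ additive clauses), scaled so that $f\le 1$. Here $K=\Theta(\log n/\log\log n)$ is the balls-into-bins maximum. All costs equal a common value $c$, to be chosen so that incentivizing agents from more than roughly $k=\log n/\log\log n$ groups makes payments exceed $1$.

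The point of this choice is how marginals behave. When $S$ contains agents from a group that is not the argmax, each of them has marginal $0$, unless the counts are tied. So an agent that works in equilibrium needs a positive contract calibrated to its marginal $1/K$, giving $\alpha_i\ge cK$. I would then take $c$ so that $cK\cdot(\text{number of hired agents})$ is negligible for hired sets of size $O(\log n/\log\log n)$. The remaining issue is how a policy can hedge. It cannot know which group will end up largest, and a group's agents work only if that group is (near-)max. Hiring one agent per group across many groups therefore costs $\Omega(cK)$ per group, and I would tune $c$ so that hiring agents from more than $k$ groups forces $\sum\alpha_i>1$.

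\textbf{Prophet.} The number of active agents per group is $\mathrm{Bin}(\sqrt n,1/\sqrt n)$, roughly Poisson(1), and $\sqrt n$ such independent groups have maximum $\Theta(\log n/\log\log n)$ with constant probability. The prophet hires exactly the active agents of the largest group, each paid $c/f_i(S)=cK$, and obtains utility $\Theta(1)$. Rescaling by $K$ gives the claimed $\Theta(\log n/\log\log n)$.

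\textbf{Online upper bound (main obstacle).} Any online policy must commit to a positive contract when it first sees an agent of a group. Only agents in groups that received positive contracts can work in equilibrium, because costs are positive. Since each such group costs at least about $cK$, a policy with nonnegative utility touches at most $O(k)$ groups. I would argue that the decision to touch a group is made upon seeing its first active agent, before that group's final count is known. Conditioning on the group being touched, its eventual count is still essentially Poisson(1) by independence, apart from the one observed agent. The realized reward is then at most the maximum of $O(k)$ such counts, which is $\Theta(\log k/\log\log k)=\Theta(\log\log n/\log\log\log n)$. The delicate part is making this rigorous against adaptive policies. I would couple each touched group with a fresh independent Poisson variable and bound the maximum over a stopping-time-chosen family of at most $O(k)$ groups. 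Note the cap is on the number of groups, not an adaptive "best group", and the bound holds since the family is chosen without knowledge of future counts. I would also have to handle policies that pay tiny contracts speculatively; the cost-per-group argument above rules these out, since an agent with $\alpha_i<cK$ never works.

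\textbf{Matching online lower bound.} Hire every active agent from the first $k$ groups to have an active agent, each at contract $2cK$ (slack, as in \Cref{alg:small_agents_SM}). The max group among these works, and free-riders in other groups cost only their total contract, which the choice of $c$ keeps below $1/2$.

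\textbf{Safe policies.} A safe policy needs $f_i(S^{\mathrm{hire}})>0$ for every hired agent, so all hired agents must lie in a single group, apart from ties, which can be handled similarly. The group must be chosen when its first agent is seen, so its expected count is $O(1)$, giving utility $O(1)$.
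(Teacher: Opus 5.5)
\paragraph{Verdict.} Your architecture is the paper's: balls-into-bins for the prophet, a cap on the number of touched groups for the online upper bound, and ``the group is fixed at the first hire'' for safe policies. However, your matching online lower bound fails as written.

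\paragraph{The lower-bound policy overpays.} You choose $c$ so that paying working contracts $cK$ in more than roughly $k$ groups already exhausts the budget, i.e.\ $k\cdot cK\approx 1$. Your policy then touches $k$ groups, hires \emph{every} active agent in them, and pays each $2cK$. Each touched group contributes at least one hire, so $\sum_i\alpha_i\ge 2k\cdot cK\approx 2$ in every realization, and the principal's utility is negative. The underlying problem is that the number of hires is random, uncapped, and positively correlated with the reward. So no choice of $c$ ``keeps the total contract below $1/2$''. Your other requirement, that $cK$ times the number of hires be negligible for sets of size $O(k)$, contradicts $\Theta(k)$ groups exhausting the budget.

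\paragraph{How the paper handles this.} The paper pays the exact incentive $1/(2m)$; no slack is needed, since marginals are exactly $0$ or $1$ in its units. It fixes $m$ columns in advance while the budget affords $2m$ hires. It then \emph{caps} hiring at the first $3m/2$ active agents in those columns, so payments are at most $3/4$ deterministically. It shows the cap rarely binds: let $Z\sim\mathrm{Bin}(m\sqrt n,1/\sqrt n)$ be the number of active agents in the chosen columns and $M$ the largest count among them. A Chernoff bound together with $\Pr[Z=t+1]/\Pr[Z=t]\le 2/3$ for $t\ge 3m/2$ gives $\E[M\cdot\mathbf{1}_{Z>3m/2}]=o(1)$. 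Hence the utility is at least $\frac14(\E[M]-o(1))$.

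An alternative is available to you: the upper bound only needs the number of affordable hires $1/(cK)$ to be $\mathrm{polylog}(n)$, so you could take $k\cdot cK\to 0$. You would still have to control $\E[\sum_i\alpha_i f(S)]$ for the uncapped random payment. Your prophet has the same issue and should hire $\min\{m,X_{j^\star}\}$ agents of the largest group, as the paper does, rather than all of them.

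\paragraph{Smaller points.} In the online upper bound, the quantity to control is the number of \emph{hired} agents in a touched group. This is at most one plus the number of its active agents arriving after the group's first hire. The group's eventual count is not essentially $\mathrm{Poisson}(1)$ given that it is touched, because a policy may wait until a group already has several active agents before touching it. Those earlier agents were never hired and are harmless.

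With the post-touch count $R_\ell$ of the $\ell$-th touched group, no coupling is needed. Conditioned on the history at the touching time, $R_\ell$ is dominated by $\mathrm{Bin}(\sqrt n,1/\sqrt n)$. A union bound over the $O(k)$ slots gives $\Pr[\max_\ell R_\ell\ge u]\le O(k)\cdot\Pr[\mathrm{Bin}(\sqrt n,1/\sqrt n)\ge u]$, which yields $O(\log k/\log\log k)$ despite adaptivity.

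For safe policies, ties need no separate handling. If hired agents span two groups, every hired agent outside the \emph{unique} maximizing group has zero marginal, including agents of a tied group.
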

Our construction adapts the instance presented in \cite{babaioff2007matroids}, which is phrased in terms of online selection,
to the richer online contracting setting.
We consider a $\sqrt{n} \times \sqrt{n}$ grid of identical agents, each agent $i$ has cost $c_i = \frac{1}{2m}$, for some $m$ to be determined later, and is active with probability $p_i = \frac{1}{\sqrt{n}}$.
The agents are partitioned into $\sqrt{n}$ columns, $C_1, \dots, C_{\sqrt{n}}$, and the reward from a set of agents $S$ is determined by the largest intersection with a specific column, i.e., $f$ is the following XOS function $f(S) = \max_{j \in [\sqrt{n}]} |S \cap C_j|$. Do note that $f$ is not submodular.

We describe the optimal contracts and the equilibria in this instance.
Consider some set $S$ of active agents and observe that for any $i \in S$, we have $\marg{i}{S} \in \{0,1\}$:
it is $1$ if $i$ belongs to the unique column with the largest intersection with $S$, and zero otherwise.
Since the costs of all agents is $c = 1/(2m)$, any optimal policy only offers contracts of the form $\alpha_i \in \{0,1/(2m)\}$.
The structure of the equilibrium induced by $\contract$ is as follows:
Let $S = \{i \in A \mid \alpha_i \ge 1/(2m)\}$, and let $j^\star \in \argmax_{j \in [\sqrt{n}]}|S \cap C_j|$ (ties broken arbitrarily). The set of agents $S \cap C_{j^\star}$ is the equilibrium that maximizes $f$, i.e., $S(\contract) = S \cap C_{j^\star}$.
Indeed, any agent $i \in S \cap C_{j^\star}$ has a marginal $\marg{i}{S \cap C_{j^\star}}=1$, and $\alpha_i = c_i / \marg{i}{S \cap C_{j^\star}}$.
On the other hand, for every $i \notin S \cap C_{j^\star}$ we have $\marg{i}{S \cap C_{j^\star}}=0$, and $\alpha_i < c_i / \marg{i}{S \cap C_{j^\star}}$.

\paragraph{The Offline Benchmark.}
Analyzing the offline optimum is  similar to the analysis of \cite{babaioff2007matroids}.
Let $A \subseteq N$ denote the realized set of active agents. For each column $C_j$, the number of active agents $X_j = |A \cap C_j|$ follows a binomial distribution $\mathrm{Bin}(\sqrt{n}, 1/\sqrt{n})$, and the random variables
\(\{X_j\}_{j\in[\sqrt n]}\) are independent. 
A maximum-load argument, similar to the classical balls-into-bins process \cite{babaioff2007matroids,RaabSteger1998BallsIntoBins}
then gives $\E\left[\max_{j\in[\sqrt n]} X_j\right] = \Theta\left(\frac{\log n}{\log\log n}\right)$; we give a self-contained proof, with the explicit leading constant, in \Cref{sec:max-load-proof} (\Cref{prop:max-load-n}):
\[
\mathbb{E}\left[\max_{j\in[\sqrt n]} X_j\right]
=
\left(\frac12+o(1)\right)\frac{\log n}{\log\log n}.
\]
We set $m=\mathbb{E}\left[\max_{j\in[\sqrt n]} X_j\right]$, so the instance is now completely defined.

\begin{proposition}
    \label{prop:xos-prophet-lower-bound}
    The expected offline optimum is $\Omega(\frac{\log n}{\log \log n})$.
\end{proposition}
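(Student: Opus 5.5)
The plan is to exhibit, for each realization $A$, a concrete offline contract and evaluate its utility. Let $j^\star(A)\in\argmax_{j}|A\cap C_j|$ be a column with the most active agents, and take $S = A\cap C_{j^\star(A)}$. Every $i\in S$ has $\marg{i}{S}=1$, since removing $i$ drops the maximum column count from $|S|$ to $|S|-1$. Hence
\[
g(S) = \Big(1-\sum_{i\in S}\tfrac{c_i}{1}\Big)|S| = \Big(1-\tfrac{|S|}{2m}\Big)|S| = \Big(1-\tfrac{M}{2m}\Big)M,
\]
where $M=\max_j X_j$. Since $g(S)$ may be negative when $M>2m$, I will not use $S$ blindly. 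Instead I will use $\max\{g(S),0\}$, which is feasible because the empty set has value $0$. An even cleaner option is to take any $S'\subseteq S$ with $|S'|=\min\{M,m\}$: its agents still all have marginal $1$ (they lie in a single column, and $f(S')=|S'|$), so $g(S')=(1-\tfrac{|S'|}{2m})|S'|\ge \tfrac12\min\{M,m\}$.

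This gives $OPT\ge \tfrac12\E[\min\{M,m\}]$. It then remains to show $\E[\min\{M,m\}]=\Omega(m)$. Because $m=\E[M]=\Theta(\log n/\log\log n)$ by \Cref{prop:max-load-n}, it suffices to show $\Pr[M\ge cm]\ge c'$ for constants $c,c'>0$, which yields $\E[\min\{M,m\}]\ge cm\cdot c'$.

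The main obstacle is this anticoncentration step. I would prove it directly from the binomial tail rather than extracting it from the expectation. Set $k=\lfloor m/2\rfloor$, which is about $\tfrac14\log n/\log\log n$. Then
\[
\Pr[X_j\ge k]\ge\binom{\sqrt n}{k}n^{-k/2}(1-1/\sqrt n)^{\sqrt n}\ge \tfrac{1}{4}\cdot\tfrac{1}{k!}\cdot(1-o(1)),
\]
and $1/k! = n^{-1/4+o(1)}$, which is much larger than $n^{-1/2}$. Since the $X_j$ are independent,
\[
\Pr[M<k]\le(1-n^{-1/4-o(1)})^{\sqrt n}\le \exp(-n^{1/4-o(1)})\to 0.
\]
So $M\ge m/2$ with probability $1-o(1)$. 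Therefore $OPT\ge \tfrac12\cdot\tfrac m2\cdot(1-o(1)) = \Omega(\log n/\log\log n)$.

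If one prefers to avoid Stirling-type estimates, an alternative is a Paley–Zygmund argument using the upper-tail bound on $M$ from the proof of \Cref{prop:max-load-n}. The direct tail computation above is simpler.
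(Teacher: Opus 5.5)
Your proof is correct and follows the paper's route: hire at most $m$ agents from the heaviest column so payments stay at most $1/2$ and $OPT\ge\tfrac12\E[\min\{M,m\}]$. Then a point-probability bound on a single column's binomial count, combined with independence across the $\sqrt n$ columns, shows $M$ exceeds a $\Theta(\log n/\log\log n)$ threshold with probability $1-o(1)$; you take the product $\prod_j\Pr[X_j<k]$ where the paper applies Chebyshev to the number of heavy columns. One small caution: the estimate $1/k!=n^{-1/4+o(1)}$ for $k=\lfloor m/2\rfloor$ needs the leading constant $\tfrac12$ in $\E[M]$, not just the $\Theta$-bound you cite (with only $\Theta$, $1/k!$ could drop below $n^{-1/2}$), so it is cleaner to fix an explicit threshold such as $k=\lfloor \ln n/(4\ln\ln n)\rfloor$, as the paper does with $\ln\sqrt n/(3\ln\ln\sqrt n)$, and use $\min\{M,m\}\ge\min\{k,m\}$.
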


\begin{proof}
Consider the offline algorithm which, given in advance the set $A$ of active agents, offers $\alpha_i=1/(2m)$ to $\min\{m,|A \cap C_{j^\star}|\}$ agents that belong to $C_{j^\star}$, where $j^\star \in \arg\max_{j \in \sqrt{n}} X_j$.
The marginal contribution of each agent with $\alpha_i=1/(2m)$ is $1$, so $\alpha_i = c_i/\marg{i}{S}$ and the agent exerts effort in equilibrium.
The sum of payments is now at most $m \cdot \frac{1}{2m} = \frac12$, so the principal retains at least a $\frac12$ fraction of the reward, and the expected performance of this algorithm is at least $\frac{1}{2} \cdot \E[\min(X_{j^\star},K)]$.

In order to bound $\E[\min(X_{j^\star},K)]$, let $Y_j = 1$ if $X_j \ge m$, and recall that $m = \frac{\ln \sqrt{n}}{3 \ln \ln \sqrt{n}}$. 
\[
\Pr[Y_j = 1] \ge \Pr[X_j = m] = \binom{\sqrt{n}}{m} \left( \frac{1}{\sqrt{n}} \right)^m \left( 1 - \frac{1}{\sqrt{n}} \right)^{n-m} \geq \left(\frac{\sqrt{n}}{m}\right)^m \frac{1}{\sqrt{n}^m} \frac{1}{e} = \frac{1}{m^m e} = \frac{1}{e n^{1/6}}.
\]
So $\mu := \sum_{j \in [\sqrt{n}]} \Pr[Y_j = 1] \geq \frac{n^{1/3}}{e}$. As the $Y_j$ are independent and their variance is bounded by their expectation, Chebyshev's inequality gives us
\[
\Pr\left[\sum_{j \in [\sqrt{n}]} Y_j = 0\right] \leq \Pr\left[\left\lvert\sum_{j \in [\sqrt{n}]} Y_j - \mu \right\rvert \geq \mu\right] \leq \frac{\mathrm{Var}(\sum_{j \in [\sqrt{n}]} Y_j)}{\mu^2} \leq \frac{1}{\mu} \leq \frac{e}{n^{1/3}}.
\]

Consequently,
\[
\E[\max_j \min\{K, X_j\}] \geq \left(1 - \frac{e}{n^{1/3}} \right) m = \Omega\left(\frac{\log n}{\log \log n}\right).
\]
This concludes the proof.
\end{proof}
    
\subsection{Safe Policies}
Note that for a policy to be safe all hired agents must belong to the same column. Roughly speaking, the independence between the different agents implies that any safe policy has to ``guess'' a column and hires all agents from that column.
This argument also appears in \cite{babaioff2007matroids}, and we provide it here for completeness.

\begin{proposition}[Safe Policies Yield $O(1)$ Utility] \label{prop:safe-o1}
    Any safe online policy yields an expected principal utility of at most $O(1)$.
\end{proposition}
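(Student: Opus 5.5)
We first pin down what safety forces here, and then reduce the problem to a stopping argument on a single column.

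\emph{Structure of safe policies.} If a safe policy hires agents from two different columns, consider the column of largest intersection with the hired set $S^{\mathrm{hire}}$. Every hired agent outside it has $\marg{i}{S^{\mathrm{hire}}}=0$. If there are ties, then every hired agent has marginal $0$. In either case the safety condition $\alpha_i \ge c_i/\marg{i}{S^{\mathrm{hire}}}$ fails, because $c_i>0$. So in every realization all hired agents of a safe policy lie in a single column. The principal's utility is then at most $|S^{\mathrm{hire}}|$, since $f(S(\contract))\le |S^{\mathrm{hire}}\cap C_j|$ and payments are nonnegative. It therefore suffices to show that $\E[|S^{\mathrm{hire}}|]=O(1)$.

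\emph{Key step: the column is committed at the first hire.} Consider the first time the policy hires an active agent $i$, and let $C_j$ be its column. From then on, only agents of $C_j$ can be hired. The decision to hire $i$ is a stopping time with respect to the history, so it depends only on the activity of agents that have already arrived. The agents of $C_j$ that arrive after $i$ are independent of this history, and each is active with probability $1/\sqrt n$. There are at most $\sqrt n$ of them, so the expected number of active agents among them is at most $\sqrt n\cdot \frac{1}{\sqrt n}=1$. The policy hires $i$ itself, plus at most all later active agents of $C_j$. Hence $\E[|S^{\mathrm{hire}}|]\le 1+1=2$.

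To make this rigorous, I would condition on the arrival time $t$ and the identity of the first hired agent. The event $\{\text{first hire at } t\}$ is measurable with respect to the activity of agents $1,\dots,t$ and the policy's internal randomness. By independence, the agents after $t$ remain i.i.d.\ Bernoulli$(1/\sqrt n)$ conditional on this event. Summing over $t$ gives the bound.

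\emph{Expected obstacle.} The subtle point is the case where the policy makes the first hire in $C_j$ precisely because many active agents of $C_j$ have already appeared. Those earlier active agents were not hired, and since hiring is irrevocable and online, they cannot be hired later. So only $i$ and the agents after it count, and the bound holds. A second subtlety is randomized policies and offers made to inactive agents. Inactive agents never belong to $A$, and randomization is handled by the same conditioning. Combining these steps, any safe policy has expected utility at most $2=O(1)$.
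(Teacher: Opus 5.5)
Your proof is correct and follows essentially the same route as the paper's: safety forces all hired agents into a single column, the first hire commits the policy to that column, and the at most $\sqrt{n}$ later agents of that column add at most $1$ in expectation, for a total bound of $2$. You are somewhat more explicit than the paper, since you handle ties in the single-column argument and treat the first hire as a stopping time, but the underlying argument is the same.
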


\begin{proof}
    Consider the optimal safe policy and let $T$ denote the (random) set of agents that receive a non-zero contract.
    By the definition of a safe policy, for every realization of $T$, all agents in $T$ exert effort. In particular, every agent in $T$ has strictly positive marginal utility, or the principal will suffer infinite payment. Thus, for any realized set $A$ of active agents, there exists some column $C_j$ such that $T \subseteq A \cap C_j$.

    If the policy never makes any offers, its utility is $0$. 
    Otherwise, let $t$ be the first time step in which the algorithm offers a non-zero contract, $\alpha_t > 0$ 
    to agent $t$. Let $C^\star$ denote the column that agent $t$ belongs to. 
    Since the policy is safe, all  agents which eventually receive a non-zero contract 
    must belong to $C^\star$.

     Conditioned on this, let 
     $Y_t \subseteq C^\star \setminus \{t\}$
     be the set of agents in 
     $C^\star$ which did not arrive by time $t$. 
     Every agent in $Y_t$ is active with probability $1/\sqrt{n}$ independent of all other agents.
     Thus, the expected number of active agents from $C^\star$ 
     arriving after $t$ is given by $|Y_t|/\sqrt{n} \le 1$.
     Since each active agent in 
     $C^\star$ contributes at most $1$ to the principal's utility, we have that the expected utility, together with agent $t$, 
     is at most $2$.
     The above holds irrespective of how $t$ is chosen, and the claim follows.
\end{proof}

\subsection{General Policies}

We turn to analyze general policies, which may hire agents from different columns. Here, we depart from the selection problem of \cite{babaioff2007matroids}, in which any algorithm can only select elements from a single column.
We utilize two new insights to establish a separation between the offline optimum, the optimal online policy and the optimal \safe policy.
We first upper bound the performance of any online policy, by observing that the principal can incentivize at most $2m$ agents, as incentivizing one costs $1/(2m)$. This roughly implies that an online policy can achieve the expected maximum among $2m=\Theta(\log n/ \log \log n)$ columns, compare to the $\sqrt{n}$ columns in the offline optimum.
Then we characterize the optimal online policy for this instance (up to a constant factor): Pick $m$ columns at random and offer $\alpha_i=1/(2m)$ to the first $3m/2$ agents in those columns. 
We show that even with the budget constraint, the policy achieves the expected maximum among $m$ columns with constant probability. As such, its expected utility is only a constant-factor away from the upper bound.

\begin{proposition} 
\label{prop:impossibility-xos-safe}
    Any online policy, safe or not, yields an expected utility of $O(\frac{\log \log n}{\log \log \log n})$. 
\end{proposition}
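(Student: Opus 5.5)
The plan is to bound the principal's utility by the number of working agents, and then show that an online policy can only ``bet'' on few columns before seeing their realizations. By the equilibrium structure described above, the working set is $S(\contract)=S\cap C_{j^\star}$, where $S=\{i\in A\mid \alpha_i\ge 1/(2m)\}$. Agents with $0<\alpha_i<1/(2m)$ never work, since every marginal lies in $\{0,1\}$; they only reduce the principal's share. So the realized utility is at most $(1-|S|/(2m))\cdot \max_j|S\cap C_j|$, which is nonpositive once $|S|\ge 2m$. Hence we may assume without loss that the policy makes at most $2m$ ``real'' offers. Call a column \emph{opened} when the policy first makes a real offer to an agent in it. Since each opened column consumes at least one offer, at most $2m$ columns are ever opened.

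The key step is to bound $\E[\max_{j \text{ opened}} |S\cap C_j|]$. Fix the moment $t_j$ at which column $j$ is opened, which is a stopping time for the policy. Because activations are independent and the arrival order is fixed, the number of active agents of $C_j$ arriving after $t_j$ is stochastically dominated by $\mathrm{Bin}(\sqrt n,1/\sqrt n)$, independently of everything observed so far. The value $|S\cap C_j|$ counts the opening agent, the agents of $C_j$ selected before $t_j$ (none, by definition), and the later active ones. So $|S\cap C_j|\le 1+Z_j$ with $Z_j\preceq \mathrm{Bin}(\sqrt n,1/\sqrt n)$. I would formalize this by coupling: for the $k$-th opened column, $k\le 2m$, assign a fresh independent family of ``future'' activation bits. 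This gives $\max_{\text{opened}}|S\cap C_j| \preceq 1+\max_{k\le 2m} Z_k$, where the $Z_k$ are i.i.d. binomials. One subtlety is that $Z_k$ is revealed progressively and the policy may adapt. Still, the maximum over at most $2m$ columns of (1 + all future active agents) upper bounds whatever it selects, and domination holds column by column with independence across columns, since columns are disjoint sets of agents.

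It then remains to estimate $\E[\max_{k\le 2m}Z_k]$ for $2m=\Theta(\log n/\log\log n)$ i.i.d. variables $Z_k\sim\mathrm{Bin}(\sqrt n,1/\sqrt n)$. This follows from the standard max-load tail bound: $\Pr[Z\ge r]\le \binom{\sqrt n}{r}n^{-r/2}\le 1/r!$. A union bound gives $\Pr[\max_k Z_k\ge r]\le 2m/r!$, which is small once $r\log r\gtrsim \log m$, i.e. for $r=\Theta(\log m/\log\log m)=\Theta(\log\log n/\log\log\log n)$. Summing the tail beyond that point contributes $O(1)$. Since the principal's share is at most $1$, the expected utility is $O(\log\log n/\log\log\log n)$.

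The main obstacle is the adaptivity in the coupling step. The policy chooses which columns to open based on partial observations, for instance by opening a column after seeing several of its early agents active and deliberately declining them. However, those declined agents are not in $S$, so they do not count. The only counted agents are the ones after opening, and their activations are fresh. I would make this rigorous via a principle-of-deferred-decisions argument over the at most $2m$ opening events.
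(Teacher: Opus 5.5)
Your proposal is correct and follows essentially the same route as the paper's proof. Each effective offer costs at least $1/(2m)$, so at most $2m$ agents, and hence at most $2m$ columns, can be hired while keeping nonnegative utility; the active agents of each opened column arriving after it is opened are dominated by an independent $\mathrm{Bin}(\sqrt n,1/\sqrt n)$; and the expected maximum of $2m=\Theta(\log n/\log\log n)$ such variables is $O(\log m/\log\log m)=O(\log\log n/\log\log\log n)$. Your explicit deferred-decisions coupling for the adaptively opened columns is, if anything, more careful than the paper, which simply appeals to the argument of Proposition~\ref{prop:safe-o1} for both the domination and the independence across columns.
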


\begin{proof}
    First, observe that any agent which receives $\alpha_i <1/(2m)$ will not exert effort under any equilibrium, and any agent which is incentivized to work by $\alpha_i < 1$, can also be incentivized by setting $\alpha_i=1/(2m)$.
    Thus, we can assume without loss of generality that an online policy only offers contracts of the form $\alpha_i \in \{0, \frac{1}{2m}\}$.
    Thus, any optimal policy will, in any realization, offer non-zero contracts to at most $2m$ agents, and will span at most $2m$ different columns. 

    Assume the online policy has offered contracts to agents from $k \le 2m$ different columns, and denote them by $C_1,\dots, C_k$.
    Applying the same argument as in \Cref{prop:safe-o1}, for any $j \in [k]$, the number of active agents in $C_j$ yet to arrive is stochastically dominated by $Z_j \sim \mathrm{Bin}(\sqrt{n},1/\sqrt{n})$.
    
    Let $S$ be the agents that exert effort. 
    Since the number of queried columns is $k \leq 2m$, we can upper bound the principal's utility by $1$ plus the expected maximum of $2m$ independent Binomial random variables $Z_j \sim \mathrm{Bin}(\sqrt{n},1/\sqrt{n})$:
    \[
    1+\E\bigg[\max_{j \in [k]} Z_j\bigg] 
    = O\bigg(\frac{\log k}{\log \log k}\bigg) 
    = O\bigg(\frac{\log 2m}{\log \log 2m}\bigg) 
    = O\bigg(\frac{\log \log n}{\log \log \log n}\bigg).
    \]
    This concludes the proof.
\end{proof}

Below we show the tightness of the upper bound achieved in \Cref{prop:impossibility-xos-safe} and design an online policy that achieves expected utility of $\Omega\bigg(\frac{\log \log n}{\log \log \log n}\bigg)$. 

\begin{proposition}
    There exists an online policy which yields an expected utility of $\Omega\bigg(\frac{\log \log n}{\log \log \log n}\bigg)$.
\end{proposition}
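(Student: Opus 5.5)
The policy is the one sketched above. Before any agent arrives, fix a set $J$ of $k = \lfloor m/2 \rfloor$ columns; by symmetry of the instance, any fixed choice works. Offer $\alpha_i = \frac{1}{2m}$ to every active agent in a column of $J$, until $m$ contracts have been issued in total, and offer $\alpha_i = 0$ to everyone else. The total payment is then at most $m\cdot\frac{1}{2m}=\frac12$ in every realization, so the principal always keeps at least half of the reward.

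We first check the equilibrium. By the characterization given above, the best equilibrium is $S\cap C_{j^\star}$, where $S$ is the set of contracted agents and $j^\star$ maximizes $|S\cap C_j|$. Hence the principal's utility is at least $\frac12\max_{j\in J}|S\cap C_j|$. If the cap of $m$ contracts is never reached, then $|S\cap C_j| = X_j$ for every $j \in J$, and the utility is at least $\frac12\max_{j\in J}X_j$.

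Let $\mathcal{E}$ be the event that $\sum_{j\in J}X_j \le m$. The sum $\sum_{j\in J}X_j$ is distributed as $\mathrm{Bin}(k\sqrt n,1/\sqrt n)$ with mean $k\le m/2$. By Markov's inequality, $\Pr[\mathcal E^c]\le \frac12$; Chebyshev's inequality makes this probability $o(1)$, since $m\to\infty$. Therefore
\[
\E[\text{utility}] \ge \tfrac12\,\E\Big[\max_{j\in J}X_j\cdot \mathbf 1[\mathcal E]\Big].
\]

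The main obstacle is that $\max_j X_j$ and $\mathcal E$ are negatively correlated, so the product cannot simply be bounded by $\Pr[\mathcal{E}]$ times the expectation. I would handle this with a threshold argument like the one in \Cref{prop:xos-prophet-lower-bound}. Set $r=\frac{\ln k}{3\ln\ln k}$, so that $r^r\le k^{1/3}$. For each $j$,
\[
\Pr[X_j\ge r] \ge \binom{\sqrt n}{r}n^{-r/2}(1-1/\sqrt n)^{\sqrt n}\ge \frac{1}{e\,r^r}\ge \frac{1}{e\,k^{1/3}}.
\]
Let $N=\#\{j\in J: X_j\ge r\}$. Then $\E[N]\ge k^{2/3}/e$, and by Chebyshev's inequality $\Pr[N=0]\le e\,k^{-2/3}$. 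Hence $\Pr[\max_{j\in J}X_j\ge r]\ge 1-o(1)$. A union bound now gives
\[
\Pr\Big[\mathcal E\wedge \max_{j\in J}X_j\ge r\Big]\ge 1-o(1),
\]
and therefore the expected utility is at least $\frac12(1-o(1))\,r$.

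To finish, recall that $m=\Theta(\log n/\log\log n)$, so $k=\Theta(m)$. This gives $\ln k=\Theta(\log\log n)$ and $\ln\ln k=\Theta(\log\log\log n)$, so $r=\Omega\big(\frac{\log\log n}{\log\log\log n}\big)$. This establishes the proposition, and together with \Cref{prop:impossibility-xos-safe} it shows the bound is tight.
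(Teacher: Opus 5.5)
Your proof is correct, and it reaches the bound by a different analysis than the paper's.

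\textbf{The paper's route.} The paper picks $m$ columns at random and caps the number of contracts at $3m/2$, so payments are at most $3/4$. It writes the utility as at least $\frac14\big(\E[M]-\E[M\cdot\mathbf 1_{\neg\mathcal E}]\big)$, where $M$ is the maximum load. It then invokes $\E[M]=\Theta(\log m/\log\log m)$ for $m$ bins, which it asserts without separate proof; the appendix only treats $\sqrt n$ bins. Finally it bounds the error term using $M\le Z$, a Chernoff bound, and the pmf-ratio estimate of Claim~\ref{cla:binomial_ratio} to control $\E[Z\mid Z\ge 3m/2]$.

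\textbf{Your route.} You take only $\lfloor m/2\rfloor$ columns with a cap of $m$. The cap is then twice the mean load, so one Chebyshev bound makes it non-binding with probability $1-O(1/m)$. You avoid the negative correlation between $M$ and $\mathcal E$ altogether by working with the high-probability event $\{M\ge r\}$ and a union bound, rather than with $\E[M]$.

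\textbf{What each buys.} Your argument is fully self-contained. It uses only Chebyshev and the single-column pmf lower bound already appearing in Proposition~\ref{prop:xos-prophet-lower-bound}. It needs neither the $m$-bin max-load estimate nor any conditional tail computation. The paper's argument gives a bound stated directly in terms of $\E[M]$, the same quantity that drives the matching upper bound in Proposition~\ref{prop:impossibility-xos-safe}, with a somewhat better constant. For an $\Omega(\cdot)$ statement that difference is immaterial.

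\textbf{Two cosmetic fixes.} The last factor in your pmf bound should be $(1-1/\sqrt n)^{\sqrt n-r}$, since $(1-1/\sqrt n)^{\sqrt n}<1/e$. Also, $r$ should be rounded down to an integer; this preserves $r^r\le k^{1/3}$ and does not change the asymptotics.
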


\begin{proof}
    Consider an online policy which first picks $m$ columns uniformly at random, denote them $C_1,\dots,C_m$. 
    The policy offers a contract $\alpha_i = 1/(2m)$ to the first $3m/2$ agents which belong to $C_{[m]}=\cup_{j \in [m]} C_j$.
    As such, the total payment does not exceed $3/4$.

    Let $Z$ denote the number of active agents in $C_{[m]}$, it holds that $Z \sim \textrm{Bin}(m \sqrt{n},1/\sqrt{n})$, and in particular $\E[Z]=m$. 
    Let $M = \max_{j \in [m]} X_j$ be the maximal number of active agent in a single column among $C_1,\dots,C_m$. 
    Clearly, $M \le Z$.
    
    Let $\mathcal{E}$ be the ``good'' event in which $Z \le \frac{3}{2}m$. 
    Conditioning on the good event, the principal offers a contract to any active agent in $C_{[m]}$. Since the reward is determined by the column with the most active agents among $C_1,\dots,C_m$, the principal's utility under this event is at least $(1-3/4)\cdot M=M/4$.
    
    We can lower bound the principal's expected utility by 
    \[
    \frac14\E[M\cdot\mathbf{1}_{\mathcal{E}}] = \frac14\big(\E[M] - \E[M\cdot \mathbf{1}_{\neg\mathcal{E}}]\big)
    \]

    As $M$ is the maximum of $m$ i.i.d random variables distributed according to $\mathrm{Bin}(\sqrt{n},1/\sqrt{n})$, 
    \[
    \E[M] 
    = \Theta\bigg(\frac{\log m}{\log \log m}\bigg)
    = \Theta\bigg(\frac{\log \log n}{\log \log \log n}\bigg),
    \]
    where the last equality holds since $m = \frac{\log n}{\log \log n}$.
    Thus, it is enough to show that the error term, $\E[M\cdot \mathbf{1}_{\neg\mathcal{E}}]$, tends to zero as $n$ grows.

    We first use the fact that $M \le Z$. 
    \begin{align*}
        \E[M\cdot \mathbf{1}_{\neg\mathcal{E}}]
        &\le
        \E[Z \cdot \textbf{1}_{Z>3m/2}] \\
        &\le \Pr\bigg[Z \ge \frac{3m}{2}\bigg] \cdot \E\bigg[Z \mid Z\ge \frac{3m}{2}\bigg] \\
        &\le
        \exp\bigg(\frac{-m}{10}\bigg) \cdot  \E\bigg[Z \mid Z\ge \frac{3m}{2}\bigg] && (\text{Chernoff})
    \end{align*}

    To bound the conditional expectation, we use the rapid decay of the upper tail of the binomial distribution. 
    \begin{align*}
        \E\bigg[Z \mid Z\ge \frac{3m}{2}\bigg]
        &=
        \frac{3m}{2} + \sum_{\ell=1}^{m\sqrt{n}-3m/2} \Pr\bigg[Z \ge \frac{3m}{2} + \ell \mid Z \ge \frac{3m}{2}\bigg] \\
        &=
        \frac{3m}{2} + \sum_{\ell=1}^{m\sqrt{n}-3m/2} \frac{ \Pr[Z \ge \frac{3m}{2} + \ell]}{ \Pr[Z \ge \frac{3m}{2}]} \\
        &=
        \frac{3m}{2} + \sum_{\ell=1}^{m\sqrt{n}-3m/2} \left(\prod_{j=0}^{\ell-1}  \frac{ \Pr[Z \ge \frac{3m}{2} + j+1]}{ \Pr[Z \ge \frac{3m}{2} + j]}\right) && (\text{telescopic product})\\\\
        &\le
        \frac{3m}{2} + \sum_{\ell=1}^{m\sqrt{n}-3m/2} \bigg(\frac{2}{3}\bigg)^{\ell} && (\text{\Cref{cla:binomial_ratio}})\\
        &\le 
        \frac{3m}{2} + 2
    \end{align*}

    \begin{claim}\label{cla:binomial_ratio}
        For any natural $t \ge \frac{3m}{2}$, it holds that $\frac{\Pr[Z=t+1]}{\Pr[Z=t]} \le \frac23$.
    \end{claim}
    \begin{proof}
    Let $K = m \sqrt{n}$ and $p = 1/\sqrt{n}$, it holds that $Z \sim \textrm{Bin}(K,p)$, and $m = Kp$.
    \begin{align*}
        \frac{\Pr[Z=t+1]}{\Pr[Z=t]}
        &=
        \frac{\binom{K}{t+1}p^{t+1}\left(1-p\right)^{K-(t+1)}}{\binom{K}{t}p^{t}\left(1-p\right)^{K-t}} \\
        &=
        \frac{K-t}{t+1} \cdot \frac{p}{1-p} \\
        &\le
        \frac{K-3Kp/2}{3Kp/2} \cdot \frac{p}{1-p} && (t \ge 3m/2=3Kp/2)\\
        &=
        \frac{1/p-3/2}{3/2} \cdot \frac{p}{1-p} \\
        &=
        \frac23 \cdot \frac{1-3p/2}{1-p} 
        \le \frac23
    \end{align*}
    \end{proof}
    
    Putting everything together we get that
    \[
    \E[M \cdot \textbf{1}_{\neg\mathcal{E}}] \le \exp\bigg(\frac{-m}{10}\bigg)\cdot \bigg(\frac{3m}{2}+2\bigg) = o(1),
    \]
    which concludes the proof.
\end{proof}

\bibliographystyle{alpha}
\bibliography{refs}

\appendix
\crefalias{section}{appendix}

\section{\boldmath Lower Bound of $9/4$ for Additive and $2$-Demand $f$}
\label{sec:add_lower_bound}

In this section we present a lower bound on the competitive ratio of any policy for $2$-demand or additive $f$.

\additivelowerbound*

Consider the following instance with four additive agents. Agents~1 and~3 are deterministic, while agents~2 and~4 are active with probability $0 < p_i < 1$.
The construction is specified in \Cref{tab:54_lb}.

\begin{table}[h]
    \centering
    \begin{tabular}{|c|c|c|c|c|}
        \hline
        \textbf{Agent} & \textbf{1} & \textbf{2} & \textbf{3} & \textbf{4} \\
        \hline
        $p_i$ & 1 & 1/2 & 1 & 1/M \\
        \hline
        $(f_i,c_i)$ & $(1/2,1/M)$ & $(1,2/M)$ & $(M,M-1)$ & $(M^2,M^2-M)$ \\
        \hline
        $c_i/f_i$ & $2/M$ & $2/M$ & $1-1/M$ & $1-1/M$ \\
        \hline
        $\E_{p_i}[f_i-c_i]$ & $1/2-1/M$ & $1/2-1/M$ & $1$ & $1$ \\
        \hline
    \end{tabular}
    \caption{
An instance with four agents for which the competitive ratio of any algorithm is at least $9/4$.
The first row specifies the probability that agent~$i$ is active. The second row lists the corresponding parameters $(f_i,c_i)$. The third row gives the minimal contract required to incentivize agent~$i$, if active. The fourth row specifies the expected utility obtained by hiring agent~$i$ alone, if active. The parameter~$M$ tends to infinity.
}
    \label{tab:54_lb}
\end{table}

The following observation, which enumerates all sets of agents that should be considered by the principal, suggests that the above instance can be reduced to a $2$-demand instance with the individual values.
\begin{observation}\label{obs:lb_possible_sets}
    Only the sets $\{\{1\}, \{2\}, \{3\}, \{4\},\{1,2\}\}$, may yield positive utility to the principal.
\end{observation}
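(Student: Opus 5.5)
The plan is to verify, for every subset $S\subseteq\{1,2,3,4\}$ other than the five listed ones, that the offline objective $g(S)=\bigl(1-\sum_{i\in S}c_i/\marg{i}{S}\bigr)f(S)$ is non-positive. Since $f$ is additive, $\marg{i}{S}=f_i$ for every $i\in S$. Hence the total payment needed to incentivize $S$ is simply $\sum_{i\in S}c_i/f_i$, and these ratios are read off the third row of \Cref{tab:54_lb}. Any $S$ whose ratios sum to at least $1$ yields utility at most $0$, so it can be discarded.

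The key step is the observation that agents $3$ and $4$ each require a contract of $1-1/M$. Any set containing one of them together with any other agent therefore needs total payment at least $(1-1/M)+\min_{j\neq i}c_j/f_j$. For pairs within $\{3,4\}$ this is $2-2/M$, and for a pair with $1$ or $2$ it is $1-1/M+2/M=1+1/M$. Both exceed $1$, so every such set has negative utility. By contrast, agents $1$ and $2$ need only $2/M$ each, so $\{1,2\}$ requires payment $4/M<1$ and is viable. Every remaining subset is either a singleton, $\{1,2\}$, or contains $3$ or $4$ together with another agent, and the last group has just been excluded. The empty set gives $0$, so it is not "positive". This case split covers all subsets.

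For the 2-demand reading of the statement, I would make the following note. Every surviving set has size at most $2$, so $f$ agrees with its additive version on all of them. On the excluded larger sets, 2-demand marginals can only be smaller, since some $\marg{i}{S}$ may be $0$, which means infinite payment. So the payments required can only increase, and those sets remain non-positive. There is no real obstacle here; the only care needed is to handle the case where the marginal is zero, treating the payment as infinite per the paper's convention, and to take $M>4$, which holds since $M\to\infty$.

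Finally, I would add one caveat about policies that are not safe. An online policy might offer contracts to additional agents who then shirk. Such contracts still cost the principal, and the resulting working set must itself be one of the sets above. So for bounding achievable utility it suffices to consider which working sets yield positive utility, and the argument above handles exactly that.
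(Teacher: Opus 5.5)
Your proposal is correct and follows the paper's reasoning. The paper states this observation without proof, but the same argument appears in the following lemma: by the $c_i/f_i$ row of the table, any set containing agent 3 or 4 together with another agent needs total payment at least $1+1/M>1$, so only the singletons and $\{1,2\}$ can have positive utility. In your 2-demand remark, the reason payments can only increase is subadditivity of $f$, which gives $f_i(S)\le f_i$; the possibility of zero marginals is not the reason, though your conclusion is unaffected.
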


\begin{lemma}
    The prophet's expected utility in the example depicted in Table~\ref{tab:54_lb} is at least $9/4$.
\end{lemma}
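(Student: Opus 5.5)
The plan is to lower bound the prophet's expected utility by exhibiting, for each realization of the random agents $2$ and $4$, a concrete feasible contract (a working set $S\subseteq A$ together with $\alpha_i=c_i/\marg{i}{S}$). The prophet's utility in that realization is at least the value $g(S)$ of this contract. Only agents $2$ and $4$ are random, so there are four realizations, which we weight by $\Pr[\text{2 active}]=1/2$ and $\Pr[\text{4 active}]=1/M$. We then take $M\to\infty$.

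By \Cref{obs:lb_possible_sets}, the candidate sets are $\{1\},\{2\},\{3\},\{4\},\{1,2\}$. The rewards are additive, so in every candidate set the marginals equal the singleton values, and $g(S)=(1-\sum_{i\in S}c_i/f_i)\sum_{i\in S}f_i$. We record the values directly:
\begin{itemize}
\item $g(\{3\})=M\cdot\frac1M=1$;
\item $g(\{4\})=M^2\cdot\frac1M=M$;
\item $g(\{1,2\})=(1-\frac4M)\cdot\frac32\to\frac32$;
\item $g(\{1\})=\frac12-\frac1M$.
\end{itemize}

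The case analysis is as follows.
\begin{itemize}
\item If agent $4$ is active (probability $1/M$), the prophet hires $\{4\}$ and gets $M$. This contributes exactly $1$ to the expectation.
\item If agent $4$ is inactive and agent $2$ is active (probability $\frac12(1-\frac1M)$), the prophet takes $\{1,2\}$ and gets about $3/2$.
\item If both are inactive (probability $\frac12(1-\frac1M)$), the prophet takes $\{3\}$ and gets $1$.
\end{itemize}
Summing gives
\[
\frac1M\cdot M+\Big(1-\frac1M\Big)\Big(\frac12\cdot\frac32\Big(1-\frac4M\Big)+\frac12\cdot 1\Big)\;\longrightarrow\;1+\frac34+\frac12=\frac94 .
\]
To get ``at least $9/4$'' exactly rather than only in the limit, I would redo this step keeping the $O(1/M)$ terms. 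Since the prophet's realized value is a maximum over candidate sets, in the $4$-active branch I can also use $\max\{g(\{4\}),\dots\}\ge M$. If the $1/M$ corrections still leave a slight deficit, I would state the bound as $9/4-O(1/M)$, which is what the competitive-ratio argument actually needs as $M\to\infty$.

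The main obstacle is this bookkeeping of the lower-order terms together with the feasibility checks. For each chosen set I have to confirm that the payment sum is below $1$ (for example $4/M<1$ for $\{1,2\}$, and $1-1/M$ for $\{3\}$ and $\{4\}$). The choices themselves are straightforward.
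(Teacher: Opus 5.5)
Your proposal is correct and is essentially the paper's proof: the same case split (hire $\{4\}$ when agent~4 is active, $\{1,2\}$ when only agent~2 is active, $\{3\}$ otherwise), giving $\frac1M\cdot M+(1-\frac1M)\bigl(\frac12(\frac32-\frac6M)+\frac12\bigr)\to\frac94$. As you suspected, the lower-order terms do leave a deficit: the expression equals $\frac94-\frac{17}{4M}+\frac{3}{M^2}$, and these choices are optimal for large $M$. So the bound holds only in the limit $M\to\infty$, which is exactly what the paper establishes and all that the competitive-ratio argument needs.
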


\begin{proof}
    If agent $4$ is active, then the prophet obtains $g(\{4\}) = (1-\alpha_4)f_4=M$.
    If agent $2$ is active, the principal can incentivize the set $\{1,2\}$ for a utility of $(1-\alpha_1-\alpha_2)(f_1+f_2)=(1-4/M)(1/2+1)=3/2-6/M$.
    If neither is active, the best utility achievable is $g(\{3\})=(1-\alpha_3)f_3=1$.
    Thus, the prophet's expected utility is at least
    \[
    \frac{1}{M} \cdot M + \left(1-\frac{1}{M}\right)\left( \frac{1}{2} \cdot \left(\frac{3}{2}-\frac{6}{M}\right) + \frac{1}{2} \cdot 1\right) \overbrace{
    \longrightarrow}^{M \to \infty} \frac{9}{4}
    \]
\end{proof}

\begin{lemma}
In the example depicted in Table~\ref{tab:54_lb}, no online algorithm can achieve an expected utility better than $1$.
\end{lemma}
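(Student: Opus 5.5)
The plan is a backward-induction argument over the four arrivals, taking $M\to\infty$ throughout. By \Cref{obs:lb_possible_sets}, I may restrict attention to policies whose eventual working set is one of $\{1\},\{2\},\{3\},\{4\},\{1,2\}$.

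First I record what each of these sets is worth, using the minimal contracts $\alpha_i = c_i/f_i$ from \Cref{tab:54_lb}:
\begin{itemize}
\item $\{1\}$ gives $1/2-1/M$.
\item $\{2\}$ gives $1/2-1/M$.
\item $\{1,2\}$ gives $3/2-6/M$.
\item $\{3\}$ gives $1$.
\item $\{4\}$ gives $M$ when active, hence $1$ in expectation.
\end{itemize}
I also use the fact that pairs containing agent $3$ or agent $4$ cost about $1-1/M$ plus $2/M$. This leaves nonpositive utility, which is exactly the content of \Cref{obs:lb_possible_sets}.

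\textbf{Stage analysis.} I proceed backward.
\begin{itemize}
\item At agent $4$, offering $\alpha_4\approx 1-1/M$ is worthwhile only if nothing else has been hired, and then it yields expected value $1$.
\item At agent $3$ (deterministic), with nothing useful hired, the options are to take $3$ for utility $1$ or to wait for $4$ for utility $1$. Either way the continuation is $1$.
\item Suppose agent $1$ was hired at $\alpha_1\ge 2/M$. Hiring $3$ or $4$ on top leaves total payments near $1$, so the principal's share of $f$ is $O(1/M)\cdot f$. Since $f_4=M^2$, I must check this carefully: with $\alpha_4 = 1-1/M$ and $\alpha_1$ also paid, the share is at most $(1/M - 2/M)<0$. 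So hiring a large agent after $1$ is never beneficial unless the contract to agent $1$ is ignored through free-riding.
\end{itemize}
This point is the main obstacle. A non-safe policy could hire agent $1$ and later hire $4$, letting agent $1$ shirk in equilibrium while still being paid a share of the realized reward. That extra payment makes the total exceed $1$. I will argue that any payment $\alpha_1>0$ remains a liability regardless of whether agent $1$ works. The principal pays $\sum\alpha_i$ times $f(S)$, so once $\alpha_1\ge 2/M$ is committed, pairing it with $\alpha_4\ge c_4/f_4 = 1-1/M$ gives total payment at least $1+1/M>1$, hence negative utility. A smaller $\alpha_1$ does not incentivize agent $1$ at all, and is therefore equivalent to not hiring it.

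\textbf{Decision at agents 1 and 2.} If agent $1$ is hired, then either $2$ is active, giving $3/2$ up to $O(1/M)$, or it is not, giving $1/2$. The expectation is $1$. Alternatively, if $2$ is not active the principal could try to recover by hiring $3$, but the previous paragraph shows that hiring $3$ after $1$ gives at most $O(1/M)$ because $\alpha_1+\alpha_3 \ge 1+1/M$. If agent $1$ is not hired, then hiring $2$ alone is worth $1/2<1$, and skipping leads to the continuation value $1$. Randomization over these options cannot beat their maximum.

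Every branch is therefore worth at most $1+O(1/M)$, so no online algorithm exceeds expected utility $1$ in the limit. Combined with the prophet's $9/4$, this gives the claimed ratio.
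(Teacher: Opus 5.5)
Your proposal is correct and follows essentially the paper's route: split on whether agent $1$ is hired. Hiring agent $1$ at $2/M$ rules out agents $3$ and $4$, since total payments would be at least $1+1/M$, which caps the value at $\frac12(\frac12-\frac1M)+\frac12(\frac32-\frac6M)\to 1$; skipping agent $1$ leaves agent $3$ or agent $4$ as the best continuation, with expected value $1$. Two minor points, neither of which affects the conclusion: the free-riding worry is moot because $f$ is additive, so each agent's marginal is always $f_i$ and working is dominant whenever $\alpha_i\ge c_i/f_i$; and when deciding on agent $2$ you should compare its conditional value $1-2/M$ given that it is active (not $1/2$) against the continuation value $1$.
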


\begin{proof}
    Clearly, any optimal algorithm offers each active agent either
$\alpha_i = c_i / f_i$ or $\alpha_i = 0$.

Suppose the algorithm offers $\alpha_1 = c_1 / f_1 = 2/M$.
In this case, it must set $\alpha_3 = \alpha_4 = 0$; otherwise, the total payment would be at least~1, resulting in non-positive utility.

Recall that hiring only agent 1 yields utility 
$(1-2/M)\cdot 1/2=1/2-1/M$. 
If agent 2 is active, then hiring both agents 1 and 2 yields utility
$(1-2/M-2/M)(1+1/2) = 3/2-6/M$.
Therefore, the algorithm's expected utility in this case is at most
\[
\frac12 \left(\frac12-\frac{1}{M}\right)
+
\frac12 \left(\frac32-\frac{6}{M}\right) \overbrace{
\longrightarrow}^{M \to \infty} 1.
\]

Alternatively, if the algorithm skips agent~1 and offers $\alpha_1 = 0$, then by \Cref{obs:lb_possible_sets}, an optimal algorithm selects the agent maximizing
$\mathbb{E}_{p_i}\!\left[(1 - \alpha_i) f_i\right]$.
This expression is maximized by either agent~3 or~4 (see \Cref{tab:54_lb}), yielding an expected utility of~1.
This completes the proof.
\end{proof}

\section{\boldmath $O(1)$-CR for Submodular $f$: Decomposing the Benchmark}\label{subsec:decomposition}
In this section we show that  it is enough to design a constant competitive policy for the budgeted problem in order to be $O(1)$-competitive with respect to the online contracting problem.
Recall the offline optimum for the budgeted problem, defined in \Cref{eq:budget_half}:
\begin{equation*}\tag{\ref{eq:budget_half}}
\budgetHalfDefinition
\end{equation*}

\reductiontobudget* 

The reduction in \cref{prop:reduction_to_budget} relies on the following decomposition of the benchmark.
\begin{lemma}\label{lem:decomposition}
For any submodular function $f$, and any set of active agents $A$, the maximum between $g(\OPTbudget)$ and $g(i^\star)$ is a 3-approximation to $g(\OPT)$, where 
$\OPTbudget$ is as defined in \Cref{eq:budget_half}, $i^\star = \arg\max_{i \in A} g(\{i\})^+$, and $\OPT$ is the optimal solution to the offline contracting problem induced by $A$.
\end{lemma}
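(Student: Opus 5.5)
We need to show $g(S^\star) \le 3\max\{g(L^\star), g(\{i^\star\})\}$, where $S^\star$ maximizes $g$ over subsets of $A$. The plan is to cut $S^\star$ into at most three pieces, each of which is either budget-feasible or a single agent, and then use subadditivity of $g$ (\Cref{lem:gSA}).

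\textbf{Step 1: reduce to $f$.} Since $g(S)\le f(S)$ for every $S$, it suffices to bound $f$-values by $g$-values. Any set $S'$ with $\sum_{i\in S'} c_i/\marg{i}{S'}\le 1/2$ has $g(S')\ge \frac12 f(S')$. For such $S'$, the optimality of $L^\star$ in \eqref{eq:budget_half} then gives
\[
g(L^\star)\ \ge\ \tfrac12 f(L^\star)\ \ge\ \tfrac12 f(S').
\]

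\textbf{Step 2: split $S^\star$.} We may assume $g(S^\star)>0$. Then $\sum_{i\in S^\star} c_i/\marg{i}{S^\star}<1$, and every agent in $S^\star$ has positive marginal. Order $S^\star$ arbitrarily and take the longest prefix $P$ whose ratio sum, computed with respect to $S^\star$, is at most $1/2$. Let $j$ be the next agent and let $Q=S^\star\setminus(P\cup\{j\})$. Since the total is below $1$ and $P\cup\{j\}$ already exceeds $1/2$, the remaining set $Q$ has ratio sum (w.r.t.\ $S^\star$) below $1/2$.

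Submodularity gives $\marg{i}{P}\ge \marg{i}{S^\star}$ for $i\in P$, and likewise for $Q$. So the ratio sum of $P$ computed with respect to $P$ itself is at most $1/2$, and the same holds for $Q$. Both $P$ and $Q$ are therefore feasible for \eqref{eq:budget_half}.

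\textbf{Step 3: combine.} By subadditivity of $g$,
\[
g(S^\star)\ \le\ g(P)+g(Q)+g(\{j\}).
\]
Each of $g(P)$ and $g(Q)$ is at most $f(\cdot)$, which is at most $f(L^\star)$. This would only give $2f(L^\star)$ per piece, i.e., a factor of $5$. We need something sharper.

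\textbf{Sharpening to a factor of $3$.} The loss comes from comparing $g(P)$ to $f(L^\star)$ rather than to $g(L^\star)$. To avoid it, I would compare directly: $g(P)\le f(P)\le f(L^\star)$, while $g(L^\star)\ge \frac12 f(L^\star)$. This still yields the factor-$5$ bound, so a different accounting is needed.

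The better route is to avoid subadditivity of $g$ and bound $g(S^\star)$ through $f$:
\[
g(S^\star)\ \le\ (1-\beta)f(S^\star),
\]
where $\beta=\sum_{i\in S^\star} c_i/\marg{i}{S^\star}$. Subadditivity of $f$ gives
\[
f(S^\star)\ \le\ f(P)+f(Q)+f_j.
\]
Here $f(P),f(Q)\le f(L^\star)\le 2g(L^\star)$. For the single agent, $c_j/f_j\le c_j/\marg{j}{S^\star}\le\beta$, so
\[
(1-\beta)f_j\ \le\ f_j-c_j\ =\ g(\{j\})\ \le\ g(\{i^\star\}).
\]
Also, $\beta>1/2$ whenever $Q$ or the agent $j$ exists. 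In that case $(1-\beta)(f(P)+f(Q))\le \frac12\cdot 2f(L^\star)\le 2g(L^\star)$. Hence
\[
g(S^\star)\ \le\ 2g(L^\star)+g(\{i^\star\})\ \le\ 3\max\{g(L^\star),g(\{i^\star\})\}.
\]
If instead $\beta\le1/2$, then $S^\star$ is itself feasible for \eqref{eq:budget_half}, and $g(S^\star)\le f(S^\star)\le f(L^\star)\le 2g(L^\star)$.

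\textbf{Main obstacle.} The key step is keeping the constant at $3$. The plan handles this by using the factor $(1-\beta)<1/2$ to absorb the $f$-to-$g$ loss on the two feasible pieces, and by bounding the single leftover agent's contribution via $c_j/f_j\le\beta$.
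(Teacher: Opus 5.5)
Your proof is correct, and it reaches the factor $3$ by a different route than the paper.

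\textbf{The paper's route.} It does a case analysis on how many agents of $S^\star$ have $c_i/f_i>1/2$.
\begin{itemize}
\item With two or more such agents, $g(S^\star)\le 0$.
\item With exactly one, it peels that agent off and uses subadditivity of $g$ (\Cref{lem:gSA}) to get $2g(L^\star)+g(\{i^\star\})$.
\item With none and $\beta>1/2$, it splits $S^\star$ into three budget-feasible pieces. The pivot agent is feasible as a singleton precisely because $c_j/f_j\le 1/2$. It then charges everything to $L^\star$ via $g(S^\star)<\tfrac12 f(S^\star)\le \tfrac32 f(L^\star)\le 3g(L^\star)$.
\end{itemize}

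\textbf{Your route.} You use a single prefix/pivot/suffix split for every $S^\star$. You never separate large from small agents and never need subadditivity of $g$. Writing $g(S^\star)=(1-\beta)f(S^\star)$ and using only subadditivity of $f$, the factor $1-\beta<\tfrac12$ absorbs the $f$-to-$g$ loss on $P$ and $Q$. The observation $c_j/f_j\le c_j/\marg{j}{S^\star}\le\beta$ then charges the pivot to its own singleton utility.

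\textbf{Comparison.} Your computation merges the paper's one-large-agent case and its all-small case into one, and avoids \Cref{lem:gSA} entirely. The paper's route, in exchange, shows that when $S^\star\subseteq L$ the budgeted optimum alone is a $3$-approximation. You could recover this by noting that $\{j\}$ is then budget-feasible, so $f_j\le f(L^\star)$.

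In a write-up, drop the exploratory factor-$5$ detours in Step~3, where ``$2f(L^\star)$ per piece'' should read $2g(L^\star)$. The final argument stands on its own.
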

\begin{proof}
Fix a realized set $A \subseteq N$ of active agents, and let $A' = \{i \in A \mid \frac{c_i}{f_i} \le \frac{1}{2}\} = A \cap L$.
Observe that,
$
g(\OPTbudget) 
\ge 
f(\OPTbudget)/2
$.
Moreover, for any set $U \subseteq A'$ and $\sum_{i\in U} \frac{c_i}{\marg{i}{U}} \le 1/2$, it holds that $f(\OPTbudget) \ge f(U)$, thus 
\begin{equation}\label{eq:Kstar_to_U}
 g(\OPTbudget) 
\ge \frac{1}{2} f(\OPTbudget)
\ge \frac{1}{2} f(U)
\ge \frac{1}{2} g(U)   
\end{equation}
Let $M = \max\{g(\OPTbudget),\max_{i \in A}g(\{i\})^+\}$. 
We show that $g(\OPT)\le 3M$.

If $|\OPT \setminus A'| \ge 2$, 
then by submodularity $\sum_{i \in \OPT} c_i/\marg{i}{\OPT} \ge \sum_{i \in \OPT \setminus A'} c_i/f_i \ge 1$. Thus, $g(\OPT) \le 0$, and the claim trivially holds.

If $|\OPT \setminus A'| = 1$, then $\sum_{i \in \OPT \setminus A'} \frac{c_i}{\marg{i}{S}} \le 1/2$, or $g(\OPT) \le 0$. 
Together with the subadditivity of $g$ and \Cref{eq:Kstar_to_U} we get,
\[
g(\OPT) 
\le 
g(\OPT \setminus A') + g(\OPT \cap A') 
\le 
2\cdot g(\OPTbudget)+\max_{i \in A} g(\{i\}) \le 3M.
\]
If $\OPT \setminus A' = \emptyset$, and $\sum_{i \in \OPT} c_i/\marg{i}{\OPT} \le 1/2$, then 
by \Cref{eq:Kstar_to_U},
\[
M \ge g(\OPTbudget) \ge \frac12 g(\OPT)
\]
If $\OPT \setminus A' = \emptyset$, 
and $\sum_{i \in \OPT} c_i/\marg{i}{\OPT} > 1/2$, then observe that
$g(\OPT) < (1/2) \cdot f(\OPT)$.
Moreover, there exists a partition of $\OPT = \OPT_1 \sqcup \OPT_2 \sqcup \OPT_3$ such that 
$\sum_{i \in \OPT_j} c_i/\marg{i}{\OPT_1} \le 1/2$, for any $j \in [3]$:
Consider elements in an arbitrary order and add them to one set, $\OPT_1$, while $\sum_{i \in \OPT_1} \frac{c_i}{\marg{i}{\OPT}} < 1/2$. Do the same with the remaining elements and $\OPT_2$, and finally set $\OPT_3 = \OPT \setminus (\OPT_1 \cup \OPT_2)$.
Assume without loss of generality that $f(\OPT_1) \ge f(\OPT)/3$.
Thus,
$$
M \ge g(\OPTbudget) 
\ge 
\frac{1}{2}f(\OPT_1) \ge \frac{1}{6}f(\OPT) > \frac{1}{3}g(\OPT),
$$
where the second inequality follows from \Cref{eq:Kstar_to_U}.
This concludes the proof.
\end{proof}

\begin{proof}[Proof of \Cref{prop:reduction_to_budget}] 
    Recall that there exists $2$-competitive for the online problem of hiring a single agent (i.e., unit demand $f$), see \Cref{obs:unit-demand-2}.
    Consider an algorithm that with probability $2/(\gamma+2)$ runs 
    a $2$-competitive unit-demand prophet algorithm, and with probability $1-2/(\gamma+2)$ runs a $\gamma$-competitive algorithm for the budgeted problem. It's expected utility is

    \[
    \frac{2}{(\gamma+2)} \cdot \frac12 \E\left[\max_{i \in A} g(i) \right] + 
    \frac{\gamma}{(\gamma+2)}
    \cdot \frac{1}{\gamma} \E\left[f(\OPTbudget)\right] 
    \ge 
    \frac{1}{\gamma+2} \max\left\{\E\left[\max_{i \in A} g(i) \right], \E\left[g(\OPTbudget)\right]\right\} 
    \ge  \frac{\E[g(OPT)]}{3(\gamma+2)},
    \]
    where the last inequality follows from \Cref{lem:decomposition}.
\end{proof}

\section{Dominant Strategies Under Safe Policies}
\label{sec:safe_equilibrium}
In this section we complete the details regarding the solution concepts induced by safe policies under different classes of $f$.

\safeimpliesds*

\begin{proof}[Proof of \Cref{obs:safe_implies_ds}]
Fix an arbitrary realization of the active agents and the internal randomness of the policy. 
Let $\contract$ be the outcome if a \safe policy and let $S^{hire}$ be the resulting set of hired agents.

Fix a hired agent $i\in S^{hire}$, and consider an arbitrary set
$T\subseteq S^{hire}\setminus \{i\}$
of other hired agents who exert effort. 
It is a (weakly) best response for agent $i$ to exerts effort if $\alpha_i \ge \frac{c_i}{\marg{i}{T}}$.
By submodularity and since $\alpha_i$ is the output of a safe policy we have
$
\alpha_i \ge \frac{c_i}{\marg{i}{S^{hire}}} \ge \frac{c_i}{\marg{i}{T}}
$.
Thus, for every possible set $T\subseteq S^{hire}\setminus \{i\}$ of other hired agents who exert effort, agent $i$ weakly prefers exerting effort to shirking.
Since the choice of $T$ was arbitrary, exerting effort is a weakly dominant strategy for agent $i$. 
\end{proof}

\begin{proposition}\label{prop:safe_XOS_not_DS}
There exists an online-contract instance with strictly positive costs and an XOS reward function for which a safe policy does not induce exerting effort as a dominant strategy for every hired agent.
\end{proposition}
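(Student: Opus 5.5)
We will exhibit a small explicit instance: an XOS $f$, strictly positive costs, and a safe contract vector under which some hired agent strictly prefers to shirk against some profile of the others. The mechanism to exploit is that XOS functions, unlike submodular ones, can have marginals that \emph{increase} when other agents are removed from the working set. Concretely, we want a hired agent $i$ and a subset $T \subseteq S^{\mathrm{hire}}\setminus\{i\}$ with $\marg{i}{T\cup\{i\}} < \marg{i}{S^{\mathrm{hire}}}$. If we then set $\alpha_i = c_i/\marg{i}{S^{\mathrm{hire}}}$, which is exactly the safe threshold, we get $\alpha_i < c_i/\marg{i}{T\cup\{i\}}$, so shirking is a strict best response when only $T$ works. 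This is precisely the step that fails in the proof of \Cref{obs:safe_implies_ds}, which relies on submodularity to conclude $\marg{i}{S^{\mathrm{hire}}} \le \marg{i}{T}$.

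The plan is to use three deterministic agents ($p_i=1$) and the XOS function with two clauses,
\[
f(S)=\max\Big\{\tfrac12\big(|S\cap\{1\}|+|S\cap\{2\}|+|S\cap\{3\}|\big),\ |S\cap\{1,2\}|\cdot \tfrac12 + |S\cap\{3\}|\cdot 0 \Big\},
\]
or, more cleanly, clauses $w^{(1)}=(\tfrac12,\tfrac12,\tfrac12)$ and $w^{(2)}=(1,1,0)$, scaled so that $f\le 1$. Evaluating the marginals of agent $1$: we have $f(\{1,2,3\})=\max\{3/2,2\}=2$, and removing agent $1$ leaves $f(\{2,3\})=\max\{1,1\}=1$, so $\marg{1}{\{1,2,3\}}=1$. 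On the other hand, $f(\{1,3\})=\max\{1,1\}=1$ and $f(\{3\})=1/2$, so $\marg{1}{\{1,3\}}=1/2$. Thus with $T=\{3\}$ the marginal of agent $1$ drops from $1$ to $1/2$, which is the violation of submodularity we need. We then normalize everything by dividing by $2$ so that $f$ maps into $[0,1]$.

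Next we choose small positive costs $c_i=\eps$ and the safe contracts $\alpha_i = c_i/\marg{i}{\{1,2,3\}}$ for all three agents. We check that these are positive and sum to less than $1$, so this is a legitimate safe output, for example produced by the policy that hires every active agent with this contract. Under these contracts, agent $1$ facing $T=\{3\}$ (agent $2$ shirking, agent $3$ working) has $\alpha_1 f_{1}(\{1,3\}) < c_1$, so exerting effort is not weakly dominant for agent $1$.

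The only real obstacle is the bookkeeping: verifying that the chosen $f$ is genuinely XOS, that it is monotone and normalized, and that the safe condition holds for all three hired agents simultaneously. Safety for agents $2$ and $3$ holds by symmetry and by direct computation of their marginals with respect to $\{1,2,3\}$. The marginal of agent $3$ with respect to $\{1,2,3\}$ is $f(\{1,2,3\})-f(\{1,2\}) = 2-2 = 0$ before normalization, which would force an infinite payment. To avoid this, we will either drop agent $3$ from the hired set and instead have $T=\emptyset$ witness the violation via a different clause, or perturb the weights, e.g. $w^{(1)}=(\tfrac12,\tfrac12,1)$, so that every agent has a positive marginal in the full set while agent $1$'s marginal still shrinks against $\{3\}$. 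Finding such a weighting and rechecking these few numbers is the step that needs care.
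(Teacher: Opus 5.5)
You have the right mechanism, and it is the paper's: put every hired agent exactly at the safe threshold $c_i/\marg{i}{S^{\mathrm{hire}}}$, then exhibit an XOS $f$, a hired agent $i$, and a profile $T\subseteq S^{\mathrm{hire}}\setminus\{i\}$ with $\marg{i}{T\cup\{i\}}<\marg{i}{S^{\mathrm{hire}}}$. But the proposition \emph{is} that explicit instance, and you never produce a valid one.

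Your instance, with clauses $(\tfrac12,\tfrac12,\tfrac12)$ and $(1,1,0)$, is not safe. Here $f(\{1,2\})=2=f(\{1,2,3\})$, so $\marg{3}{\{1,2,3\}}=0$. Agent $3$ is exactly your witness $T=\{3\}$, yet it cannot be hired at any finite contract.

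Neither repair you suggest works:
\begin{itemize}
\item Dropping agent $3$ and witnessing with $T=\emptyset$ is impossible for every XOS $f$. Subadditivity gives $\marg{i}{S}\le f_i=\marg{i}{\{i\}}$ for all $S$, so the profile where $i$ works alone always yields the largest marginal. In particular, with only two hired agents safety already implies dominance. You therefore need at least three hired agents and a nonempty witness $T$.
\item The perturbation $(\tfrac12,\tfrac12,1)$ also fails. You still have $f(\{1,2\})=2=f(\{1,2,3\})$, so agent $3$'s marginal is still $0$. Moreover $\marg{1}{\{1,2,3\}}=2-\tfrac32=\tfrac12=\marg{1}{\{1,3\}}$, so the drop you wanted disappears too.
\end{itemize}
Your first displayed formula is also off: its second clause $(\tfrac12,\tfrac12,0)$ is dominated, so that $f$ is just additive.

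The missing idea concerns which clause is active where. The clause attaining $f(S^{\mathrm{hire}})$ must give positive weight to every hired agent, or some hired agent has zero marginal. A second ``spoiler'' clause, concentrated on the witness agent in $T$, should be active only on the small set $T$, so it inflates $f(T)$ without inflating $f(T\cup\{i\})$.

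The paper does this with $a=\tfrac49\mathbf{1}_{\{2\in S\}}$ and $b=(\tfrac49,\tfrac19,\tfrac49)$. The marginals in $N$ are $\tfrac49,\tfrac19,\tfrac49>0$, but with $T=\{2\}$ one gets $\marg{1}{\{1,2\}}=\tfrac59-\tfrac49=\tfrac19<\tfrac49$.

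Within your own setup, $f(S)=\tfrac13\max\{|S|,\tfrac32\mathbf{1}[3\in S]\}$ works. Every marginal in $\{1,2,3\}$ is $\tfrac13$, yet $\marg{1}{\{1,3\}}=\tfrac23-\tfrac12=\tfrac16$. So with $c_i=\eps$ and $\alpha_i=3\eps$, agent $1$ strictly prefers shirking when only agent $3$ works.
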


\begin{proof}
Consider an instance with three agents, denoted by $N=\{1,2,3\}$, and suppose that all agents are active with probability $1$, i.e., $p_i=1$ for every $i\in N$.
Define $f:2^N\to[0,1]$ to be $f(S)=\max\{a(S),b(S)\}$, for the two additive functions
$$
a(S)=\frac{4}{9}\cdot \mathbf{1}_{\{2\in S\}}
 \qquad\text{and } \qquad 
b(S)=
\frac{4}{9}\cdot \mathbf{1}_{\{1\in S\}}
+\frac{1}{9}\cdot \mathbf{1}_{\{2\in S\}}
+\frac{4}{9}\cdot \mathbf{1}_{\{3\in S\}}.
$$
Note that $f$ is normalized, monotone, and XOS.
Let the costs be
$$
c_1=\frac{2}{27},
\qquad
c_2=\frac{1}{72},
\qquad
c_3=\frac{1}{18}.
$$
Consider the deterministic policy that hires all three agents and offers the contract vector
$$
\alpha_1=\frac{1}{6},
\qquad
\alpha_2=\alpha_3=\frac{1}{8}.
$$
We first show that the policy is safe. Since the policy hires all agents, we have
$ S^{hire}=N$.
We compute the marginal contribution of each agent with respect to $S^{hire}$. Observe that,
$$
\marg{1}{\{2,3\}}
=
\frac{4}{9},
\qquad 
\marg{2}{\{1,3\}}
=
\frac{1}{9},
\qquad
\marg{3}{\{1,2\}}
=
\frac{4}{9}.
$$
Thus,
$$
\frac{c_1}{\marg{1}{\{2,3\}}}= \frac16=\alpha_1,
\qquad 
\frac{c_2}{\marg{2}{\{1,3\}}}= \frac18=\alpha_2,
\qquad 
\frac{c_3}{\marg{3}{\{1,2\}}}= \frac18=\alpha_3,
$$
Hence, each hired agent is incentivized to exert effort with respect to the full hired set $S^{hire}=N$, and the policy is safe. In particular, if agents $2$ and $3$ both exert effort, the best response of agent $1$ is to exert effort as well.

We now show that exerting effort is not a best response for agent $1$ when agent $2$ exert effort and agent $3$ does not. In particular, working is not a best response for agent $1$.
First, note that
$
\marg{1}{\{2\}} = \frac{1}{9}
$.
Therefore, we have that $\frac{c_1}{\marg{1}{\{2\}}} = \frac{2}{3} > \alpha_1$, which implies that shirking is a strict best-response for agent $1$.
This concludes the proof.
\end{proof}

\section{Tight 2-Competitive Ratio for Unit Demand $f$ }\label{sec:unit-demand}

In this section we show that the problem of online contracting with a single agent is essentially the classic prophet inequality setting of \cite{krengel1977semiamarts,samuel1984comparison}, and thus admits a (tight) competitive ratio of $2$.

Equivalently, we consider the online contracting problem with a unit-demand $f$, that is, $f(S) = \max_{i \in S} f_i$.
The optimal solution, in any realization, is to take at most one agent. Since the optimal contract is of the form $c_i/f_i$ for non-zero $f_i$, we have that the utility from offering the optimal contract to agent $i$ is $(1-c_i/f_i)f_i = f_i - c_i$.
Thus, from the prophet's perspective, the problem reduces to the classic single-item prophet inequality with non-negative values $v_i = (f_i-c_i)^+$, where $(x)^+ = \max\{x,0\}$ and each item $i$ has probability of $p_i$ of being available.
Thus, a competitive ratio of $2$ is achievable by applying any standard prophet inequality algorithm for this setting.

\begin{observation}\label{obs:unit-demand-2}
    When $f$ is unit-demand, a competitive ratio of 2 is achievable.
\end{observation}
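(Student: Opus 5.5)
The plan is to reduce the unit-demand online contracting problem to the classic single-item prophet inequality, as sketched just before the statement. The first step is to upper bound the offline benchmark. For unit-demand $f$, any set $S$ with $|S|\ge 2$ containing some $i$ that is not the unique maximizer has $\marg{i}{S}=0$, so $c_i/\marg{i}{S}=\infty$ when $c_i>0$. Hence the offline optimum over $A$ is attained by a singleton or the empty set. Since $g(\{i\})=(1-c_i/f_i)f_i=f_i-c_i$, this gives
\[
OPT=\E_A\Big[\max_{i\in A} v_i\Big],\qquad v_i=(f_i-c_i)^+ .
\]
If zero costs are allowed, ties among maximizers with zero cost still contribute at most $\max_i f_i$, so the same bound holds.

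The second step is the online policy. I would use the Samuel-Cahn threshold rule. Let $\tau$ be the median of $\max_{i\in A}v_i$, or, to handle atoms cleanly, use the Kleinberg--Weinberg threshold $\tau=\tfrac12\E[\max_{i\in A}v_i]$. When agent $i$ arrives active with $v_i\ge\tau$ and $v_i>0$, and no agent has yet been hired, the principal offers $\alpha_i=c_i/f_i$. Every other agent is offered $\alpha_i=0$.

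It remains to check that this contract vector yields utility $v_i$ for the hired agent. With positive costs, only the single hired agent can work, because unhired agents would bear cost $c_j>0$ for zero payment. The hired agent's contract makes working a best response when acting alone, so $\{i\}$ is an equilibrium. Tie-breaking in favor of the principal then yields utility at least $(1-c_i/f_i)f_i=v_i$. If some costs are zero, an unhired zero-cost agent might also work, but this can only weakly increase $f$ while leaving payments unchanged. The tie-breaking assumption therefore preserves utility at least $v_i$.

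Consequently, the principal's utility equals the prophet-inequality gambler's reward on values $v_i$ that are independent across agents (each active with probability $p_i$, else value $0$). The standard threshold argument then gives $ALG\ge\tfrac12\E[\max_{i\in A}v_i]=\tfrac12 OPT$. The main point needing care is the equilibrium step, namely ensuring the online hire of a single agent is not undermined by other agents working. The positive-cost convention, together with the remark on zero costs, handles it.
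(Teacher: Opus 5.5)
Your proposal is correct and takes the paper's route: reduce to the single-item prophet inequality with values $v_i=(f_i-c_i)^+$, then apply a standard threshold rule. You also verify explicitly that hiring a lone agent at $\alpha_i=c_i/f_i$ yields a principal-preferred equilibrium worth $v_i$, a step the paper leaves implicit and the one that genuinely needs checking in the contract setting.
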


Notably, the lower bound on the competitive ratio for this problem is not immediate. Unlike the single-item setting, the principal may select multiple items (i.e., incentivize more than one agent), while her reward is determined by the highest $f_i$ among the incentivized agents; see Observation~\ref{obs:not-safe}.
In principle, this additional flexibility could allow the principal to achieve a better competitive ratio than in the standard gambler setting. However, the example below, which closely resembles the lower bound construction for the classic prophet inequality, shows that in the worst case, this added power does not improve upon the competitive ratio of~$2$.

\begin{claim}
    For any $\eps > 0$, there exists an instance with two agents such that any algorithm achieves a competitive ratio of at least $2-\eps$.
\end{claim}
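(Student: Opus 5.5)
We adapt the classical two-box hard instance for the single-item prophet inequality. Fix a small $\delta>0$ and take a unit-demand $f$ with two agents. Agent $1$ is always active ($p_1=1$), with $f_1=\delta$ and cost $c_1=\delta^2$ (more generally $c_1/f_1\to 0$), so hiring it alone yields utility $f_1-c_1\approx\delta$. Agent $2$ is active with probability $p_2=\delta$, with $f_2=1$ and a small cost $c_2=\delta^2$, so hiring it alone when active yields $1-\delta^2$.

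\textbf{Prophet.} The prophet hires agent $2$ whenever it is active and agent $1$ otherwise. This gives
\[
OPT\ \ge\ \delta(1-\delta^2)+(1-\delta)(\delta-\delta^2)\ \approx\ 2\delta .
\]

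\textbf{Online policy.} We bound any online policy by case analysis on $\alpha_1$, which is fixed before agent $2$ is revealed. By tie-breaking in favor of the principal, we may assume that every offered contract is either $0$ or large enough to be effective in some equilibrium; smaller positive contracts only cost money. In all cases the reward is at most $1$.

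\emph{Case $\alpha_1=0$.} The utility is at most $\delta\cdot 1=\delta$.

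\emph{Case $\alpha_1>0$ with agent $1$ actually effective.} This requires $\alpha_1\ge c_1/f_1=\delta$. If agent $2$ is inactive, the utility is at most $(1-\alpha_1)\delta\le\delta$. If agent $2$ is active, the reward is at most $1$ and the principal pays at least $\alpha_1\ge\delta$ regardless, so the utility is at most $1$. The total is at most $(1-\delta)\delta+\delta=2\delta-\delta^2$.

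This case does not give the separation, which is exactly what \Cref{obs:not-safe} shows: hiring both agents lets the principal gain. So the instance must be modified to make agent $1$ expensive relative to its value. This is the main obstacle: hiring agent $1$ must cost a constant fraction of the total reward.

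\textbf{Modified instance.} Keep agent $2$ as before, but give agent $1$ reward $f_1=1-\eta$ (for small $\eta$) and a cost ratio $c_1/f_1=1-\eta'$, so that hiring agent $1$ alone yields utility about $\eta\eta'\approx\beta$ for a target value $\beta$. Agent $2$ now has a small $p_2=\beta$, $f_2=1$, and cost $\approx 0$.

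With $\alpha_1>0$ effective, we have $\alpha_1\ge 1-\eta'$. If agent $2$ is also active, the utility is at most $(1-\alpha_1)\cdot 1\le\eta'$, whether or not agent $2$ is hired, since any equilibrium pays at least $\alpha_1$. So this branch gives at most $\beta+\beta\eta'$.

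With $\alpha_1=0$, the utility is at most $\beta$.

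The prophet gets about $\beta+(1-\beta)\eta\eta'\approx 2\beta$ when $\eta\eta'=\beta$. Choosing $\eta'\to 0$, $\eta=\beta/\eta'$ with $\beta\ll\eta'$, and then letting $\beta\to 0$, the ratio tends to $2$.

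The one remaining step to verify carefully is that the \emph{best} equilibrium cannot avoid paying $\alpha_1$. This holds because utility is $(1-\sum_{i\in A}\alpha_i)f(S)$, so the full sum of offered contracts is always deducted. With that established, choosing the parameters to get ratio at least $2-\eps$ is routine.
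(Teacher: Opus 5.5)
Your modified instance, as you wrote it, does not give the separation. With $f_1=1-\eta$ and $c_1/f_1=1-\eta'$, hiring agent $1$ alone yields $(1-c_1/f_1)f_1=\eta'(1-\eta)\approx\eta'$, not $\eta\eta'$. In the regime you then choose ($\beta\ll\eta'$, $\eta=\beta/\eta'\to 0$), the prophet earns about $\beta(1-c_2)+(1-\beta)\eta'(1-\eta)\approx\eta'$. The online policy that always offers $\alpha_1=1-\eta'$ (and $\alpha_2=0$) earns $\eta'(1-\eta)$ in every realization, so the ratio is at most $1+\beta/(\eta'(1-\eta))\to 1$, not $2$. Your later bookkeeping (agent-$1$ branch at most $\beta+\beta\eta'$, prophet about $2\beta$) only holds if agent $1$'s standalone utility is about $\beta$. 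That corresponds to $f_1=\eta$, not $f_1=1-\eta$.

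The fix is easy and keeps your mechanism. Take $f_1=\eta$ and $c_1=(1-\eta')\eta$ with $\eta\eta'=\beta\le\eta'$; alternatively keep $f_1=1-\eta$ but set $\eta'(1-\eta)=\beta$. Agent $1$'s marginal is at most $f_1$, so any $\alpha_1<c_1/f_1$ leaves it idle, and the utility is then at most $\beta$, since reward only comes from agent $2$. If $\alpha_1\ge 1-\eta'$, the full $\alpha_1$ is deducted in every equilibrium. The utility is then at most $(1-\beta)\beta+\beta\eta'$, against a prophet value of $\beta(1-c_2)+(1-\beta)\beta$. Letting $\eta',\beta,c_2\to 0$ gives a ratio tending to $2$.

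This is the same idea as the paper's proof: committing to agent $1$ costs enough of the reward to kill the upside from agent $2$. The paper's instance is cleaner, though. Both agents have $c_i/f_i=1/2$ (agent $1$: $p_1=1$, $f_1=\eps$, $c_1=\eps/2$; agent $2$: $p_2=\eps$, $f_2=1$, $c_2=1/2$). Paying effective contracts to both therefore costs the entire reward, and the two remaining options each yield exactly $\eps/2$, versus $\frac{\eps}{2}(2-\eps)$ for the prophet. That uses one parameter and no double limit. You should also drop the abandoned first instance from the write-up.
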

\begin{proof}
    Consider a setting with two agents and a unit-demand $f$, with the following in order:
    Agent 1 is deterministic with $p_1=1$, with $f_1=\eps$ and $c_1 = \frac{\eps}{2}$. 
    Agent 2 has activation probability $p_2 = \eps$,  with $f_2=1$, and $c_2=1/2$.
    
    Clearly, the prophet's revenue is
    $$
    (1-\eps)\frac{\eps}{2} + \eps \frac{1}{2} = \frac{\eps}{2}(2-\eps).
    $$
    Observe that any optimal algorithm offers each of the agent the contract $\alpha_i=0$ or $\alpha_i = \frac{c_i}{f_i} = \frac{1}{2}$.
    Thus, if $\alpha_1 = \frac{1}{2}$, even if agent 2 is active, the principal would offer $\alpha_2=0$, as otherwise her profit is
    $(1-\alpha_1-\alpha_2)\frac{1}{2} = 0$.
    Thus, when $\alpha_1 = \frac{1}{2}$, the principal's expected revenue is $\frac{\eps}{2}$.
    If $\alpha_1=0$, her maximal expected reward is given by $(1-1/2)\cdot 1 \cdot \eps = \eps/2$, and the claim follows.
\end{proof}

\section{Zero Costs Incur a Multiplicative Loss of at most 2}
\label{sec:positive_costs}
\begin{proposition}\label{prop:positive_costs}
Suppose that for every online-contract instance with monotone submodular reward function and strictly positive costs, i.e., $c_i>0$ for all agents $i$, there exists a $\rho$-competitive online policy. Then, for every online-contract instance with monotone submodular reward function and nonnegative costs, i.e., $c_i\ge 0$ for all agents $i$, there exists a $2\rho$-competitive online policy.
\end{proposition}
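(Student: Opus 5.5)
We split the instance into zero-cost agents $Z=\{i: c_i=0\}$ and positive-cost agents $P=\{i:c_i>0\}$. The policy will run the given $\rho$-competitive positive-cost policy on a modified instance over $P$ alone, and offer every zero-cost agent the contract $\alpha_i=0$. A zero-cost agent is indifferent between working and shirking, and ties are broken in favor of the principal. So under the best equilibrium every active zero-cost agent can be assumed to work, and including extra workers never hurts the principal by monotonicity. The randomization is a fair coin between two strategies: (a) pay nobody, so the active agents of $Z$ work for free, and (b) run the positive-cost policy on a suitably defined instance.

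For the benchmark, fix $A$ and an optimal offline set $S^\star\subseteq A$. Write $S^\star_Z=S^\star\cap Z$ and $S^\star_P=S^\star\cap P$, and let $\beta=\sum_{i\in S^\star_P}c_i/f_i(S^\star)\le 1$. By subadditivity of $f$,
\[
g(S^\star)=(1-\beta)f(S^\star)\le f(S^\star_Z)+(1-\beta)f(S^\star_P).
\]
The first term is at most $f(A\cap Z)$, which is exactly what strategy (a) earns in that realization. Taking expectations, $\E[f(A\cap Z)]\ge \E[f(S^\star_Z)]$.

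For the second term, submodularity gives $f_i(S^\star_P)\ge f_i(S^\star)$ for $i\in S^\star_P$. Hence the payments needed for $S^\star_P$ alone are at most $\beta$, and $g(S^\star_P)\ge(1-\beta)f(S^\star_P)$. Therefore $\E[\max_{S\subseteq A\cap P} g(S)]\ge \E[(1-\beta)f(S^\star_P)]$. The right-hand side is the offline optimum of the instance restricted to the agents in $P$ with the same $f$. That restricted instance has strictly positive costs and a monotone submodular reward, so the hypothesis supplies a policy whose expected utility is at least $\E[(1-\beta)f(S^\star_P)]/\rho$.

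The main obstacle is running that policy while zero-cost agents are also present, because their effort changes the marginals of the $P$ agents and could break the equilibrium. I would handle this by having strategy (b) also pay zero-cost agents nothing and arguing the zero-cost agents are assumed to shirk. The principal chooses the best equilibrium, and the equilibrium in which the $Z$ agents shirk is still an equilibrium, since shirking is a weak best response at cost zero. So the best equilibrium is at least as good as the one produced in the restricted instance. Any extra free workers can only raise $f$, but they may destabilize $P$ agents; the best-equilibrium tie-breaking resolves this because the restricted outcome remains available. Combining, the mixture earns at least $\tfrac12\E[f(S^\star_Z)]+\tfrac{1}{2\rho}\E[(1-\beta)f(S^\star_P)]\ge \E[g(S^\star)]/(2\rho)$, using $\rho\ge1$.
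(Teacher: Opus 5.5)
Your proof is correct and follows the paper's argument: a fair coin between the zero policy, which earns $f(A\cap Z)$, and simulating the positive-cost policy on the restricted instance, where the restricted equilibrium survives because zero-cost agents weakly prefer to shirk; the benchmark is then bounded by $\E[f(A\cap Z)]+OPT^P$. The only difference is cosmetic: you derive $g(S^\star)\le f(S^\star_Z)+g(S^\star_P)$ inline via $\beta$ and submodularity, whereas the paper cites the subadditivity of $g$ (\Cref{lem:gSA}), which is the same computation. (A small wording slip: it is the left-hand side, $\E[\max_{S\subseteq A\cap P} g(S)]$, not the right-hand side, that equals the restricted offline optimum.)
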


\begin{proof}
Fix an instance
$\mathcal I=\langle f,\{c_i\}_{i\in[n]},\{p_i\}_{i\in[n]}\rangle$
with $c_i\ge 0$ for every agent $i$. Let
$P=\{i\in[n]\mid c_i>0\}$ and 
$Z=[n]\setminus P$
be the sets of positive-cost and zero-cost agents, respectively.

Consider the restricted instance on the positive-cost agents,
$\mathcal{I}^P=\langle f^P,\{c_i\}_{i\in P},\{p_i\}_{i\in P}\rangle$,
where $f^P(S)=f(S)$ for every $S\subseteq P$, and where the arrival order is the order inherited from the original instance. Since $f$ is monotone submodular, so is $f^P$. By assumption, there exists a $\rho$-competitive online policy $\pi^P$ for $\mathcal{I}^P$.

We define a policy $\pi$ for the original instance $\mathcal{I}$ as follows. Before the process begins, $\pi$ flips a fair coin.

If the coin lands heads, $\pi$ runs the zero policy: it offers $\alpha_i=0$ to every active agent $i$.

If the coin lands tails, $\pi$ simulates $\pi^P$ on the positive-cost agents only. Namely, whenever an active agent $i\in P$ arrives, $\pi$ feeds this arrival to the simulated policy $\pi^P$ and offers the contract chosen by $\pi^P$. Whenever an active agent $i\in Z$ arrives, $\pi$ offers $\alpha_i=0$ and does not feed this arrival to the simulation.

We first lower-bound the expected utility of $\pi$. Let $A$ denote the random set of active agents, and write
$A^P=A\cap P$ and $A^Z=A\cap Z $.

Consider first the branch in which $\pi$ runs the zero policy. In this branch all active agents receive contract $0$. No positive-cost agent is willing to exert effort, while every zero-cost agent is indifferent between exerting effort and shirking. Therefore, $A^Z$ is an equilibrium. By the principal-favoring tie-breaking rule, and since $f$ is monotone, the equilibrium selected by the agents gives the principal utility at least $f(A^Z)$.
Thus, the expected utility of this branch is at least
$\mathbb{E}_A[f(A^Z)]$.

Next consider the branch in which $\pi$ simulates $\pi^P$. Fix a realization $A$ of the active set and fix the internal randomness of $\pi^P$. Let $\alpha^P$ be the contracts produced by $\pi^P$ for the active positive-cost agents $A^P$, and let $S^P\subseteq A^P$ be the principal-preferred equilibrium in the restricted instance $\mathcal{I}^P$. In the original instance, extend the contract vector by setting $\alpha_i=0$ for every active zero-cost agent $i\in A^Z$.

We claim that $S^P$ is also an equilibrium in the original instance. Indeed, for every positive-cost agent $i\in A^P$, the relevant marginal contributions with respect to $S^P$ are exactly the same as in the restricted instance, since no zero-cost agent is included in $S^P$. Therefore, the equilibrium inequalities for agents in $A^P$ remain unchanged. For every zero-cost agent $i\in A^Z$, we have $c_i=0$ and $\alpha_i=0$, so the agent is indifferent between exerting effort and shirking. Hence shirking is a weak best response. Therefore, $S^P$ is an equilibrium in the full instance.

By the principal-favoring tie-breaking rule, the equilibrium selected in the full instance gives the principal utility at least the utility obtained from $S^P$. Therefore, the expected utility of the simulated branch is at least the expected utility of $\pi^P$ on the restricted instance $\mathcal{I}^P$. Since $\pi^P$ is $\rho$-competitive,
$$ALG(\pi^P,\mathcal{I}^P)\ge \frac{1}{\rho}OPT(\mathcal{I}^P).$$
Writing
$
OPT^P=
\E_A\left[\max_{S\subseteq A^P} g(S)\right]$,
we have
$$ALG(\pi^P,\mathcal{I}^P)\ge \frac{OPT^P}{\rho}.$$
Combining the two branches, we obtain
$$ALG(\pi,\mathcal{I})
\ge
\frac12\cdot \frac{OPT^P}{\rho}
+
\frac12\cdot \E_A[f(A^Z)].$$
Since $\rho\ge 1$, this implies
$$ALG(\pi,\mathcal{I}) \ge
\frac{1}{2\rho}
\left(
OPT^P+\E_A[f(A^Z)]
\right).$$

It remains to compare this quantity with the offline optimum of the original instance. For a fixed realization $A$, let
$$S^\star(A)\in \arg\max_{S\subseteq A} g(S)$$
be an optimal offline working set. Decompose
$$S^\star(A)=S_P^\star\cup S_Z^\star,$$
where $S_P^\star=S^\star(A)\cap P$ and $S_Z^\star=S^\star(A)\cap Z$.

By subadditivity of $g$, which follows from monotone submodularity of $f$, we have
$$g(S^\star(A))
\le
g(S_P^\star)+g(S_Z^\star).$$

Since $S_P^\star\subseteq A^P$,
$$g(S_P^\star)
\le
\max_{S\subseteq A^P} g(S).$$
Moreover, since every agent in $S_Z^\star$ has zero cost,
$$g(S_Z^\star)=f(S_Z^\star)\le f(A^Z),$$
where the inequality follows from monotonicity of $f$. Therefore, for every realization $A$,
$$\max_{S\subseteq A} g(S)
=
g(S^\star(A))
\le
\max_{S\subseteq A^P} g(S)+f(A^Z).$$
Taking expectation over the random active set $A$, we get
$$OPT(\mathcal I) \le OPT^P+\E_A[f(A^Z)].$$

Combining this with the lower bound on $ALG(\pi,\mathcal I)$ gives
$$ALG(\pi,\mathcal I)
\ge
\frac{1}{2\rho}OPT(\mathcal I).$$
Thus $\pi$ is $2\rho$-competitive for the original instance with nonnegative costs.
\end{proof}

\section{Computing the Expected Maximum Load}\label{sec:max-load-proof}
\begin{proposition}\label{prop:max-load-n}
In the notation of \Cref{sec:xos} ($\sqrt n$ columns $C_1,\dots,C_{\sqrt n}$, with $X_j = |A\cap C_j|$ i.i.d.\ $\sim \mathrm{Bin}(\sqrt n, 1/\sqrt n)$),
\[
\E\Big[\max_{j\in[\sqrt n]} X_j\Big] = \Theta \left(\frac{\log n}{\log\log n}\right).
\]
\end{proposition}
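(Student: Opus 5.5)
We prove the two directions separately, writing $N=\sqrt n$ for the number of columns and $X_j\sim\mathrm{Bin}(N,1/N)$. Throughout we use the elementary binomial bounds
\[
\left(\frac{N}{k}\right)^k N^{-k}\Big(1-\tfrac1N\Big)^{N}\;\le\;\Pr[X_j=k]\;\le\;\Pr[X_j\ge k]\;\le\;\binom{N}{k}N^{-k}\;\le\;\frac{1}{k!}\;\le\;\Big(\frac{e}{k}\Big)^k .
\]
The upper tail bound follows from a union bound over the $k$-subsets of a column. The lower bound on the point probability is the same computation already used in the proof of \Cref{prop:xos-prophet-lower-bound}, and it gives $\Pr[X_j\ge k]\ge \frac{1}{e\,k^k}$.

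\textbf{Upper bound.} Set $k_0=\left\lceil \frac{3\ln N}{\ln\ln N}\right\rceil$. Since $\ln k_0=(1-o(1))\ln\ln N$, we get $k_0\ln(k_0/e)\ge (3-o(1))\ln N\ge 2\ln N$ for large $n$. Write
\[
\E\Big[\max_j X_j\Big]\le k_0+\sum_{k\ge k_0}\Pr\Big[\max_j X_j\ge k\Big].
\]
A union bound over the $N$ columns bounds each term of the sum by $N(e/k)^k$. The terms decay geometrically, since the ratio of consecutive terms is at most $e/k_0<1/2$. Hence the sum is at most $2N(e/k_0)^{k_0}\le 2N\cdot N^{-2}=o(1)$. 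Therefore $\E[\max_j X_j]\le k_0+o(1)=O\!\left(\frac{\log n}{\log\log n}\right)$, using $\ln N=\tfrac12\ln n$ and $\ln\ln N=\Theta(\ln\ln n)$.

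\textbf{Lower bound.} Set $k_1=\left\lfloor \frac{\ln N}{2\ln\ln N}\right\rfloor$. Then $k_1\ln k_1\le \tfrac12\ln N$, so $k_1^{k_1}\le N^{1/2}$ and each column satisfies $\Pr[X_j\ge k_1]\ge \frac{1}{e\sqrt N}$. The columns are independent, so
\[
\Pr\Big[\max_j X_j< k_1\Big]\le\Big(1-\frac{1}{e\sqrt N}\Big)^{N}\le \exp\!\big(-\sqrt N/e\big)=o(1).
\]
It follows that $\E[\max_j X_j]\ge (1-o(1))\,k_1=\Omega\!\left(\frac{\log n}{\log\log n}\right)$. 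Combining the two bounds gives the $\Theta$ claim.

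The main obstacle is getting the constants in the thresholds $k_0,k_1$ right. We need $k\ln k$ to be comparable to $\ln N$ with the correct slack: $N(e/k_0)^{k_0}$ must be $o(1)$, and $k_1^{k_1}$ must be $o(N)$. This requires the asymptotics $\ln k=(1\pm o(1))\ln\ln N$. Sharpening both thresholds to $(1\pm\epsilon)\frac{\ln N}{\ln\ln N}$ would also recover the leading constant $\frac12$ stated in \Cref{sec:xos}, but only $\Theta$ is needed here.
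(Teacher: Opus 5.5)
Your proof is correct and follows the paper's argument. The upper bound is a union bound over columns using $\binom{N}{k}N^{-k}\le(e/k)^k$ at a threshold of order $\ln N/\ln\ln N$; the lower bound uses $\Pr[X_j=k]\ge 1/(ek^k)$ at a smaller threshold of the same order together with independence across columns. Your geometric tail sum and your direct product bound $(1-\frac{1}{e\sqrt N})^{N}$ merely replace the paper's crude cap $\max_j X_j\le\sqrt n$ and its Chebyshev step on $\sum_j Y_j$.

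One harmless nit: $(1-1/N)^N<1/e$, so the factor $1/e$ should come from $(1-1/N)^{N-k}\ge(1-1/N)^{N-1}\ge 1/e$ for $k\ge 1$, not from the $(1-1/N)^N$ in your display.
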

\begin{proof}
Let $j^\star \in \argmax_{j\in[\sqrt n]} X_j$

We first upper bound of $\E[X_{j^\star}]$.
For any $j \in \sqrt{n}$ we have that $X_j \sim Bin(1/\sqrt{n}, \sqrt{n})$, thus, for any integer $U \in [\sqrt{n}]$, we have by Stirling's approximation
\[ 
\Pr[X_j \ge U] \le \binom{\sqrt{n}}{U}\cdot \left(\frac{1}{\sqrt{n}}\right)^U\le \frac{e^U}{U^U}
\]
and by the union bound
\[
\Pr[X_{j^\star} \ge U] \le \sqrt{n} \cdot \frac{e^U}{U^U} = \exp(\ln \sqrt{n} + U - U \ln U).
\]
For $U = \frac{4\ln \sqrt{n} }{\ln \ln \sqrt{n}}$ and large enough $n$, we get $\Pr[X_{j^\star} \ge U] \le \exp(-3\ln \sqrt{n})=\frac{1}{n^{3/2}}$.
Thus, 
\[
\E[X_{j^\star}] \le U + \sqrt{n} \cdot \Pr[X_{j^\star} \ge U] \le U + \frac{1}{n} = O\left( \frac{\log n}{\log \log n}\right)
\]

The lower bound of $\E[X_{j^\star}]$ is identical to the argument in \Cref{prop:xos-prophet-lower-bound}:
Let $Y_j = 1$ if $X_j \ge L$, and let $L = \frac{\ln \sqrt{n}}{3 \ln \ln \sqrt{n}}$.
\[
\Pr[Y_j = 1] \ge \Pr[X_j = L] = \binom{\sqrt{n}}{L} \left( \frac{1}{\sqrt{n}} \right)^L \left( 1 - \frac{1}{\sqrt{n}} \right)^{n-L} \geq \left(\frac{\sqrt{n}}{L}\right)^L \frac{1}{\sqrt{n}^L} \frac{1}{e} = \frac{1}{L^L e} = \frac{1}{e n^{1/6}}.
\]
So $\mu := \sum_{j \in [\sqrt{n}]} \Pr[Y_j = 1] \geq \frac{n^{1/3}}{e}$. As the $Y_j$ are independent and their variance is bounded by their expectation, Chebyshev's inequality gives us
\[
\Pr\left[\sum_{j \in [\sqrt{n}]} Y_j = 0\right] \leq \Pr\left[\left\lvert\sum_{j \in [\sqrt{n}]} Y_j - \mu \right\rvert \geq \mu\right] \leq \frac{\mathrm{Var}(\sum_{j \in [\sqrt{n}]} Y_j)}{\mu^2} \leq \frac{1}{\mu} \leq \frac{e}{n^{1/3}}.
\]

Consequently,
\[
\E[\max_j \min\{K, X_j\}] \geq \left(1 - \frac{e}{n^{1/3}} \right) L = \Omega\left(\frac{\log n}{\log \log n}\right).
\]
This concludes the proof.
\end{proof}

\section{Proof of the Expected XOS Marginal Lemma}\label{sec:expected-xos-margs-proof}

\begin{proof}[Proof of \Cref{lem:expected_XOS_margs}]
	For every realization $A\subseteq[n]$, applying \Cref{lem:XOS_margs} with $S=T=A$ gives 
    $$\sum_{i\in A}\big(f(A)-f(A\setminus\{i\})\big) = \sum_{i\in A} \marg{i}{A} \le f(A)$$
	Taking expectation over the random set $A$, we obtain
	\begin{align*}
		\E_A[f(A)]
		&\ge
		\E_A \left[
			\sum_{i\in A}\big(f(A)-f(A\setminus\{i\})\big)
		\right] \\
		&=
		\sum_{i=1}^n
		\E_A \left[
			\mathbf{1}[i\in A]\cdot
			\big(f(A)-f(A\setminus\{i\})\big)
		\right].
	\end{align*}
	Now fix an agent $i$. Since the event $i\in A$ is independent of the realization of $A_{-i}$, and since whenever $i\in A$ we have $A=A_{-i}\cup\{i\}$, it follows that
	\[
		\E_A\left[
			\mathbf{1}[i\in A]\cdot
			\big(f(A)-f(A\setminus\{i\})\big)
		\right]
		=
		p_i\cdot
		\E_{A_{-i}}\big[f(A_{-i}\cup\{i\})-f(A_{-i})\big].
	\]
	Substituting this equality into the previous display yields
	\[
		\E_A[f(A)]
		\ge
		\sum_{i=1}^n p_i\cdot
		\E_{A_{-i}}\big[f(A_{-i}\cup\{i\})-f(A_{-i})\big],
	\]
	as claimed.
\end{proof}

\section{\boldmath Proof that \texorpdfstring{$h$}{h} is XOS}\label{sec:h_is_XOS_proof}

\hisXOS*

\begin{proof}[Proof of \Cref{clm:h_is_XOS}]
    Let us rename the agents of $X$ so that $X = \{1,\dots,|X|\}$, and denote $X^i = \{1,\dots,i-1\}$.

    For any $Z \subseteq X$, define the linear function $\ell^Z \in \reals^{|X|}_{\ge0}$ to be $\ell^Z = \boldsymbol{0}$ if $\sum_{i \in Z}\frac{c_i}{\marg{i}{S}} > 1/2$, and $\ell^Z_i = f(i \mid Z \cap X^i) \cdot I[i \in Z]$, otherwise.
    Fix any $X' \subseteq X$.
    If $S \subseteq X'$ is the set that maximizes \Cref{eq:def_h}, then
    $$h(X')=f(S)=\sum_{i \in S} f(i \mid S \cap X^i) = \sum_{i \in X'} \ell^S_i = \ell^S(X').$$
    Additionally, if there exists a linear function $\ell^Z$ such that $\ell^Z(\smallAgents') > \ell^S(\smallAgents')$,
    it holds that (i) payment is at most $1/2$: $\sum_{i \in X' \cap Z} \frac{c_i}{\marg{i}{Z}} \le \sum_{i \in Z} \frac{c_i}{\marg{i}{Z}} \le 1/2$ (or $\ell^Z(X')=0$), and
    (ii) reward is greater than $f(S)$:
    \begin{align*}
    f(S)
    &=
    \ell^S(X')
    < \ell^Z(X')
    =
    \sum_{i \in X'} \ell^Z_i \cdot I[i \in Z]
    =
    \sum_{i \in Z \cap X'} f(i \mid Z \cap X^i) \\
    &\le
    \sum_{i \in Z \cap X'} f(i \mid Z \cap X' \cap X^i)
    =
    f(Z \cap \smallAgents'),
    \end{align*}
    where the last inequality follows from submodularity of $f$. The existence of the set $X' \cap Z$ contradicts the optimality of $S$, and the claim follows.
\end{proof}

\section{\boldmath Subadditivity of $g$}\label{sec:gSA}

\gsubadditive*

\begin{proof}[Proof of \Cref{lem:gSA}]
    Let $S,T \subseteq [n]$ and assume $f$ is submodular. It holds that
    \begin{align*}
    g(S\cup T) 
    &=
    \left( 1-\sum_{i \in S\cup T} c_i/\marg{i}{S\cup T} \right)f(S\cup T) \\
    &=
    \left( 1-\sum_{i \in S\cup T} c_i/\marg{i}{S\cup T} \right)(f(T)+f(S \setminus T \mid T)) \\
    &\le
    \left( 1-\sum_{i \in S\cup T} c_i/\marg{i}{S\cup T} \right)(f(T)+f(S \setminus T )) && (\text{submodularity of } f) \\
    &\le
    \left( 1-\sum_{i \in S\cup T} c_i/\marg{i}{S\cup T}\right)(f(T)+f(S)) && (\text{monotonicity of } f) \\
    &\le
    \left( 1-\sum_{i \in T} c_i/\marg{i}{S\cup T}\right)f(T)+ \left( 1-\sum_{i \in S} c_i/\marg{i}{S\cup T} \right)f(S) && (c_i/\marg{i}{S\cup T} \ge 0) \\
    &\le
    \left( 1-\sum_{i \in T} c_i/\marg{i}{T}\right)f(T)+ \left( 1-\sum_{i \in S} c_i/\marg{i}{S} \right)f(S) && (\text{submodularity of }f)\\
    &=
    g(T) + g(S),
    \end{align*}
    which completes the proof.
\end{proof}

\section{\boldmath Constant-Competitive Ratio for WMRF $f$ via Safe Policies}\label{sec:WMRF}

In this section we show that when $f$ is a weighted matroid rank function there exists a constant-competitive safe policy.

\begin{theorem}\label{thm:wmrf}
There exists a safe policy that achieves $58.2$-competitive ratio when $f$ is a weighted matroid rank function.
\end{theorem}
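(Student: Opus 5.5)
We follow the same pipeline as the proof of \Cref{thm:SM_const_CR}, modifying only the team-hiring component so that it becomes \safe. Weighted matroid rank functions are submodular, so \Cref{prop:reduction_to_budget} still applies. The single-agent parts, namely the unit-demand prophet policy and the OCRS of \Cref{prop:muStar_largeAgentCase}, hire only one agent at contract $c_i/f_i$. They are therefore \safe. It thus suffices to give a \safe{} $\gamma$-competitive policy for the budgeted problem in the regime $\nu^\star\ge\frac{13}{14}\mu^\star$. The claimed constant $58.2$ should come out as $3(\gamma+2)$ with $\gamma\approx 17.4$. The exact value will be fixed once the small-agent constant is computed.

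For the small-agent regime we replace \Cref{alg:small_agents_SM} by a greedy rule that keeps the hired set \emph{independent} in the matroid. Write $f(S)=\max\{\sum_{i\in I}w_i : I\subseteq S \text{ independent}\}$. Agent $i\in X$ is hired if three conditions hold: $T^{i-1}\cup\{i\}$ is independent, $w_i\ge q_i$, and $f(T^{i-1}\cup\{i\})\le\nu^\star/5$. It then receives $\alpha_i=c_i/w_i$. The key structural fact is that for an independent set $T$ we have $f(T)=\sum_{i\in T}w_i$ and $\marg{i}{T}=w_i$ for every $i\in T$. Hence the final marginal of a hired agent never drops below the marginal at its hiring time, so the policy is \safe{} by definition. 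Moreover $f(W)=f(T)$ with no free-riding loss, which removes the factor $2$ of \Cref{lem:smallAgents_lowerbound_fW}. The budget bound follows as in \Cref{lem:smallAgents_bounded_payment}: $\sum_{i\in T}c_i/w_i\le\sum_{i\in T} c_i/q_i$, and $\sum_{i\in T}\sqrt{c_i}\lesssim f(T)/\sqrt{\nu^\star}$. The prices $q_i$ and the cap can be rescaled, since $\alpha_i$ is no longer doubled.

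The main obstacle is the analogue of Case 1 in \Cref{lem:smallAgents_lowerbound_fT}. There we must show that the greedy independent set $T$ competes with a demand set $D$ for prices $q$, now evaluated with the weights $w_i - q_i$. An element $i\in D$ can be rejected not only for a low marginal but also because it is spanned by $T^{i-1}$, and that event can occur while $w_i$ exceeds the weights of the elements in $T$. I would first restrict $D$ to an independent set $D'$ maximizing $\sum(w_i-q_i)$; this loses nothing for a matroid demand query. I would then charge each element of $D'\setminus T$ that was blocked by independence to an element of $T$ in its circuit, using the exchange property. Since greedy is online and does not process elements in weight order, I would bound the charging via a standard matroid prophet-style argument: each $t\in T$ blocks at most one element of $D'$ under a suitable bijection between $D'$ and a basis extension. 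This yields $f(T)\ge\frac12\bigl(f(D')-\sum_{i\in D'}q_i\bigr)$, or a similar bound. If this charging fails because of bad arrival orders, the fallback is a threshold/OCRS approach for matroids (e.g.\ \cite{kleinberg2012matroid,feldman2016online}) using thresholds $\max\{q_i,\tau\}$. That approach is still safe, since it always maintains an independent set.

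The remaining steps carry over verbatim from before. Case 2 (the cap is reached) uses only $f_i\le\mu^\star/14$. The good event of \Cref{lem:small_agents_good_case} depends only on $h$ and is unchanged. Combining the budget bound, $f(W)=f(T)$, the per-realization bound on $f(T)$, and $\Pr[\mathcal G]\ge 117/181$ gives the small-agent constant. Taking the minimum with the $1/56$ single-agent bound and applying \Cref{prop:reduction_to_budget} yields the stated competitive ratio.
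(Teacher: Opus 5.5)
Your safety and budget arguments for the independent-greedy rule are fine. The value guarantee fails, however, and not only in the charging step. Because the rule hires any agent that keeps $T$ independent and passes the price and cap tests, a cheap light agent arriving early can permanently block a heavy one.

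Consider a partition matroid with $k\ge 14$ rank-one parts $\{a_j,b_j\}$. All agents are active with probability one, and $a_j$ arrives before $b_j$. Set $w_{b_j}=1/k$ and $c_{b_j}\le 1/(2k^2)$. Set $w_{a_j}=\delta$ and $c_{a_j}\le\min\{\delta/2,\tfrac54\delta^2\}$, with $k\delta\le 1/5$. Then every agent lies in $X$, $\mu^\star=\nu^\star=1$, the good event holds, and $OPT\ge g(\{b_1,\dots,b_k\})\ge 1/2$. Yet your rule hires every $a_j$ (each passes $w_{a_j}\ge q_{a_j}$ and the cap) and then rejects every $b_j$ as spanned. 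So $f(T)=k\delta\to 0$, and the single-agent branch earns at most $1/k$.

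Your injection ``each $t\in T$ blocks at most one element of $D'$'' holds here but is useless: the blocker's weight is unrelated to $w_j-q_j$ of the blocked element. Hence $f(T)\ge\frac12\bigl(f(D')-\sum_{i\in D'}q_i\bigr)$ is false. This instance is exactly where \Cref{alg:small_agents_SM} relies on free-riding: it would hire $b_j$ and let $a_j$ shirk. A safe policy has no such escape and must be selective, which is the whole content of a matroid prophet inequality or OCRS. Your fallback names this tool but does not carry it out, and it is not a drop-in. Those guarantees hold in expectation against an expected or ex-ante benchmark, not per realization against a demand set under the good event as your Case~1 requires. They would also still need to be reconciled with the budget and the cap $\nu^\star/5$.

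Second, even with a correct team-hiring step, your pipeline cannot give $58.2$. You keep the large-agent branch of \Cref{prop:muStar_largeAgentCase}, which only guarantees $\mu^\star/56$, and you take the minimum with $1/56$ at the end. Hence $\gamma\ge 56$, and \Cref{prop:reduction_to_budget} yields at best $3(56+2)=174$, contradicting your target $\gamma\approx 17.4$.

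The paper reaches $58.2$ through a different decomposition that uses the matroid structure directly. For an independent set, each agent's optimal contract is $c_i/f_i$ regardless of the rest of the set. So the budgeted problem becomes a knapsack with fixed sizes $c_i/f_i$, budget $2/3$, and only items of size below $1/3$. \Cref{lem:two_large_agents} and \Cref{lem:add_f_approx} show that the better of this knapsack and the best single agent is a $4$-approximation, so a $\beta$-competitive online knapsack algorithm yields $4(\beta+2)$. The knapsack part is solved via an ex-ante LP with an expected knapsack constraint plus matroid-polytope constraints. It is rounded online by making each agent available with probability $z_i/2$ and accepting only when both the knapsack OCRS of \cite{jiang2022tight} and the matroid OCRS of \cite{feldman2016online} accept. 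This gives $\beta=4(3+e^{-2})$ and $4(\beta+2)\approx 58.2$. Safety is automatic there: the hired set is always independent, so each payment $c_i/f_i$ equals the final marginal. No prices $q_i$, cap, or good event are needed.
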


When $f$ is additive, we prove that a safe policy achieves an improved competitive ratio.

\begin{theorem}\label{thm:additive}
There exists a safe policy that achieves $20$-competitive ratio when $f$ is additive.
\end{theorem}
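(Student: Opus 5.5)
The approach is to reuse the architecture of \Cref{thm:SM_const_CR} with a \safe{} team-hiring subroutine for additive $f$. First, \Cref{prop:reduction_to_budget} and \Cref{lem:decomposition} still apply, since additive functions are submodular. Second, the single-agent part is handled by the prophet-inequality policy of \Cref{obs:unit-demand-2}, which hires one agent and is trivially \safe. It therefore suffices to give a \safe{} $\gamma$-competitive policy for the online budgeted problem with a better constant. The mixing step in the proof of \Cref{prop:reduction_to_budget} preserves safety, because each branch is \safe. To reach $20=3(\gamma+2)$ we need $\gamma\le 14/3$. If the bookkeeping does not allow this, I would instead optimize the mixing probabilities directly, comparing against $\max\{\E[\max_i g(\{i\})],\E[g(L^\star)]\}$ rather than paying the factor $3$ uniformly.

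For additive $f$ the marginal $\marg{i}{S}=f_i$ does not depend on $S$. Hence the contract $\alpha_i=c_i/f_i$ makes effort a dominant strategy regardless of who else is hired. Any policy that offers hired agents exactly $c_i/f_i$ is therefore \safe, and the free-riding loss of \Cref{lem:smallAgents_lowerbound_fW} disappears entirely. The budgeted problem becomes an online knapsack. Each active agent $i\in L$ has value $f_i$ and size $c_i/f_i\le 1/2$, the capacity is $1/2$, and the benchmark is $\mu^\star=\E[f(L^\star)]$. I would split agents into large ones ($f_i>\beta\mu^\star$) and small ones, exactly as in \Cref{subsec:budget_single_agent}. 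Large agents are handled by the OCRS of \cite{alaei2014bayesian} through \Cref{lem:largeAgent_LP_value}. Small agents are handled by a pricing rule: hire $i$ if $f_i\ge q_i:=\theta\sqrt{c_i\nu^\star}$ and the total payment stays below $1/2$. Here the payment cap replaces the reward cap, since payments are now exactly $\sum c_i/f_i$ and this can be tracked directly.

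The analysis mirrors \Cref{lem:smallAgents_lowerbound_fT}. In the case where the cap never binds, additivity makes $T$ an exact online demand set, because every $i$ with $f_i\ge q_i$ is taken. Then $f(T)\ge f(X^\star)-\sum_{i\in X^\star}q_i$, and by \Cref{lem:budgeted_sqrt_cost} with $\rho=1/2$ this is at least $(1-\theta/\sqrt{2}\cdot\sqrt{\nu^\star/f(X^\star)})f(X^\star)$. In the case where the cap binds, each hired agent satisfies $c_i/f_i\le \sqrt{c_i}/(\theta\sqrt{\nu^\star})$, and the sum of $\sqrt{c_i}$ is bounded as in \Cref{lem:smallAgents_bounded_payment}. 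A binding cap then forces $f(T)\gtrsim \theta^2\nu^\star/4$, minus one small agent. The good event is lower bounded by \Cref{lem:small_agents_good_case}. Here $h$ is simply a budgeted additive knapsack value, which is XOS by \Cref{clm:h_is_XOS}.

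The main obstacle is the constant. I expect to tune $\beta$, $\theta$, the good-event threshold, and the mixing probabilities jointly so that the final ratio is $20$. Additivity is what makes this feasible: dropping the factor-$2$ contracts, the factor-$1/2$ free-riding loss, and the $1/2$ payment loss on $f(W)$ saves roughly a factor of $4$ over the bound of $56$. If the pricing analysis falls short, I would exploit additivity further by replacing the demand-set argument with a per-item threshold on the ratio $f_i^2/c_i$. This is a fractional-knapsack (greedy by density) argument, and it compares directly to the LP relaxation of the budgeted benchmark, which is tighter.
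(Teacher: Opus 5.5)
\textbf{Gap: the constant $20$ is never established, and the proposed route cannot reach it.}

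Your safety bookkeeping is correct. For additive $f$ every marginal equals $f_i$, so paying $c_i/f_i$ makes any hiring rule safe, and randomizing between safe branches stays safe. The problem is that the constant $20$ is the whole content of the statement. The policy behind \Cref{thm:SM_const_CR} is already safe for additive $f$, since its contracts $2c_i/f_i$ exceed $c_i/f_i$, so an $O(1)$ safe policy comes for free. Your proposal leaves exactly the value $20$ to later tuning, and that tuning cannot succeed along your route:
\begin{itemize}
\item Inside $3(\gamma+2)\le 20$ the budgeted policy must earn $\tfrac{3}{14}\mu^\star$.
\item The bound $56$ you plan to divide by four is the guarantee of the single-large-agent branch (\Cref{prop:muStar_largeAgentCase}). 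That branch involves neither doubled contracts nor free-riding, so your additive savings do not touch it. Even granting them, $\gamma\approx 14$ gives $3\cdot 16=48$.
\item Retuning does not help. Let $\beta\mu^\star$ be the size threshold and let $\nu^\star<(1-\delta)\mu^\star$ trigger the large branch. That branch is certified only $\tfrac14\min\{\beta,\delta\}\mu^\star$: the OCRS loses $\tfrac12$, and $g(\{i\})$ can be as small as $f_i/2$. Reaching $\tfrac{3}{14}\mu^\star$ needs $\beta,\delta\ge 6/7$, and then Efron--Stein only gives $\Var\le 6(\nu^\star)^2$, so the good event is vacuous.
\item The small branch has its own ceiling. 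With payments capped at $\tfrac12$, even perfect concentration certifies only $\tfrac12\max_\theta\min\{1-\theta/\sqrt2,\theta^2/2\}\,\nu^\star\approx 0.19\,\nu^\star<\tfrac{3}{14}\mu^\star$.
\item Swapping in the knapsack OCRS of \cite{jiang2022tight} does not rescue the factor $3$ either. At capacity $\tfrac12$, items of size up to $\tfrac12$ are not small, so $\gamma=2(3+e^{-2})\approx 6.3$ and the ratio is about $25$.
\end{itemize}

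\textbf{The missing idea.} Stop fighting the budget-$\tfrac12$ structure with pricing and concentration. Instead, reduce to an online knapsack whose items are small relative to the capacity, where an off-the-shelf constant is available.

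\textbf{What the paper does.} It sets budget $B=\tfrac23$ and treats agents with $c_i/f_i<\tfrac13$ as items, so every size is below $B/2$. \Cref{lem:add_f_approx} shows the optimum is within a factor $4$ of the better of the best single agent and the small-item knapsack optimum. An optimum containing two agents with $c_i/f_i\ge\tfrac13$ is handled by \Cref{lem:two_large_agents}. The budgeted part is then solved by the $3$-competitive small-item knapsack algorithm (\Cref{thm:kssmallagents}) with contracts $c_i/f_i$, which is safe. Mixing with the single-agent policy gives $4(3+2)=20$ (\Cref{cor:ks_to_online_contract}).

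\textbf{Keeping budget $\tfrac12$.} You could instead keep budget $\tfrac12$ and replace \Cref{lem:decomposition} by a sharper pointwise bound, rather than better mixing. For additive $f$ one can check that $g(S^\star)\le f(L^\star)+\max_i(f_i-c_i)^+$. The only nontrivial case is $S^\star\subseteq L$ with total size $P\in(\tfrac12,1)$. There, greedy by density yields a capacity-$\tfrac12$ subset of value at least $f(S^\star)/(4P)\ge(1-P)f(S^\star)$. With this bound the mixing costs only $\gamma+2$, and the knapsack OCRS then lands comfortably below $20$.
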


Similar to the submodular case (\Cref{sec:positive_SM}), we first decompose the problem into finding the best single agent, and a budgeted problem.
The extra structure of WMRF allows us to reduce the budgeted problem to a knapsack problem with \emph{fixed} sizes, which does not depend on the set of elements already chosen.
As such, known OCRS for the additive objective under knapsack and matroid constraints can be combined to solve the latter sub-problem.

\newcommand{\optks}{S^{\textsf{KS}}}
\newcommand{\optsin}{S^{\textsf{SIN}}}

First we formally define the class of weighted matroid rank functions.
\begin{definition}[Weighted Matroid Rank Function (WMRF)]
A function $f:2^{[n]} \to \reals_+$ is a weighted matroid rank function if there exists a matroid on the ground set $[n]$, $\mathcal{M} = ([n],\mathcal{I})$ and a collection of non-negative weights $f_1, \dots,f_n$ such that for any $S \subseteq [n]$, $f(S)$ is the weight of the heaviest independent set in $S$. Namely,
\[
f(S) = \max_{S' \subseteq S, S'\in \mathcal{I}} \sum_{i \in S'} f_i.
\]
\end{definition}

We first observe that when $f$ is a WMRF, the optimal contract when incentivize an independent set of agents is to offer each agent $i$ a contract $\alpha_i = c_i/f_i$. This is because the marginal contribution of each agent $i$ in an independent set $S$ is exactly $f_i$.

Similar to the constant-competitive ratio for submodular functions, we reduce the problem to a budgeted variant of the problem. 
We leverage the structure of the optimal contract for an independent set of agents to show that the budgeted variant is closely related to the classic knapsack problem with values $f_i$ and sizes $s_i = c_i/f_i$.

\begin{definition}[(offline) Knapsack with weighted matroid rank function $f$]
Given a set of $n$ items with sizes $\{s_i\}_{i \in [n]}$, a WMRF
$f \colon 2^{[n]} \to \mathbb{R}$ with singleton values
$\{f_i = f(\{i\})\}_{i \in [n]}$, and a budget $B > 0$, the knapsack problem
asks for a set $S \subseteq [n]$ that maximizes $f(S)$ subject to
$\sum_{i \in S} s_i \le B$.
We say that an instance contains only \emph{small items} if
$s_i < B/2$ for every $i \in [n]$.
\end{definition}

The online variant of the problem has the same flavor of the online contracting problem.
\begin{definition}[(online) Knapsack with weighted matroid rank function $f$]
Fix a knapsack capacity $B > 0$. 
In the online variant of the problem, the value and size of each item, $(f_i, s_i)$, are fixed and known.
Items arrive in an order known to the algorithm, and each item $i$ is active with probability $p_i$, independently of other items.
Upon arrival, the algorithm must decide immediately and
irrevocably whether to accept item $i$.
We say that an instance contains only \emph{small items} if for every $i \in [n]$ we have $s_i < B/2$.
\end{definition}

Our main observation is that it is sufficient to find a competitive algorithm for the online knapsack problem with small items and budget $B=2/3$.
\begin{proposition}\label{cor:ks_to_online_contract}
    If a competitive ratio $\beta \ge 1$ is achievable for the online knapsack problem with small items and $B=2/3$, then $4(\beta+2)$-CR is achievable for the online contracting problem.
\end{proposition}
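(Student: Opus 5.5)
We follow the template of \Cref{prop:reduction_to_budget}: decompose the offline benchmark into a single-agent part and a budgeted part, and then show that, for a WMRF, the budgeted part is within a constant factor of an online knapsack instance with small items and $B=2/3$.

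\textbf{Step 1: the offline decomposition.} By \Cref{lem:decomposition} (a WMRF is submodular), for every realization $A$,
\[
g(\OPT)\le 3\max\bigl\{g(\OPTbudget),\max_{i\in A} g(\{i\})^+\bigr\},
\]
where $\OPTbudget$ is the budget-$1/2$ optimum. Since $g(\OPTbudget)\le f(\OPTbudget)$, it suffices to compete with $\E[f(\OPTbudget)]$ and with $\E[\max_i g(\{i\})^+]$.

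\textbf{Step 2: reduce the budgeted problem to knapsack.}
\begin{itemize}
\item Take a maximum-weight independent subset $I\subseteq \OPTbudget$. Then $f(I)=f(\OPTbudget)$.
\item Every $i\in I$ has $\marg{i}{I}=f_i$, and by submodularity $\marg{i}{\OPTbudget}\le \marg{i}{I}$. Hence
\[
\sum_{i\in I} \frac{c_i}{f_i}\le \sum_{i\in I}\frac{c_i}{\marg{i}{\OPTbudget}}\le \frac12 .
\]
So $I$ is feasible for the knapsack problem with sizes $s_i=c_i/f_i$, values given by the WMRF $f$, and budget $1/2$.
\item Conversely, any set $S$ with $\sum_{i\in S} s_i\le B$ can be pruned to its maximum-weight independent subset $S'$ without losing value. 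Paying $\alpha_i=c_i/f_i$ on $S'$ is safe, since every marginal is exactly $f_i$. It leaves utility at least $(1-B)f(S)$, which is $f(S)/3$ for $B=2/3$.
\end{itemize}

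\textbf{Step 3: split items by size.}
\begin{itemize}
\item Items with $s_i\ge 1/3$ ("large") can appear at most once in a budget-$1/2$ set, because two of them have total size at least $2/3>1/2$. Their contribution is captured by the single-agent benchmark, since each such $i$ has $s_i\le 1/2$ and so $g(\{i\})\ge f_i/2$.
\item Items with $s_i<1/3=B/2$ form a small-item knapsack instance with $B=2/3$. The budget-$1/2$ optimum restricted to small items is feasible there, so the knapsack optimum dominates it.
\item By subadditivity of $f$,
\[
f(\OPTbudget)\le f(\OPTbudget\cap\text{small})+\max_{i \text{ large}} f_i\le \mathrm{KS}+2\max_i g(\{i\})^+ .
\]
\end{itemize}

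\textbf{Step 4: combine by randomization.} We randomize as in \Cref{prop:reduction_to_budget}:
\begin{itemize}
\item with some probability run the $2$-competitive single-agent policy (\Cref{obs:unit-demand-2}, which is safe);
\item otherwise run the $\beta$-competitive knapsack algorithm on small items, followed by the pruning and contracts of Step 2.
\end{itemize}

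The main obstacle is Step 2's online pruning. The accepted knapsack set need not be independent, and online we cannot drop agents later. Moreover, a dependent agent that is hired at price $c_i/f_i$ will not work.

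There are two possible fixes. The first is to accept a knapsack item only if it keeps the hired set independent. This requires the knapsack OCRS to be combined with a matroid OCRS, which would change $\beta$. The second is to hire every accepted agent in the max-weight basis online, i.e. only agents with positive marginal relative to the current set. Either way, the total payment is at most $B=2/3$, so the principal keeps $1/3$ of the reward.

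Collecting the constants, the utility is at least
\[
\frac{1}{3}\cdot\frac{1}{\beta}\,\mathrm{KS}\quad\text{and}\quad \frac12\,\E\bigl[\max_i g(\{i\})\bigr]
\]
in the respective branches. Balancing the mixing probabilities yields the factor $4(\beta+2)$. Getting exactly $4$ rather than $3\cdot 3$ requires using the sharper bound $g(\OPT)\le 2g(\OPTbudget)$ in the cases of \Cref{lem:decomposition} that apply here. We would check this constant last, since it only affects the stated factor, not whether the reduction works.
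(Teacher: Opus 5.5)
Your outline is a valid reduction up to constants, but it does not prove the stated factor $4(\beta+2)$, and the repair you suggest at the end would not close the gap. Write $V=\max_{i\in A}g(\{i\})^+$, and let $\mathrm{KS}$ be the $f$-value of the knapsack optimum with $B=\frac23$ over agents with $s_i<\frac13$. What you actually establish, via \Cref{lem:decomposition} and your Step~3, is $g(S^\star)\le 3\max\{f(L^\star),V\}\le 3\,\mathrm{KS}+6V$ per realization. Against your branch guarantees $\frac{1}{3\beta}\E[\mathrm{KS}]$ and $\frac12\E[V]$, the best mixture certifies only a ratio of $9\beta+12$. Even your unproven sharpening $g(S^\star)\le 2\max\{g(L^\star),V\}$ would only give $6\beta+8>4(\beta+2)$.

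The loss is built into the route. The detour through the budget-$\frac12$ benchmark $L^\star$ costs a factor $3$ before the knapsack appears, and the large-item term and the $1-B=\frac13$ retention multiply on top of it. The missing idea, which is the paper's route (\Cref{lem:add_f_approx}), is to skip $L^\star$. Instead, compare the unconstrained optimum $T$ directly with the $B=\frac23$ small-item knapsack and the best singleton. With positive costs an optimal $T$ is independent, so every marginal equals $f_i$ and $g(T)=(1-\sum_{i\in T}s_i)f(T)$. Now case on how many $i\in T$ have $s_i\ge\frac13$:
\begin{itemize}
\item Three or more give $g(T)\le0$.
\item None, with $\sum_{i\in T}s_i\le\frac23$, gives $g(T)\le\mathrm{KS}$.
\item None, with $\sum_{i\in T}s_i>\frac23$, lets you split $T$ greedily into two knapsack-feasible parts, so $g(T)<\frac13f(T)\le\frac23\mathrm{KS}$.
\item Exactly one, $j$, gives $g(T)\le f(T\setminus\{j\})+(f_j-c_j)\le\mathrm{KS}+V$, since $T\setminus\{j\}$ is feasible.
\item Exactly two forces $T=\{j,k\}$ (\Cref{lem:two_large_agents}), so $g(T)\le 2V$.
\end{itemize}
Hence $g(T)\le\mathrm{KS}+2V$ in every realization. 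Mixing your two branches with probabilities $\frac{3\beta}{3\beta+4}$ and $\frac{4}{3\beta+4}$ then gives ratio $3\beta+4\le 4(\beta+2)$. Your explicit charge of the $\frac13$ retention in the knapsack branch is the right bookkeeping. The paper's final display credits that branch with $\frac1\beta\E[V^{\mathrm{KS}}]$ of utility without this factor; it is this sharper per-realization bound that absorbs it.

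Separately, the obstacle you single out in Step~4 is not real, so neither of your fixes is needed. The second fix is also misdescribed. In a matroid, an element spanned by the current hired set still has positive marginal when it is heavier than another element of the circuit it closes, so hiring only positive-marginal agents does not keep the hired set independent. Instead, just pay $\alpha_i=c_i/f_i$ to every item the knapsack algorithm accepts. Let $S$ be the accepted set and $W\subseteq S$ a maximum-weight independent subset; then $W$ is an equilibrium. Each $i\in W$ has marginal exactly $f_i$ with respect to $W$. Each $i\in S\setminus W$ has zero marginal with respect to $W$ and positive cost, so it shirks. Since $f(W)=f(S)$ and $\sum_{i\in S}\alpha_i\le\frac23$, the principal keeps at least $\frac13 f(S)$. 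The proposition does not ask for a safe policy, so hired-but-shirking agents are harmless and the $\beta$-competitive knapsack algorithm is used unchanged.
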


The following lemma limits the number of ``large'' agents in the optimal solution to the online contracting problem.
\begin{lemma}\label{lem:two_large_agents}
    Let $T$ be an optimal solution to the offline contracting problem for a fixed realization, and let $A' = \{i \in [n] \mid (c_i/f_i)>1/3\}$.
    If $|T \cap A'| \ge 2$, then $f(T \setminus A') = 0$.
\end{lemma}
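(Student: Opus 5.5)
Since $T$ is optimal, we may assume $g(T) > 0$; otherwise the empty set does at least as well, and we may take $T$ to be a minimal optimal set. The plan is to compare payments to large agents with marginal contributions. For a WMRF, a large agent $i \in T\cap A'$ is paid $c_i/\marg{i}{T}$. We have $\marg{i}{T} \le f_i$, and for a WMRF this marginal equals either $0$ or some value $\le f_i$. Hence each large agent is paid at least $c_i/f_i > 1/3$. Two large agents therefore cost more than $2/3$ in total, leaving less than $1/3$ of the reward to the principal.

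The main step is to show that every small agent $j \in T\setminus A'$ is redundant, and argue by contradiction. Suppose $f(T\setminus A') > 0$. For a WMRF, $T$ must be independent in the matroid, up to zero-weight elements. The reason is that any dependent element has marginal $0$ and positive cost, which makes $g(T) = -\infty$ (recall the paper's convention that costs are positive). With $T$ independent, each agent's marginal is exactly $f_i$, so the payment is $\sum_{i\in T} c_i/f_i$ and $g(T) = (1-\sum_{i\in T} c_i/f_i)\sum_{i\in T} f_i$. Since $|T\cap A'|\ge 2$, the total payment $P := \sum_{i \in T} c_i/f_i$ exceeds $2/3$.

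Next I compare $T$ with $T' = T\setminus\{j\}$ for a small agent $j$ with $f_j > 0$, writing $s_j = c_j/f_j$ and $F = f(T)$. Then
\[
g(T') = (1-P+s_j)(F - f_j),
\]
and
\[
g(T') - g(T) = s_j(F-f_j) - (1-P) f_j .
\]
I would instead remove all small agents at once. Let $T_L = T \cap A'$ and $T_S = T\setminus A'$, with $P = P_L + P_S$ and $F = F_L + F_S$. Then
\[
g(T_L) - g(T) = P_S F_L - (1-P)F_S .
\]
Here $1-P < 1/3$, but $F_L$ could be small relative to $F_S$, so this difference is not obviously positive. The alternative is to keep a single large agent together with the small ones and drop the other large agents. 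Dropping the large agent $\ell$ gives
\[
g(T\setminus\{\ell\}) - g(T) = s_\ell (F - f_\ell) - (1-P) f_\ell ,
\]
where $s_\ell > 1/3 > 1-P$. This difference is positive whenever $F - f_\ell \ge f_\ell$, which holds for the $\ell$ that is not the heaviest large agent provided $F - f_\ell \ge f_\ell$. Combining the two removal options should contradict optimality unless $F_S = 0$. The hardest step is the case analysis: choose $\ell$ to be the large agent of minimum $f_\ell$, which gives $F - f_\ell \ge f_{\ell'} \ge f_\ell$ for the other large agent $\ell'$, and tie-breaking toward smaller sets handles equality. I would verify this inequality carefully, since it actually yields the strong conclusion that an optimal $T$ cannot contain two large agents with $f > 0$ at all. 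That makes the lemma's conclusion that $f(T \setminus A') = 0$ hold whenever such a $T$ exists (for instance vacuously, or when $g(T)\le 0$ and $T$ is minimal).
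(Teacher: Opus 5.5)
Your argument is correct and is essentially the paper's: both compare $g(T)$ with $g(T\setminus\{\ell\})$ for a large agent $\ell$ and use that its share $c_\ell/f_\ell>1/3$ exceeds what the principal keeps, $1-P<1/3$. The paper derives $f_j > f(T\setminus\{j\})$ for both large agents and chains the two inequalities, whereas you take the lighter one and get a strict contradiction directly, so the lemma holds vacuously. Your detours through removing small agents, minimal optimal sets, and tie-breaking are unnecessary: positive costs force $T$ to be independent with $f_\ell>0$, so the inequality is strict for every optimal $T$.
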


\begin{proof}
First observe that, by the optimality of $T$, we must have $|T \cap A'| < 3$.
Indeed, if $|T \cap A'| \ge 3$, then $\sum_{i \in T} c_i/f_i > 1$, which implies $g(T) < 0 = g(\emptyset)$, contradicting optimality.

We therefore assume $|T \cap A'| = 2$, and write $T \cap A' = \{j,k\}$.
Without loss of generality, assume that $T$ is independent and that
$f_i > 0$ for all $i \in T$.

By optimality of $T$, the marginal contribution of $j$ to $g$ is non-negative. Hence,
\begin{align*}
    0 
    &\le g(T) - g(T \setminus \{j\}) \\
    &= \left(1-\sum_{i \in T} \frac{c_i}{f_i}\right)f(T) - \left(1-\sum_{i \in {T\setminus \{j\}}} \frac{c_i}{f_i}\right)(f(T)-f_j) \\
    &= -\frac{c_j}{f_j}f(T)+f_j \left(1-\sum_{i \in {T\setminus \{j\}}} \frac{c_i}{f_i}\right)\\
    &< -\frac{1}{3}f(T)+f_j \frac23 \\
    &=\frac13 f_j -\frac{1}{3}f(T\setminus \{j\}),
\end{align*}
where the strict inequality follows since $c_j/f_j > 1/3$ and $\sum_{i \in T \setminus \{j\}} c_i/f_i \ge c_k/f_k > 1/3$.

It follows that $f_j \ge f(T \setminus \{j\})$.
By symmetry, the same argument applied to $k$ yields
$f_k \ge f(T \setminus \{k\})$.
Combining the two inequalities, we obtain $f_k \ge f(T \setminus \{k\})
\ge f_j
\ge f(T \setminus \{j\})
\ge f_k$, which implies $f(T \setminus \{k\}) = f_j$.

Since $j \in T \setminus \{k\}$, we have that for every $i \in T \setminus \{j,k\} = T \setminus A'$,
$f_T(i) = f(T) - f(T\setminus\{i\}) \le f(T\setminus \{k\}) - f(T\setminus\{i,k\})=0$. 
By optimality of $T$, this implies that the payment to $i$ is zero, and hence
$c_i = 0$. By assumption, this further implies that $f_i = 0$.
We conclude that $f(T \setminus A') \le \sum_{i \in T \setminus A'} f_i =0$, as claimed.
\qedhere
\end{proof}

\begin{lemma}\label{lem:add_f_approx}
    For any realizations of active agents, the maximum between the best single-agent, $\max_j (f_j-c_j)^+$, and the optimal solution to the offline knapsack problem with $B=2/3$, values $f_i$, sizes $s_i = c_i/f_i$, and where we only consider agents with $s_i \le 1/3$, achieves a $4$-approximation to the optimal contract.
\end{lemma}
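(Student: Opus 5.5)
Fix a realization of active agents and let $T$ be an optimal offline working set, i.e., $T$ maximizes $g$. Write $A'=\{i \mid c_i/f_i>1/3\}$ for the large agents. I will split into cases according to $|T\cap A'|$.

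Some preliminary facts. Without loss of generality, $T$ is independent in the matroid and has $f_i>0$ for all $i\in T$. Otherwise a dependent or zero-weight agent has zero marginal. Such an agent must then have $c_i=0$, and dropping it does not change $g$. For an independent set, every marginal equals $f_i$. Hence $g(T)=(1-\sum_{i\in T}s_i)f(T)$ with $s_i=c_i/f_i$, and $g(T)>0$ forces $\sum_{i\in T}s_i<1$, so $|T\cap A'|\le 2$.

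\textbf{Case $|T\cap A'|\ge 2$.} By \Cref{lem:two_large_agents}, $f(T\setminus A')=0$. Subadditivity of $g$ (\Cref{lem:gSA}, since a WMRF is submodular) then gives $g(T)\le g(\{j\})+g(\{k\})+g(T\setminus A')$, where $T\cap A'=\{j,k\}$. The last term is at most $f(T\setminus A')=0$, so $g(T)\le 2\max_j(f_j-c_j)^+$.

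\textbf{Case $|T\cap A'|=1$.} Write $T\cap A'=\{j\}$. Subadditivity gives $g(T)\le g(\{j\})+g(T\setminus A')$. For the small part $T_0=T\setminus A'$, I use a bound I will prove for any independent set $T_0$ of small agents: $g(T_0)\le 2\,\mathrm{KS}$, where $\mathrm{KS}$ is the optimal knapsack value with budget $2/3$. This yields $g(T)\le \max_j(f_j-c_j)^++2\,\mathrm{KS}\le 3\max\{\cdot\}$.

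\textbf{Case $T\cap A'=\emptyset$.} This is the main step. All $s_i\le 1/3$, $T$ is independent, and $g(T)\le f(T)$ while $\sum_{i\in T}s_i<1$. If $\sum_{i\in T}s_i\le 2/3$, then $T$ itself is knapsack-feasible, so $g(T)\le f(T)\le\mathrm{KS}$. Otherwise, I greedily partition $T$ into at most two parts, each of total size at most $2/3$:
\begin{itemize}
\item add items to a first part while its size stays $\le 2/3$;
\item when it overflows, its size exceeds $2/3-1/3=1/3$, so the remainder has size $<1-1/3=2/3$.
\end{itemize}
Subsets of an independent set are independent, and $f$ is additive on independent sets. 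Hence one part carries at least half of $f(T)$, giving $g(T)\le f(T)\le 2\,\mathrm{KS}$. Since $g(T)<\tfrac13 f(T)$ when $\sum_{i\in T} s_i>2/3$, this case even gives $g(T)\le \tfrac23\mathrm{KS}$. The same partition argument proves the bound on $g(T_0)$ used in the previous case.

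Combining the cases, $g(T)\le 3\max\{\max_j(f_j-c_j)^+,\mathrm{KS}\}$, which is within the claimed factor $4$. The one point to check carefully is that the knapsack benchmark restricted to agents with $s_i\le 1/3$ is stated over the same realized active set, so that $T\setminus A'$ and its parts are feasible competitors.
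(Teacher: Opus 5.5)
Your case split (on $|T\cap A'|$) and each individual step are sound, but you bound $g(T)$ against the wrong quantity. Throughout, you compare $g(T)$ with $\mathrm{KS}=f(S^{\textsf{KS}})$, the raw knapsack value. The lemma, however, is about what the better of the two \emph{contracts} achieves. So the knapsack solution must be credited with the principal utility it actually yields, $g(S^{\textsf{KS}})=\bigl(1-\sum_{i\in S^{\textsf{KS}}}s_i\bigr)f(S^{\textsf{KS}})$, which is only guaranteed to be at least $f(S^{\textsf{KS}})/3$. The paper accordingly sets $M=\max\{g(S^{\textsf{KS}}),\max_j(f_j-c_j)^+\}$. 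The constant $4$ is exactly $3+1$: the factor-$3$ conversion on the knapsack part plus the single agent. Your final bound $g(T)\le 3\max\{\max_j(f_j-c_j)^+,\mathrm{KS}\}$ only translates into a factor $9$ against $M$.

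Most of your cases survive the conversion. With no large agent, you get either $g(T)\le f(S^{\textsf{KS}})\le 3g(S^{\textsf{KS}})$ or $g(T)\le\tfrac23 f(S^{\textsf{KS}})\le 2g(S^{\textsf{KS}})$. With two large agents, you get $2\max_j(f_j-c_j)^+$.

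The one-large-agent case does not survive. Your partition bound $g(T_0)\le 2f(S^{\textsf{KS}})$ becomes $6g(S^{\textsf{KS}})$, which yields only $g(T)\le 7M$. The missing observation is that $T_0=T\setminus A'$ needs no partition at all. Since $\sum_{i\in T}s_i<1$ and $s_j>1/3$, you have $\sum_{i\in T_0}s_i<2/3$, so $T_0$ is itself knapsack-feasible. Hence $g(T_0)\le f(T_0)\le f(S^{\textsf{KS}})\le 3g(S^{\textsf{KS}})$, and subadditivity of $g$ gives $g(T)\le 4M$; this is the paper's argument. Under your raw-value reading, the same observation would also tighten your constant from $3$ to $2$.
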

\begin{proof}
Fixed a realized set of active agents, and let $A' = \{i \in [n] \mid c_i/f_i \ge 1/3\}$. Note that any agent not in $A'$ will be a small item for $B=2/3$.

Let $\optks$ be the optimal solution to the knapsack problem with $B = 2/3$ and only the agents satisfying $s_i = c_i/f_i <1/3$, i.e., an instance with small items. 

Observe that,
$g(\optks) = (1-\sum_{i \in \optks}c_i/f_i)f(\optks)\ge f(\optks)/3$.
Also, for any set $U \subseteq [n]$ with  $c_i/f_i<1/3$ for any $i \in U$ and $\sum_{i\in U} \frac{c_i}{f_i} \le 2/3$, it holds that $f(\optks) \ge f(U)$, thus 
$g(\optks) 
\ge \frac{1}{3} f(\optks)
\ge \frac{1}{3} f(U)
\ge \frac{1}{3} g(U)$.

Denote by $M$ the maximum between $g(\optks)$ and the maximal utility from hiring a single agent, i.e., $M = \max\{g(\optks),\max_{j}(f_j-c_j)^+\}$, and let $T$ be the optimal solution to the online contracting problem.
Assume without loss of generality that $T$ is an independent set and that $f_i>0$ for any $i \in T$.

If $T$ contains 3 or more agents with $c_i/f_i \ge 1/3$, then $g(T) \le 0$, and the claim trivially holds.

If $T \cap A' = \emptyset$, and $\sum_{i \in T} c_i/f_i \le 2/3$, then $T$ is a valid solution to the knapsack problem with small items. Thus,
\[
M \ge g(\optks) \ge \frac13 g(T)
\]

If $T \cap A' = \emptyset$, 
and $\sum_{i \in T} c_i/f_i > 2/3$, then observe that
$g(T) < (1/3) \cdot f(T)$.
Moreover, there exists a partition of $T = T_1 \sqcup T_2$ such that 
$\sum_{i \in T_1} c_i/f_i \le 2/3$, and $\sum_{i \in T_2} c_i/f_i \le 2/3$ (simply add elements in an arbitrary order into one set until hitting or exceeding a total payment of $1/3$. Add the remaining elements into the other set). Assume without loss of generality that $f(T_1) \ge f(T)/2$.
Thus,
$$
M \ge g(\optks) \ge \frac{1}{3}f(T_1) \ge \frac{1}{6}f(T) \ge \frac{1}{2}g(T).
$$

If $|T \cap A'| = 1$, let $\{j\} = T \cap A'$, and observe that $T\setminus \{j\}$ is such that $\sum_{i \in T\setminus \{j\}} c_i/f_i \le 2/3$ and all agents have $c_i/f_i < 1/3$. Thus, by subadditivity of $g$,
\[
g(T) \le g(T \setminus A') + g(T \cap A') \le 3g(\optks)+\max_{j \in [n]} (f_j-c_j) \le 4M
\]

If $|T \cap A'| = 2$, then by \Cref{lem:two_large_agents}, we have $f(T \setminus A')=0$.
Thus, by the subadditivity of $g$,
\[
g(T) \le g(T\setminus A')+\sum_{j \in T \cap A'} g(\{j\}) \le f(T\setminus A') + 2 \cdot \max_{j \in [n]} (f_j-c_j) \le 2M.
\]
This concludes the proof.
\end{proof}

\newcommand{\valks}{V^{\textsf{KS}}}
\newcommand{\valsin}{V^{\textsf{SIN}}}

We are now ready to prove \Cref{cor:ks_to_online_contract}.
\begin{proof}[Proof of \Cref{cor:ks_to_online_contract}]
For a given realization, let $OPT$ denote the value of the optimal principal utility, $\valks$ denotes the value the optimal knapsack solution with budget $2/3$ and small items, and $\valsin$ the value of the optimal singleton.
By \Cref{lem:add_f_approx}, for any realization $OPT \le 4 \cdot \max\{\valks,\valsin\} \le 4 (\valks+\valsin)$.
Taking expectation over both sides we get
$\E[OPT] \le 4 (\E[\valks] + \E[\valsin])$.

By \Cref{obs:unit-demand-2}, we can get a 2-CR to $\E[\valsin]$ and by assumption there is a $\beta$-CR for the knapsack problem with small agents.
The algorithm which takes the maximum from the two above guarantees a $(4(\beta+2))$-CR:
\begin{align*}
    \E[ALG]
    &=
    \max \{(1/\beta)\E[\valks], (1/2)\valsin\} \\
    &\ge 
    (\beta/(\beta+2))\cdot(1/\beta)\cdot \E[\valks] +  (2/(\beta+2))\cdot(1/2)\cdot \E[\valsin] \\
    &= (1/(\beta+2))(\E[\valks] + \E[\valsin]) \\
    &\ge 
    \E[OPT]/(4(\beta+2)),
\end{align*}
where the last inequality follows from \Cref{lem:add_f_approx}.
This completes the proof.
\end{proof}

\Cref{cor:ks_to_online_contract} in conjunction with \Cref{thm:kssmallagents}, immediately implies \Cref{thm:additive}.
\begin{theorem}[\cite{dutting2020prophet}]\label{thm:kssmallagents}
    When $f$ is additive, there exists a 3-CR algorithm for the knapsack problem with budget $B$ where each item has size at most $B/2$.
\end{theorem}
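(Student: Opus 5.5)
The plan is a single per-unit-size threshold price, analysed with the standard revenue-plus-utility decomposition of prophet inequalities. The factor $3$ should then come from balancing two cases: the knapsack becomes at least half full, or it never does.

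\textbf{The algorithm.} Let $\mathrm{OPT}=\E[\max\{f(S): S\subseteq A,\ \sum_{i\in S}s_i\le B\}]$ be the prophet's value. Set the price per unit of size to $\lambda=\frac{2\,\mathrm{OPT}}{3B}$, and charge item $i$ the price $\lambda s_i$. When an active item $i$ arrives, accept it if $f_i\ge \lambda s_i$ and it still fits, i.e.\ the current load plus $s_i$ is at most $B$. Since $f$ is additive, $\mathrm{ALG}=\E[\sum_{i\text{ acc}}f_i]$. Split this as $\mathrm{ALG}=\mathrm{REV}+\mathrm{UTIL}$, where $\mathrm{REV}=\lambda\cdot\E[\text{final load}]$ and $\mathrm{UTIL}=\E[\sum_{i\text{ acc}}(f_i-\lambda s_i)]$. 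Every accepted item has nonnegative utility.

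\textbf{Revenue.} Let $q$ be the probability that the final load is at least $B/2$. On that event the revenue is at least $\lambda B/2$, so $\mathrm{REV}\ge q\lambda B/2=q\,\mathrm{OPT}/3$.

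\textbf{Utility.} Call item $i$ \emph{unblocked} if the load just before its arrival is below $B/2$. Because every size satisfies $s_i\le B/2$, an unblocked, active item with $f_i\ge\lambda s_i$ always fits, and so it is accepted. The event that $i$ is unblocked depends only on the activity of items $1,\dots,i-1$ and the algorithm's history. Hence it is independent of whether $i$ is active. Moreover, loads only grow, so its probability is at least $1-q$.

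To compare with the prophet, I would use the usual ghost-sample argument. Let $A'$ be an independent copy of the active set, and let $S^\star(A')$ be the prophet's set on $A'$. For each $i$, the event $\{i\in A\}$ together with the prophet's decision on the pair $(A',i)$ has the same law as $\{i\in S^\star(A')\}$. This gives
\[
\mathrm{UTIL}\ \ge\ \sum_i \Pr[i\text{ unblocked}]\;\E\big[\mathbf 1[i\in S^\star(A')]\,(f_i-\lambda s_i)^+\big]
\ \ge\ (1-q)\Big(\mathrm{OPT}-\lambda\,\E\big[\textstyle\sum_{i\in S^\star}s_i\big]\Big).
\]
The prophet's set has total size at most $B$, so $\mathrm{UTIL}\ge(1-q)(\mathrm{OPT}-\lambda B)=(1-q)\,\mathrm{OPT}/3$.

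\textbf{Combining and the main obstacle.} Adding the two bounds gives $\mathrm{ALG}\ge q\,\mathrm{OPT}/3+(1-q)\,\mathrm{OPT}/3=\mathrm{OPT}/3$, which is the claimed 3-competitive ratio. The step needing the most care is the independence argument in the utility bound. We must justify that the prophet's selection of $i$ may be coupled with a ghost copy of the other items, so that conditioning on $i$ being unblocked does not bias the prophet's choice on $i$. I would handle this exactly as in the balanced-prices analysis of \cite{dutting2020prophet}: condition on the history before $i$, use that $A'$ is independent of the real history, and note that each term is nonnegative because the positive part $(\cdot)^+$ is taken.
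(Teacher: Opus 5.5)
Your argument is correct and is essentially the balanced-prices proof of \cite{dutting2020prophet}, which the paper cites without reproducing: the per-unit price $\lambda=\frac{2\,\mathrm{OPT}}{3B}$ is the scaled $(2,1)$-balanced price, and your revenue/utility split (the load reaches $B/2$, versus item $i$ arriving while the load is below $B/2$) is that proof written out directly. The ghost-sample coupling you flag as the main obstacle is not needed: $\mathbf 1[i\in S^\star]\le \mathbf 1[i\in A]$ holds pointwise and the event that $i$ is unblocked is independent of $i$'s activity, so $\mathrm{UTIL}\ge(1-q)\sum_i \E\big[\mathbf 1[i\in A](f_i-\lambda s_i)^+\big]\ge(1-q)(\mathrm{OPT}-\lambda B)$ follows immediately.
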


\Cref{thm:wmrf} follows from \Cref{cor:ks_to_online_contract} together with the following claim, which we prove below.
\begin{proposition}\label{prop:ks_ocrs_intersection}
    There exists an online algorithm for the online knapsack problem with budget $2/3$ and matroid constraints which achieves $4\cdot (3+e^{-2}) \approx 12.54$-CR.
\end{proposition}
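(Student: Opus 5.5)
We want an online algorithm for knapsack with capacity $B=2/3$, all items small ($s_i<1/3=B/2$), and an additional matroid constraint coming from the WMRF structure. The constant $4(3+e^{-2})$ suggests combining two ingredients: a knapsack OCRS for small items and a matroid OCRS.

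\textbf{Step 1: LP benchmark.} Take the ex-ante relaxation. Let $x_i$ be the probability that the offline optimum $S^\star(A)$ (the best independent set within the knapsack) contains $i$, with $S^\star(A)$ chosen independent in $\mathcal M$ without loss of generality. Then:
\begin{itemize}
\item $x_i\le p_i$;
\item $\sum_i x_i s_i\le 2/3$;
\item $x$ lies in the matroid polytope, since it is a convex combination of independent sets;
\item $\E[\mathrm{OPT}]=\sum_i x_i f_i$, because on an independent set $f$ is additive.
\end{itemize}
So $\sum_i x_i f_i$ upper bounds the benchmark, and it suffices to round $x$ online.

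\textbf{Step 2: two OCRSs and their intersection.} Let $\pi_{\mathcal M}$ be an OCRS for the matroid polytope (Feldman--Svensson--Zenklusen). Let $\pi_{KS}$ be a knapsack OCRS for small items. Run both on a scaled-down vector $b\cdot x$, and accept an active item only if both schemes would select it. This preserves independence in $\mathcal M$ and keeps total size at most $2/3$. Since each accepted set is independent, the reward equals $\sum_{i\in T} f_i$.

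\textbf{Step 3: the main obstacle, correlation between the two schemes.} I would use the standard bound for intersecting selectable schemes: conditioned on $i$ being active and sampled, the probability that $i$ is accepted by both is at least $c_{\mathcal M}+c_{KS}-1$. A cleaner route is to treat the knapsack as a random greedy threshold with $e^{-2}$-type loss, since the $e^{-2}$ term suggests a bound like $\Pr[\text{knapsack load} \le B/2]$ from a Poisson-type estimate. I would pick scaling parameters so that the combined selectability guarantee gives a $1/(3+e^{-2})$ fraction of $\sum_i x_i f_i$. The extra factor 4 would come from applying the guarantee separately to the restricted agent class or from the scaling. I expect pinning the exact constant to be the delicate part, and I would first try the union bound
\[
\Pr[i \text{ rejected}]\le \Pr[\text{matroid rejects}]+\Pr[\text{knapsack rejects}],
\]
tuning $b$ to optimize the sum.
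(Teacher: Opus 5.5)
\textbf{There is a genuine gap in Step~3.} That step fixes the whole constant, and the one concrete tool you propose there fails with the ingredients that produce $4(3+e^{-2})$.

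\emph{Why the union bound fails here.} The matroid OCRS of \cite{feldman2016online} at $b=\tfrac12$ accepts a sampled element with probability at least $\tfrac12$. The knapsack OCRS of \cite{jiang2022tight} accepts with probability at least $\tfrac{1}{3+e^{-2}}\approx 0.32$. So your bound $c_{\mathcal M}+c_{KS}-1$ is negative and certifies nothing. Tuning $b$ does not help unless you say how the knapsack guarantee improves as $b$ shrinks, which you do not.

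\emph{Where the constants actually come from.} Your readings of them are off. The $e^{-2}$ is not a Poisson tail you derive; it is built into the black-box knapsack guarantee. The factor $4$ has nothing to do with a restricted agent class. In the paper:
\begin{enumerate}
\item each active agent is made available with probability $z_i/2$, giving a factor $\tfrac12$;
\item the matroid OCRS at $b=\tfrac12$ gives another $\tfrac12$;
\item the available items have expected size at most $\tfrac13$, so the knapsack OCRS (fed sizes $\tfrac32 s_i$) applies and gives $\tfrac{1}{3+e^{-2}}$.
\end{enumerate}
The paper then \emph{multiplies} these three factors. That multiplication is exactly the correlation issue you raise. It needs the two acceptance events to be positively correlated, as holds for greedy OCRSs via the FKG-based intersection lemma of Feldman, Svensson and Zenklusen. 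This does not follow from the two cited marginal guarantees alone.

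\emph{How to complete your union-bound route.} It can be made rigorous, and it even beats the stated constant, if you pair the matroid OCRS with a knapsack rule whose failure probability vanishes as $b\to 0$.
\begin{enumerate}
\item Make each active $i$ available with probability $bx_i/p_i$. Then $\Pr[i\in R]=bx_i$ independently, and $\E[\sum_{j\in R}s_j]\le bB$.
\item Accept $i$ iff the matroid OCRS run on $bx\in bP$, which is $(b,1-b)$-selectable, accepts it and $i$ fits in the residual capacity.
\item Since $s_i<B/2$ and everything accepted before $i$ lies in $R\setminus\{i\}$, item $i$ fits whenever $\sum_{j\in R\setminus\{i\}}s_j\le B/2$. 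This event is independent of $i\in R$, and by Markov it fails with probability at most $2b$.
\item The union bound gives $\Pr[i\text{ accepted}]\ge b(1-3b)x_i$. At $b=\tfrac16$ this is $x_i/12$.
\item The accepted set is independent, so $\E[\mathrm{ALG}]\ge\frac1{12}\sum_i x_if_i\ge\frac1{12}\E[\mathrm{OPT}]$.
\end{enumerate}
Since $12<4(3+e^{-2})\approx 12.54$, this proves the statement without any correlation argument.

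\emph{Step~1 is sound as written.} Putting the rank constraints on the unconditional probabilities $x_i=\Pr[i\in S^\star]$ is the correct relaxation. The paper's LP instead imposes them on the conditional variables $z_i$, which the prophet need not satisfy: with rank one and two agents of $p_i=\tfrac12$, the prophet's conditional inclusion probabilities sum to $\tfrac32$. What the matroid OCRS actually requires is only $(p_iz_i/2)_i\in P/2$.
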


\newcommand{\optKS}{x^K}
\newcommand{\optMat}{x^M}
\newcommand{\ocrsKS}{O^K}
\newcommand{\ocrsMat}{O^M}
\newcommand{\ocrsInter}{O}

We prove \Cref{prop:ks_ocrs_intersection}, by intersecting two OCRS: for the knapsack problem with additive rewards \cite{jiang2022tight} and for additive rewards and matroid constraints \cite{feldman2016online}, which we denote by $\ocrsKS$ and $\ocrsMat$, respectively.
The guarantees of these OCRS are as follows

\begin{theorem}[\cite{jiang2022tight}]\label{thm:ocrsKS}
    If it holds that
    $\sum_i \sum_{(f_i,c_i)} p_i \cdot (c_i/f_i) \le 2/3$,
    then the probability that agent $i$ is accepted, conditioned on it's being active, is at least $1/\gamma = 1/({3+e^{-2}})$.
\end{theorem}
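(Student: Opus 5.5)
The statement is a direct import of the knapsack OCRS of \cite{jiang2022tight}, so the plan is to reduce to their normalized setting rather than build a new scheme. First I would rescale. Define normalized sizes $\tilde s_i = \frac{3}{2}\cdot \frac{c_i}{f_i}$, so that the hypothesis $\sum_i p_i \cdot (c_i/f_i) \le 2/3$ becomes $\sum_i p_i \tilde s_i \le 1$. This says exactly that the vector $x = (p_i)_i$ lies in the knapsack polytope $\{x \in [0,1]^n : \sum_i x_i \tilde s_i \le 1\}$ with unit capacity. Moreover, a set $S$ satisfies $\sum_{i\in S} c_i/f_i \le 2/3$ if and only if $\sum_{i\in S}\tilde s_i \le 1$, so feasibility is preserved by the rescaling.

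Each agent is active independently with probability $p_i = x_i$, and the arrival order is fixed, known, and (in the theorem of \cite{jiang2022tight}) may even be adversarial. Hence we are precisely in the OCRS model for the unit-capacity knapsack polytope. I would then invoke their scheme directly: it is $\frac{1}{3+e^{-2}}$-selectable, meaning every element $i$ is accepted with probability at least $\frac{1}{3+e^{-2}}$ conditioned on being active. This is exactly the claimed guarantee. The only care needed is that their guarantee holds for an adaptive/online adversary and for arbitrary item sizes in $[0,1]$. Items with $\tilde s_i > 1$ have zero mass in any feasible $x$ beyond what the constraint allows, and in our application (\Cref{cor:ks_to_online_contract}) all items are small anyway, with $\tilde s_i < 1/2$.

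If I had to reprove the selectability rather than cite it, the skeleton would follow the structure of \cite{jiang2022tight}. Split items into large ($\tilde s_i > 1/2$) and small. At most one large item can ever be held, and the large items are handled by a single-item OCRS on the mass $\sum_{\text{large}} x_i \le 2$. Small items are accepted greedily whenever they fit, with an attenuation probability that equalizes their acceptance rates. The central estimate is a lower bound on $\Pr[\text{current load} \le 1-\tilde s_i]$: the expected load is at most $\sum_j x_j \tilde s_j \le 1$, which gives roughly $1/2$ by Markov for small items, and a sharper analysis improves this. Combining the two classes by randomization yields the constant.

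The main obstacle is obtaining the exact constant $3+e^{-2}$ rather than some weaker constant. That requires the refined analysis of the load distribution from \cite{jiang2022tight}, which is tied to an LP/differential-equation characterization of the worst case. For that step I would cite rather than rederive, since for \Cref{prop:ks_ocrs_intersection} only the stated per-item conditional acceptance probability is used.
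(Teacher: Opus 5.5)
Your proposal is correct and matches the paper, which gives no independent proof either: it imports the $\frac{1}{3+e^{-2}}$-selectable knapsack OCRS of \cite{jiang2022tight} and, as its footnote says, feeds it the rescaled sizes $\frac{3}{2}\cdot\frac{c_i}{f_i}$ to turn capacity $2/3$ into unit capacity. One correction: your remark that items with $\tilde s_i>1$ ``have zero mass'' is wrong, since such items can have $p_i>0$ under $\sum_i p_i\tilde s_i\le 1$ yet can never be accepted, so the guarantee needs the standing assumption $\tilde s_i\le 1$; this assumption holds in the application, where every item has $c_i/f_i<1/3$.
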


\begin{theorem}[\cite{feldman2016online}]\label{thm:ocrsMat}
    Let $P$ be the matroid polytope corresponding to a matroid $M$, and let $x \in P/2$.
    Let $R(x) \subseteq[n]$ be a random set where every element appears with probability $x$. Let $\ocrsMat(R(x))$ be the set of agents accepted for a given realization $R(x)$, then this set is feasible and satisfies,
    $$
    \Pr_{R(x)}[i \in \ocrsMat(R(x))] \ge 1/2
    $$
\end{theorem}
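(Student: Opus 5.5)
Since this is the matroid OCRS guarantee of \cite{feldman2016online}, I would reprove it with their \emph{chain-decomposition} (``protected greedy'') scheme. That scheme gives a $(b,1-b)$-selectable OCRS for every $x\in bP$, and we apply it with $b=1/2$.

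\textbf{Step 1: layering.} First I would build a chain $\emptyset=S_0\subsetneq S_1\subsetneq\dots\subsetneq S_k=[n]$ from the principal-partition (densest-set) decomposition of $M$ with respect to $x$. Let $M_j=(M/S_{j-1})|_{S_j\setminus S_{j-1}}$ be the $j$-th layer matroid. The chain is chosen so that, for each $j$, the restriction of $x$ to the layer $N_j=S_j\setminus S_{j-1}$ satisfies
\[
x(U)\le b\cdot \mathrm{rank}_{M_j}(U)\ \text{ for all } U\subseteq N_j,
\]
and, crucially, no proper nonempty sub-structure of the layer is ``denser'' than the layer itself. Concretely, $S_j$ is a maximal set maximizing $x(U)-b\cdot\mathrm{rank}_{M/S_{j-1}}(U)$ over $U\supseteq S_{j-1}$. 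Submodularity of rank makes these maximizers closed under union, so the chain is well defined.

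\textbf{Step 2: the scheme and feasibility.} The online rule is as follows. When $e\in N_j$ arrives active, accept it if $(I\cap N_j)\cup\{e\}$ is independent in $M_j$, where $I$ is the set accepted so far. A union over $j$ of sets $I_j$ with each $I_j$ independent in $M_j$ is independent in $M$, by the standard contraction argument taken layer by layer. Hence the output is always feasible, for every arrival order and every realization of $R(x)$.

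\textbf{Step 3: per-element guarantee.} Fix $e\in N_j$. The accepted set within the layer is a subset of $R(x)\cap N_j$, so $e$ can only be rejected if $e\in\mathrm{span}_{M_j}(R(x)\cap N_j\setminus\{e\})$. This event is independent of whether $e$ itself is active. It therefore suffices to show
\[
\Pr\big[e\in\mathrm{span}_{M_j}(R\setminus\{e\})\big]\le b
\]
for $R$ a product sample with marginals $x|_{N_j}$.

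\textbf{The main obstacle: the spanning bound.} I would prove it by contradiction against the density condition from Step 1. Suppose some $e$ is spanned with probability $>b$. Average the random sets $\mathrm{span}_{M_j}(R)$ and use the submodular-rank inequality $\mathbb{E}[\mathrm{rank}(R)]\le \mathrm{rank}(\cdot)$ together with $x\in bP(M_j)$. This should produce a set $U\subsetneq N_j$, namely the elements spanned with probability $>b$, such that contracting $U$ strictly increases the density $x(\cdot)-b\,\mathrm{rank}$, contradicting the maximality in the choice of $S_j$. The delicate point is turning the pointwise probability bound into a statement about a single deterministic set. If that averaging argument does not close cleanly, I would instead invoke the Lee--Singla/Kleinberg--Weinberg style argument, which gives $1/2$ directly for $x\in P$ and hence a fortiori for $x\in P/2$.

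Either route gives
\[
\Pr\big[i\in O^M(R(x))\big]\ \ge\ (1-b)\Pr[i\in R(x)].
\]
With $b=1/2$ this yields the claimed factor $1/2$, conditioned on the element being active, which is the form in which the bound is used when the two OCRS are intersected in \Cref{prop:ks_ocrs_intersection}.
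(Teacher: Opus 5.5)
There is a genuine gap, and it sits exactly at the step you flag as the main obstacle. Step~2 and the reduction at the start of Step~3 are fine. The problem is that the spanning bound is false for the decomposition of Step~1, so no averaging argument can close it.

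Take the graphic matroid of the graph formed by an edge $e=uv$ together with $m$ internally disjoint paths $u\,w_i\,v$ of length two. Set $x_e=\frac12-\delta$ for a small $\delta>0$, and $x=\frac14$ on every path edge.
\begin{itemize}
\item A set $U$ containing $e$, $k$ whole paths and $\ell$ lone path edges has density $x(U)/r(U)=\frac12-\frac{\delta+\ell/4}{k+\ell+1}$.
\item Sets avoiding $e$ are sparser still.
\end{itemize}
Hence $x\in\frac12P$, and $N$ is the unique densest set. Your chain therefore has the single layer $M_1=M$, and your scheme is plain greedy on $M$, even though every property you ask of a layer holds. (Read literally, Step~1 with the fixed coefficient $b$ even stalls at $S_1=\emptyset$ here, since $x(U)-\frac12 r(U)<0$ for every nonempty $U$.)

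Yet $\Pr[e\in\mathrm{span}(R\setminus\{e\})]=1-(15/16)^m\to 1$. So if $e$ arrives last, it is accepted (given that it is active) with probability only $(15/16)^m$. This is already below $\frac12$ for $m=11$ and tends to $0$.

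The example shows what the chain must really do. An element that $R(x)$ spans too often must be placed in a lower layer, where restriction cuts it off from the structure that spans it. Here the chain $\{e\}\subsetneq N$ works: in $M/e$, each path edge is spanned only by its partner, which happens with probability $\frac14$.

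So the layering has to be built from the spanning events of $R(x)$, not from densities of $x$. The substance of the result the paper cites from \cite{feldman2016online} is proving that such a chain exists: in every contracted layer each element is spanned with probability at most $b$, and the peeling always makes progress. Your proposal does not supply this step.

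Your fallback to Lee--Singla is a different citation rather than an argument. It would imply the statement, but then none of your Steps~1--3 is actually used.
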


Consider the online stochastic knapsack problem with size $B = 2/3$ and WMRF valuation $f$.
Observe that for any realization there is an optimal solution $S$ such that $S$ is an independent set in the matroid $\M$ underlying $f$.
Thus, the following is a relaxation of the prophet benchmark for this problem.

\begin{alignat}{2}
\textrm{Relaxed Prophet = } \text{maximize}\quad   & \sum_{i=1}^n p_i\cdot f_i \cdot x_i                        &\quad& \\
\text{subject to}\quad               
& \sum_{i=1}^n p_i\cdot (c_i/f_i)\cdot x_i \le 2/3 &   & \label{eq:LP_ks_const} \\ 
& \sum_{i \in S} x_i \le \text{rank}_\M(S) &  \forall S \subseteq [n] & \label{eq:LP_mat_const} \\ 
& x_i \ge 0  &      & \\
& x_i \le 1  &      & 
\end{alignat}
Observe that the matroid constraint hold for any realization of active agents, while the knapsack constraint holds only in expectation.
Let $z$ be the optimal solution to the above LP.
Our algorithm operates as follows, given agent $i$ and a realization $(f_i,c_i)$:
\begin{enumerate}
    \item Make agent $i$ available with probability $z(f_i,c_i)/2$.

    \item Accept $i$ if and only if both $\ocrsMat$ and $\ocrsKS$ accept.\footnote{The OCRS $\ocrsKS$ of \cite{jiang2022tight} is defined for a knapsack of size $1$, and it accepts as input the realized weights of every item (agent). To adjust to knapsack of size $2/3$, we just feed it with $(3/2)\cdot (c_i/f_i)$, instead of $c_i/f_i$.}
\end{enumerate}

Observe that for any realization active agents, the vector $z_i/2$ satisfies the matroid constraints by \Cref{eq:LP_mat_const}. Thus, the preconditions of \Cref{thm:ocrsMat} hold and agent $i$ is picked by $\ocrsMat$ with probability at least $1/2$, conditioned on its being active.

To see that \Cref{thm:ocrsKS} holds as well, observe that in our algorithm the probability that agent $i$ is available is
$ \pi_i = p_i \cdot (1/2)\cdot z_i$.
Since every entry in $z$ is between 0 and 1, we clearly have 
$\pi_i \cdot (c_i/f_i) \le 1$. Additionally,
\[
\sum_i \frac{c_i}{f_i}\cdot \pi_i
\le
\frac12 \cdot \sum_i 
p_i \cdot \frac{c_i}{f_i} \cdot z_i 
\le \frac13,
\]
where the last inequality follows since $z$ adheres to the expected knapsack constraint (\ref{eq:LP_ks_const}).

\begin{proof}[Proof of \Cref{prop:ks_ocrs_intersection}]
The expected value of our algorithm (where the expectation is over the randomness of $\ocrsMat$ and $\ocrsKS$ as well):

\begin{eqnarray*}
    \E[ALG] & = & \sum_{i=1}^n p_i \cdot f_i \cdot I[i \text{ is available and accepted by } \ocrsMat \text{ and } \ocrsKS]] \\
& \ge & 
\frac{1}{4\gamma}\sum_{i=1}^n p_i \cdot f_i \cdot z(f_i,c_i)\\
& \ge & 
\frac{\E[OPT]}{4\gamma},
\end{eqnarray*}
where the first inequality follows from theorems \ref{thm:ocrsKS} and \ref{thm:ocrsMat}, and the second follows from linearity of expectation.
\end{proof}

\end{document}